\makeatletter\@addtoreset {equation}{section}\makeatother
\newtheorem{thm}{Theorem}
\newtheorem{cor}{Corollary}
\newtheorem{prop}{Proposition}
\newtheorem{lem}{Lemma}
\newtheorem{mydef}{Definition}
\newtheorem{remark}{Remark}
\DeclareMathOperator{\C}{\mathbb{C}}
\DeclareMathOperator{\R}{\mathbb{R}}
\numberwithin{equation}{section}
\begin{document}

\title[]{The derivative NLS equation: global existence with solitons}

\author{Dmitry E. Pelinovsky}
\address[D. Pelinovsky]{Department of Mathematics, McMaster University, Hamilton, Ontario, Canada, L8S 4K1}
\email{dmpeli@math.mcmaster.ca}

\author{Aaron Saalmann}
\address[A. Saalmann]{Mathematisches Institut, Universit\"{a}t zu K\"{o}ln, 50931 K\"{o}ln, Germany}
\email{asaalman@math.uni-koeln.de}
\thanks{A.S. gratefully acknowledges financial support from the projects ``Quantum Matter and Materials"
and SFB-TRR 191 ``Symplectic Structures in Geometry, Algebra and Dynamics" (Cologne University, Germany).}

\author{Yusuke Shimabukuro}
\address[Y. Shimabukuro]{Institute of Mathematics, Academia Sinica, Taipei, Taiwan, 10617}
\email{shimaby@gate.sinica.edu.tw}

\thanks{}

\subjclass[2010]{35P25, 35Q55, 37K40}
\date{}

\dedicatory{}

\keywords{Derivative nonlinear Schr\"{o}dinger equation, Global existence,
B\"{a}cklund transformation, Inverse scattering transform, Solitons}

\begin{abstract}
We prove the global existence result for the derivative NLS equation
  in the case when the initial datum includes a finite number of solitons.
  This is achieved by an application of the B\"{a}cklund transformation that
  removes a finite number of zeros of the scattering coefficient.
  By means of this transformation, the Riemann--Hilbert problem for meromorphic
  functions can be formulated as the one for analytic functions,
  the solvability of which was obtained recently. A major difficulty in the proof is
  to show invertibility of the B\"{a}cklund transformation acting on weighted Sobolev spaces.
\end{abstract}

\maketitle
%\tableofcontents

\section{Introduction}

We consider the Cauchy problem for the derivative nonlinear Schr\"{o}dinger (DNLS) equation
\begin{equation}\label{dnls}
\left\{ \begin{array}{l} iu_t+u_{xx} + i(|u|^2 u)_x = 0, \quad t \in \R,\\
u|_{t=0} = u_0, \end{array} \right.
\end{equation}
where the subscripts denote partial derivatives and $u_0$ is defined in $H^2(\mathbb{R}) \cap H^{1,1}(\mathbb{R})$.
Here $H^m(\mathbb{R})$ denotes the Sobolev space of distributions with square integrable derivatives up to the order $m$,
$H^{1,1}(\mathbb{R})$ denotes the weighted Sobolev space given by
\begin{equation*}
H^{1,1}(\mathbb{R}) = \left\{ u \in L^{2,1}(\mathbb{R}), \quad \partial_x u \in L^{2,1}(\mathbb{R}) \right\},
\end{equation*}
and the weighted space $L^{2,1}(\mathbb{R})$ is equipped with the norm
$$
\| u \|_{L^{2,1}} = \left( \int_{\mathbb{R}} \langle x \rangle^{2} |u|^2 dx \right)^{1/2}, \quad
\langle x \rangle := (1 + x^2)^{1/2}.
$$

Global well-posedness of the Cauchy problem (\ref{dnls}) for $u_0$ in $H^2(\mathbb{R})$
was shown for initial datum with small $H^1(\mathbb{R})$ norm in the pioneer works of
Tsutsumi \& Fukuda \cite{TF1,TF2}. Hayashi \cite{Hayashi} and
Hayashi \& Ozawa \cite{H-O-1} extended the global well-posedness for $u_0$ in $H^1(\R)$
with small $L^2(\mathbb{R})$ norm. The critical $L^2(\mathbb{R})$ norm corresponds
to the stationary solitary waves of the DNLS equation. The question of whether global solutions
for initial datum with large $L^2(\mathbb{R})$ norm exist in the Cauchy problem
(\ref{dnls}) was addressed very recently by using different analytical and numerical methods.

Wu \cite{W,W-2} combined the mass, momentum and energy conservation with variational arguments
and pushed up the upper bound on the $L^2(\mathbb{R})$-norm of the initial datum required
for existence of global solutions. By adding a new result on orbital stability of
algebraically decaying solitons \cite{WuKwon}, this upper bound is pushed up even higher,
but still within the range of the $L^2(\mathbb{R})$ norm of the travelling
solitary waves of the DNLS equation.

Orbital stability of one-soliton solutions was shown long ago by Guo \& Wu \cite{GuoWu1995} and
Colin \& Ohta \cite{CO06}.
More recently, the orbital stability of multi-soliton solutions was obtained in the energy space,
under suitable assumptions on the speeds and frequencies of the single solitons \cite{LeCoz}.
Variational characterization of the DNLS solitary waves and further improvements of the global existence
near a single solitary wave were developed in \cite{MWX1}. Orbital stability of a sum
of two solitary waves was obtained from the variational characterization in \cite{MWX2} (see also
\cite{FHT,MWX3} for similar analysis of the generalized DNLS equation).

Numerical simulations of the DNLS equation (\ref{dnls}) indicate no blow-up phenomenon
for initial data in $H^1(\mathbb{R})$ with any large $L^2(\mathbb{R})$ norm \cite{Sulem1,Sulem2}.
The same conclusion is confirmed by means of the asymptotic analysis of the self-similar blow-up
solutions \cite{Sulem3}.

Since the DNLS equation (\ref{dnls}) is formally solvable with the inverse scattering transform
method \cite{KN78}, one can look at other analytical
tools to deal with the same question.
Lipschitz continuity of the direct and inverse scattering transform in appropriate
function spaces was established very recently \cite{Perry,Pelinovsky-Shimabukuro}
and this result suggests global well-posedness of the Cauchy problem
(\ref{dnls}) without sharp constraints on the $L^2(\mathbb{R})$ norm of the initial datum.
The solvability of the inverse scattering transform was achieved
by using the pioneer results of Deift \& Zhou \cite{D-Z-1} and Zhou \cite{Z}
but extended from the Zakharov--Shabat (ZS) to the Kaup--Newell (KN) spectral problem.
Simplifying assumptions were made in \cite{Perry,Pelinovsky-Shimabukuro}
to exclude eigenvalues and resonances in the KN spectral problem.
Excluding resonances is a natural condition to define so-called generic initial data $u_0$.
On the other hand,
eigenvalues are usually excluded if the initial datum satisfies the small-norm constraint,
and it is not obvious if there exist the initial datum $u_0$
with a large $L^2(\mathbb{R})$ norm that yield no eigenvalues in the KN spectral problem.

The goal of the present paper is to extend the result from \cite{Pelinovsky-Shimabukuro}
to the case of a finite number of eigenvalues in the KN spectral problem. Working
with the B\"{a}cklund transformation, similarly to the work \cite{C-P,D-J} for the ZS spectral
problem, we are able to apply the inverse scattering transform technique to the initial datum
with a finite number of solitons. By using the solvability
result from \cite{Pelinovsky-Shimabukuro} and the invertibility of the B\"{a}cklund transformation proved here,
we are able to extend the global well-posedness result for the Cauchy problem (\ref{dnls})
to arbitrarily large initial data in $H^2(\mathbb{R}) \cap H^{1,1}(\mathbb{R})$.

The main algebraic tool used in this paper is definitely not new. Imai \cite{Imai}
used the multi-fold B\"{a}cklund transformation to obtain multi-solitons and quasi-periodic solutions
of the DNLS equation. Steudel \cite{Steudel} gave a very nice overview of the construction of
the multi-solitons with the same technique. More recent treatments of
the B\"{a}cklund transformations for the DNLS equation can be found in
further works \cite{Guo-Ling-Liu,Xu-He-Wang}. What makes this present paper new
is the way how the B\"{a}cklund transformation can be applied in
the rigorous treatment of the inverse scattering transform
and the global well-posedness problem.

The DNLS equation appears to be a compatibility condition for $C^2$
solutions to the KN spectral problem
\begin{equation} \label{laxeq1}
\partial_x \psi = \left[ -i\lambda^2\sigma_3+\lambda Q(u) \right] \psi
\end{equation}
and the time-evolution problem
\begin{equation}\label{laxeq2}
\partial_t \psi = \left[ - 2i\lambda^4\sigma_3 + 2 \lambda^3 Q(u) + i \lambda^2 |u|^2 \sigma_3
- \lambda |u|^2 Q(u) + i \lambda \sigma_3 Q(u_x) \right] \psi,
\end{equation}
where $\lambda \in \mathbb{C}$ is the $(t,x)$-independent spectral parameter,
$\psi(t,x)$ is the $\mathbb{C}^2$ vector for the wave function, and
$Q(u)$ is the $(t,x)$-dependent matrix potential given by
\begin{equation}
\label{lax-matrices}
Q(u) = \left[\begin{matrix}0&u\\ -\overline{u} & 0\end{matrix}\right].
\end{equation}
The Pauli matrices that include $\sigma_3$ are given by
\begin{equation}
\label{Pauli}
\sigma_1 = \left[ \begin{matrix} 0 & 1 \\ 1 & 0 \end{matrix} \right], \quad
\sigma_2 = \left[\begin{matrix} 0&i\\ -i & 0\end{matrix}\right], \quad
\sigma_3 = \left[\begin{matrix} 1&0\\ 0 & -1\end{matrix}\right].
\end{equation}
A long but standard computation
shows that the compatibility condition $\partial_t\partial_x \psi =\partial_x\partial_t \psi$
for $C^2$ solutions of system (\ref{laxeq1}) and (\ref{laxeq2})
is equivalent to the DNLS equation $i u_t + u_{xx} + i (|u|^2 u)_x = 0$ for classical solutions $u$.

The following theorem presents the main result.

\begin{thm}
For every $u_0 \in H^2(\mathbb{R}) \cap H^{1,1}(\mathbb{R})$ such that the
KN spectral problem (\ref{laxeq1}) admits no resonances in the sense of Definition \ref{definition-resonance}
and only simple eigenvalues in the sense of Definition \ref{definition-eigenvalue},
there exists a unique global solution $u(t,\cdot) \in H^2(\mathbb{R}) \cap H^{1,1}(\mathbb{R})$
of the Cauchy problem \eqref{dnls} for every $t \in \mathbb{R}$. %Furthermore, the map
%$$
%H^2(\mathbb{R}) \cap H^{1,1}(\mathbb{R}) \ni u_0 \mapsto u \in C(\mathbb{R};H^2(\mathbb{R}) \cap H^{1,1}(\mathbb{R}))
%$$
%is Lipschitz continuous.
\label{theorem-main}
\end{thm}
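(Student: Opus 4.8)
The plan is to reduce the Cauchy problem with a finite number of eigenvalues to the eigenvalue-free case treated in \cite{Pelinovsky-Shimabukuro}, where global well-posedness in $H^2(\mathbb{R}) \cap H^{1,1}(\mathbb{R})$ is already known for generic initial data without eigenvalues. First I would recall the direct scattering map for the KN spectral problem \eqref{laxeq1}: to $u_0 \in H^2 \cap H^{1,1}$ satisfying the hypotheses one associates the reflection coefficient together with the finitely many simple eigenvalues $\lambda_1,\dots,\lambda_N$ in the upper half plane (and their symmetric partners) and the corresponding norming constants. The scattering coefficient $a(\lambda)$ is then meromorphic with precisely these zeros, and the associated Riemann--Hilbert problem is one for meromorphic functions with residue conditions at the $\lambda_k$.

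The central step is the B\"acklund transformation. I would set up a single-step B\"acklund transformation $u \mapsto \tilde u$ that removes one conjugate pair of eigenvalues: using a suitable fundamental solution of \eqref{laxeq1} at $\lambda = \lambda_1$ one builds a Darboux-type dressing matrix, and the transformed potential $\tilde u$ solves the DNLS equation whenever $u$ does, with scattering data identical to that of $u$ except that the zero of $a$ at $\lambda_1$ is deleted and the reflection coefficient is multiplied by an explicit unimodular rational factor. Iterating $N$ times produces a potential $\tilde u_0$ whose KN spectral problem has no eigenvalues and the same absence of resonances, hence falls under \cite{Pelinovsky-Shimabukuro}; this yields a unique global solution $\tilde u(t,\cdot) \in H^2 \cap H^{1,1}$, and applying the inverse (eigenvalue-adding) B\"acklund transformation with time-evolved data recovers a candidate solution $u(t,\cdot)$ of \eqref{dnls} with the prescribed initial datum.

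The hard part, as the abstract itself flags, is showing that the B\"acklund transformation and its inverse act as well-defined, continuous, and \emph{invertible} maps on the weighted Sobolev space $H^2(\mathbb{R}) \cap H^{1,1}(\mathbb{R})$. The dressing matrix is built from squared eigenfunctions that decay exponentially in $x$ (since $\lambda_1$ is a genuine eigenvalue), so the algebraic formula for $\tilde u$ in terms of $u$ is a nonlinear but exponentially localized perturbation; the work is to prove that this perturbation preserves both the $H^2$ regularity and the polynomial $L^{2,1}$-type weights, and that the composition of the forward transformation with the backward one is the identity on the relevant open set of potentials. I would carry this out by: (i) deriving closed expressions for $\tilde u$ and $\tilde u_x$, (ii) establishing pointwise and weighted estimates on the Jost functions at $\lambda = \lambda_1$ using the Volterra integral equations for \eqref{laxeq1}, (iii) bounding $\|\tilde u\|_{H^2 \cap H^{1,1}}$ in terms of $\|u\|_{H^2 \cap H^{1,1}}$ and the spectral data, and (iv) verifying the inversion identity directly from the structure of the dressing matrices (a product of $N$ elementary factors whose inverses reinstall the removed eigenvalues).

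Finally I would assemble the pieces: uniqueness follows from uniqueness in the eigenvalue-free case together with injectivity of the B\"acklund transformation; the solution map $t \mapsto u(t,\cdot)$ inherits continuity into $H^2 \cap H^{1,1}$ from the continuity of $t \mapsto \tilde u(t,\cdot)$ and the continuity of the inverse transformation; and global existence for all $t \in \mathbb{R}$ holds because the removed spectral data evolve explicitly and remain admissible for all times, so the construction never degenerates. I expect step (ii)--(iii)—propagating the weighted Sobolev bounds through the explicit B\"acklund formulas—to be the main technical obstacle, since one must control derivatives of Jost functions and reciprocals of quantities that could a priori vanish, and rule this out uniformly.
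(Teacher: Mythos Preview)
Your proposal is correct and follows essentially the same route as the paper: remove the simple eigenvalues one at a time by a B\"acklund (Darboux-type) transformation, land in $Z_0$ where the global result of \cite{Pelinovsky-Shimabukuro} applies, and then invert the transformation using the new Jost functions to recover the original solution with a uniform $H^2\cap H^{1,1}$ bound. You have also correctly anticipated the main technical difficulties---nonvanishing of the denominator $d_{\lambda_1}(\eta,\eta)$, weighted estimates on the Jost functions via the Volterra equations, and the explicit identification of the inverse transformation through the transformed Jost functions---which are precisely the content of the paper's Lemmas on the transformation of potentials, Jost functions, and scattering coefficients.
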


The organization of the paper is as follows. Section 2 contains the review of
Jost functions and scattering coefficients from \cite{Pelinovsky-Shimabukuro}.
Section 3 presents the B\"{a}cklund transformation for the KN spectral problem
in the form suitable for our analysis. Section 4 adds the time evolution
for the B\"{a}cklund transformation according to the DNLS equation.
Section 5 gives an example of the B\"{a}cklund transformation connecting the one-soliton
and zero-soliton solutions. Section 6 completes the proof of Theorem \ref{theorem-main}.
Appendix A lists useful properties of operators used in the definition
of the B\"{a}cklund transformation. Appendix B gives a technical result
on the regularity of Jost functions for the KN spectral problem.

\section{Review of the direct scattering transform}

We introduce Jost functions for the KN spectral problem (\ref{laxeq1})
under some conditions on the potential $u$. In this section, we freeze the time variable $t$ and drop
it from the argument list of the dependent functions. The following two propositions were proved
in the previous work (see Lemma 1, Corollary 2, and Corollary 3 in \cite{Pelinovsky-Shimabukuro}). Here
$e_{1,2}$ are standard basis vectors in $\mathbb{R}^2$.

\begin{prop}\label{KN-existence}
Let $u \in L^1(\mathbb{R}) \cap L^{\infty}(\mathbb{R})$ and $\partial_x u \in L^1(\mathbb{R})$.
For every $\lambda \in \mathbb{R} \cup i \mathbb{R}$, there exist unique solutions
$\varphi_{\pm}(x;\lambda) e^{-i\lambda^2x}$ and $\phi_{\pm}(x;\lambda) e^{i\lambda^2x}$
to the KN spectral problem \eqref{laxeq1} with
 $\varphi_{\pm}(\cdot;\lambda) \in L^{\infty}(\mathbb{R})$ and
$\phi_{\pm}(\cdot;\lambda) \in L^{\infty}(\mathbb{R})$ such that
\begin{equation}\label{jost-infinity-original}
\left. \begin{array}{l}
\varphi_{\pm}(x;\lambda) \to e_1, \\
\phi_{\pm}(x;\lambda) \to e_2, \end{array} \right\} \quad \mbox{\rm as} \quad x \to \pm \infty.
\end{equation}
\end{prop}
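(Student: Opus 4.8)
The plan is to realize each Jost solution as the fixed point of a Volterra integral operator and to solve the latter by successive approximations. Take $\varphi_-$ for definiteness. Writing $\psi = \varphi_- e^{-i\lambda^2 x}$ in \eqref{laxeq1} and using \eqref{lax-matrices}, one finds that $\varphi_-$ must solve
\begin{equation*}
\partial_x \varphi_- = i\lambda^2 (I-\sigma_3)\,\varphi_- + \lambda\, Q(u)\,\varphi_-, \qquad I-\sigma_3 = \mathrm{diag}(0,2),
\end{equation*}
so that the homogeneous part has the explicit propagator $E(x,y;\lambda) := \mathrm{diag}\bigl(1, e^{2i\lambda^2(x-y)}\bigr)$. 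This is the one and only place where the hypothesis $\lambda \in \mathbb{R}\cup i\mathbb{R}$ enters: then $\lambda^2 \in \mathbb{R}$, hence $\|E(x,y;\lambda)\| = 1$ for all $x,y\in\mathbb{R}$, with no sign restriction on $x-y$. Imposing $\varphi_-(x;\lambda)\to e_1$ as $x\to -\infty$, I would replace the ODE by the integral equation
\begin{equation*}
\varphi_-(x;\lambda) = e_1 + \lambda\int_{-\infty}^x E(x,y;\lambda)\, Q(u(y))\, \varphi_-(y;\lambda)\, dy,
\end{equation*}
and set up the three analogous equations for $\varphi_+$ (integrating from $+\infty$) and for $\phi_\pm$ (with $e_1,e_2$ and the two diagonal entries of $E$ interchanged).

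Next I would solve this in $C_b\bigl((-\infty,c];\mathbb{C}^2\bigr)$ by iteration from $\varphi_-^{(0)}\equiv e_1$. Because the integral operator is Volterra — its kernel is supported in $\{y\le x\}$ — and $\|E\|\le 1$, the standard simplex estimate gives
\begin{equation*}
\bigl\|\varphi_-^{(n)}(x) - \varphi_-^{(n-1)}(x)\bigr\| \le \frac{|\lambda|^n}{n!}\Bigl(\int_{-\infty}^x |u(y)|\,dy\Bigr)^n \le \frac{1}{n!}\bigl(|\lambda|\,\|u\|_{L^1}\bigr)^n,
\end{equation*}
which is summable since $u\in L^1(\mathbb{R})$. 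Hence the Neumann series converges absolutely and uniformly on each half-line $(-\infty,c]$, producing a bounded continuous solution with $\|\varphi_-(\cdot;\lambda)\|_{L^\infty}\le e^{|\lambda|\|u\|_{L^1}}$; it is absolutely continuous in $x$ and solves \eqref{laxeq1} (classically wherever $u$ is continuous), because $u\in L^1\cap L^\infty$ makes the integrand locally integrable. Uniqueness among bounded solutions follows by applying the same estimate to the difference of two solutions of the homogeneous Volterra equation, which must therefore vanish identically. Finally, the normalization $\varphi_-(x;\lambda)\to e_1$ as $x\to-\infty$ is read directly off the integral equation, since the integral term is bounded by $|\lambda|\,\|\varphi_-\|_{L^\infty}\int_{-\infty}^x |u(y)|\,dy \to 0$ by absolute continuity of the Lebesgue integral; the limits for $\varphi_+$ and $\phi_\pm$ are obtained in exactly the same way.

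The argument is entirely standard, so I do not expect a serious obstacle; the one point that deserves care is that the successive approximations must converge in $C_b$ on the \emph{whole} half-line at once — not merely locally — and it is precisely the factorial gain from the Volterra structure that delivers this. The oscillatory character of the propagator when $\lambda$ is real is harmless, as only $|e^{2i\lambda^2(x-y)}|=1$ is ever used, and no analytic continuation in $\lambda$ off $\mathbb{R}\cup i\mathbb{R}$ is attempted here. The extra hypothesis $\partial_x u\in L^1(\mathbb{R})$ is not needed for the statement as phrased; it is retained because the cited results in \cite{Pelinovsky-Shimabukuro} simultaneously record further regularity and analyticity properties of the Jost functions that do use it.
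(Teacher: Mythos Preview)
Your argument is correct and follows the standard Volterra--iteration construction of Jost solutions; the paper does not give its own proof of this proposition but defers to \cite{Pelinovsky-Shimabukuro}, and the same integral equation you write down reappears in Appendix~\ref{appendix-Jost}. Your remark that the hypothesis $\partial_x u\in L^1(\mathbb{R})$ is not needed for the bare existence statement at fixed $\lambda$ is also correct: in \cite{Pelinovsky-Shimabukuro} that hypothesis enters through a transformation of the system designed to produce bounds uniform in $\lambda$, which is the content of Proposition~\ref{prop-Jost-KN} rather than of Proposition~\ref{KN-existence}.
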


\begin{prop}
\label{prop-Jost-KN}
Under the same assumption on $u$ as in Proposition \ref{KN-existence}, for every $x \in \mathbb{R}$, the Jost functions
$\varphi_{-}(x;\cdot)$ and $\phi_{+}(x;\cdot)$ are analytic in the first and third quadrant of the $\lambda$ plane
(where ${\rm Im}(\lambda^2) > 0$), whereas the functions
$\varphi_{+}(x;\cdot)$ and $\phi_{-}(x;\cdot)$ are analytic in the second and fourth quadrant of the $\lambda$ plane
(where ${\rm Im}(\lambda^2) < 0$). Furthermore, for every $\lambda$ with ${\rm Im}(\lambda^2) > 0$ and for all
$u$ satisfying $\|u\|_{L^1 \cap L^{\infty}} + \| \partial_x u \|_{L^1} \leq M$ there exists a constant $C_M$ which does not depend on $u$, such that
\begin{equation}\label{KN L^infty estimate}
  \|\varphi_{-}(\cdot;\lambda) \|_{L^{\infty}} + \|\phi_{+}(\cdot;\lambda) \|_{L^{\infty}}\leq C_M.
\end{equation}
\end{prop}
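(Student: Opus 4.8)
The plan is to set up the Volterra integral equations satisfied by the normalized Jost functions and then extract analyticity and the uniform bound from a Neumann series. First I would write, say, $m_-(x;\lambda) := \varphi_-(x;\lambda)$, which by \eqref{jost-infinity-original} and \eqref{laxeq1} solves the integral equation
\begin{equation*}
m_-(x;\lambda) = e_1 + \lambda \int_{-\infty}^x e^{-i\lambda^2(x-y)\,\mathrm{ad}\,\sigma_3}\bigl[ Q(u(y)) m_-(y;\lambda) \bigr]\, dy ,
\end{equation*}
with the analogous equation for $\phi_+$ using the base point $+\infty$ and projector onto $e_2$. The kernel involves only the scalar exponential $e^{-2i\lambda^2(x-y)}$ in the off-diagonal entries; since we integrate over $y < x$, this factor is bounded by $1$ precisely when $\mathrm{Im}(\lambda^2) \ge 0$, and it is analytic in $\lambda$ on the open set $\mathrm{Im}(\lambda^2) > 0$, which is exactly the first-and-third-quadrant region stated. (For $\phi_+$ one integrates over $y > x$, and the same half-plane in $\lambda^2$ makes the conjugate exponential bounded.) This pins down the domains of analyticity; the two remaining Jost functions $\varphi_+, \phi_-$ are handled symmetrically by reflecting $x \to -x$, giving $\mathrm{Im}(\lambda^2) < 0$.

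Next I would run the standard Neumann iteration $m_- = \sum_{n \ge 0} m_-^{(n)}$ with $m_-^{(0)} = e_1$ and $m_-^{(n+1)}$ obtained by applying the integral operator. Using $|e^{-2i\lambda^2(x-y)}| \le 1$ on the relevant region and $\|Q(u(y))\|_{\mathrm{op}} = |u(y)|$, a Gr\"onwall/factorial bound gives
\begin{equation*}
\| m_-^{(n)}(\cdot;\lambda) \|_{L^\infty} \le \frac{\bigl( |\lambda|\, \|u\|_{L^1} \bigr)^n}{n!},
\end{equation*}
so the series converges uniformly in $x$ (and locally uniformly in $\lambda$), each term is analytic in $\lambda$ by Morera/Fubini, hence so is the sum, and $\|m_-(\cdot;\lambda)\|_{L^\infty} \le e^{|\lambda| \|u\|_{L^1}}$. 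The difficulty — and the reason the hypotheses include $\partial_x u \in L^1$ and the claimed constant $C_M$ is $u$-independent on the set $\|u\|_{L^1\cap L^\infty} + \|\partial_x u\|_{L^1} \le M$ — is that the crude bound $e^{|\lambda|\|u\|_{L^1}}$ degenerates as $|\lambda| \to \infty$ along rays inside the quadrant. To get a bound uniform over all such $\lambda$ one must integrate by parts once in the $y$-integral: the factor $\lambda e^{-2i\lambda^2(x-y)}$ is, up to a constant, $\partial_y$ of $\lambda^{-1} e^{-2i\lambda^2(x-y)}$, so one transfers the derivative onto $Q(u) m_-$, producing terms with $Q(u_x)$ and $Q(u) \partial_x m_-$ and a gained factor $\lambda^{-1}$. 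Iterating this once-integrated-by-parts equation (and controlling $\partial_x m_-$ through the spectral equation itself, which expresses $\partial_x m_-$ algebraically in terms of $m_-$) yields the required $\lambda$-independent bound $\|m_-\|_{L^\infty} \le C_M$ for $|\lambda|$ bounded away from $0$; for $|\lambda|$ in a compact set near the origin the elementary bound above already suffices, and the two regimes patch together.

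I expect the integration-by-parts step to be the main obstacle: one must check that the boundary terms at $y = -\infty$ vanish (they do, since $u \in L^1$ forces decay along the relevant subsequence and the exponential is bounded), and one must set up the recursion so that the $\partial_x m_-^{(n)}$ terms do not spoil the factorial decay — this is where $\|\partial_x u\|_{L^1} \le M$ enters and where the constant $C_M$ is finally seen to depend only on $M$. The analyticity claims, by contrast, are routine once the series representation is in hand. I would also remark that the precise location of the quadrants is dictated entirely by the sign of $\mathrm{Im}(\lambda^2)$ in the exponential kernel, matching the statement, and that the Wronskian-type relations between the four Jost functions (not needed here, but standard) are consistent with this picture.
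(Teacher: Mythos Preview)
The paper does not supply its own proof of this proposition; it quotes Lemma~1 and Corollaries~2--3 of \cite{Pelinovsky-Shimabukuro}. Your strategy---Volterra equation, Neumann series for analyticity, then an integration by parts to obtain a $\lambda$-uniform $L^\infty$ bound---is correct and is essentially the argument given there.

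Two comments on the details. First, after you integrate by parts in the second component and substitute $\partial_y m_{-,1} = \lambda u\, m_{-,2}$ back in, the term coming from $\bar u\,\partial_y m_{-,1}$ carries a kernel of size $|u|^2$ with \emph{no} residual $\lambda^{-1}$; only the boundary term and the $\bar u_y$ term actually gain a factor $\lambda^{-1}$. That is precisely why $\|u\|_{L^\infty}$ (and hence $\||u|^2\|_{L^1}\le \|u\|_{L^1}\|u\|_{L^\infty}$) enters the hypotheses, so your informal description ``a gained factor $\lambda^{-1}$'' should be read as: after one integration by parts the rewritten system has a $\lambda$-independent $L^1$ kernel plus $O(\lambda^{-1})$ corrections, and the Neumann iteration on \emph{that} system gives the uniform $C_M$. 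Second, in \cite{Pelinovsky-Shimabukuro} the same cancellation is packaged as a change of unknowns rather than an integration by parts: one passes to the Zakharov--Shabat form recalled around \eqref{ZS} in the present paper, whose potential $\widetilde Q(u)$, built from $u$, $u_x$, and $|u|^2 u$, carries no $\lambda$; the standard ZS Neumann estimate then yields the $\lambda$-uniform bound directly, with no separate small-$|\lambda|$/large-$|\lambda|$ patching. Your direct integration-by-parts route and the change-of-variables route are the same computation written in two different orders.
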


The set of Jost functions $[\varphi_-(x;\lambda),\psi_-(x;\lambda)] e^{-i\lambda^2\sigma_3x}$
at the left infinity is linearly dependent from the set of Jost functions
$[\varphi_+(x;\lambda),\psi_+(x;\lambda)] e^{-i\lambda^2\sigma_3x}$
at the right infinity. Therefore,
for every $\lambda \in \mathbb{R} \cup i \mathbb{R}$ there exists
the transfer matrix $S(\lambda)$ that connects the two sets as follows:
\begin{equation} \label{linear-comb}
[\varphi_-(x;\lambda),\phi_-(x;\lambda)] e^{-i\lambda^2\sigma_3x} =
[\varphi_+(x;\lambda),\phi_+(x;\lambda)] e^{-i\lambda^2\sigma_3x} S(\lambda),
\end{equation}
where $x \in \mathbb{R}$ is arbitrary. Thanks to the symmetry relations
\begin{equation}
\label{symmetry-relation}
\phi_{\pm}(x;\lambda)=\sigma_1\sigma_3\overline{\varphi_{\pm}(x;\overline{\lambda})},
\end{equation}
the transfer matrix $S$ has the structure
\begin{equation}
\label{transfer-matrix}
S(\lambda) = \left[\begin{matrix} a(\lambda) &  -\overline{b(\overline{\lambda})}\\b(\lambda) & \overline{a(\overline{\lambda})} \end{matrix}\right],
\end{equation}
defined by the two scattering coefficients $a(\lambda)$ and $b(\lambda)$. Since the determinant
of the transfer matrix $S(\lambda)$ is equal to unity for every $\lambda \in \mathbb{R} \cup i \mathbb{R}$,
we have the following relation between $a(\lambda)$ and $b(\lambda)$:
\begin{eqnarray}
\label{W-a-b-4}
a(\lambda)\overline{a(\overline{\lambda})} + b(\lambda)\overline{b(\overline{\lambda})} = 1, \quad \lambda \in \mathbb{R} \cup i \mathbb{R}.
\end{eqnarray}
Furthermore, scattering coefficients $a(\lambda)$ and $b(\lambda)$ can be written in terms of Jost functions
by using the Wronskian determinant $W(\eta,\xi)=\eta_1\xi_2-\xi_1\eta_2$ defined for $\eta,\xi \in \mathbb{C}^2$:
\begin{subequations}\label{a-b}
  \begin{eqnarray}
  % \nonumber to remove numbering (before each equation)
    a(\lambda) &=& W(\varphi_-(x;\lambda) e^{-i \lambda^2 x},\phi_+(x;\lambda) e^{+i \lambda^2 x}), \label{a} \\
    b(\lambda) &=& W(\varphi_+(x;\lambda) e^{-i \lambda^2 x},\varphi_-(x;\lambda) e^{-i \lambda^2 x}). \label{b}
  \end{eqnarray}
\end{subequations}
The coefficient $b(\lambda)$ is expressed by the Jost functions whose analytic domains in the $\lambda$ plane are disjoint.
As a result, $b(\lambda)$ cannot be continued into the complex plane of $\lambda$. On the other hand, $a(\lambda)$
can be continued analytically into the complex plane of $\lambda$,
according to the following result (see Lemma 4 in \cite{Pelinovsky-Shimabukuro}).

\begin{prop}
\label{prop-scattering}
Under the same assumption on $u$ as in Proposition \ref{KN-existence},
the scattering coefficient $a(\lambda)$ can be continued analytically into $\{\lambda \in \C: \mbox{Im}(\lambda^2)>0\}$ with the limit
$$
a_{\infty} = \lim_{|\lambda|\rightarrow \infty}a(\lambda)=e^{\frac{1}{2i}\|u\|^2_{L^2}}.
$$
Similarly, $\overline{a(\overline{\lambda})}$ is continued
analytically into $\{\lambda \in \C: \mbox{Im}(\lambda^2)<0\}$ with the limit
$$
\overline{a}_{\infty} = \lim_{|\lambda|\rightarrow \infty}\overline{a(\overline{\lambda})}=e^{-\frac{1}{2i}\|u\|^2_{L^2}}.
$$
\end{prop}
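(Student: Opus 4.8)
The plan is to read both claims off the Wronskian representation \eqref{a} of $a(\lambda)$, using the analyticity and the uniform bound on the Jost functions supplied by Propositions \ref{KN-existence}--\ref{prop-Jost-KN}. For the analytic continuation, observe that each of the two vectors paired in \eqref{a}, namely $\varphi_-(x;\lambda)e^{-i\lambda^2x}$ and $\phi_+(x;\lambda)e^{i\lambda^2x}$, solves the linear system \eqref{laxeq1}, whose coefficient matrix $-i\lambda^2\sigma_3+\lambda Q(u)$ is traceless; hence by Abel's identity the Wronskian $W(\varphi_-e^{-i\lambda^2x},\phi_+e^{i\lambda^2x})$ does not depend on $x$, and we may evaluate it at any fixed $x_0$. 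By Proposition \ref{prop-Jost-KN} the maps $\lambda\mapsto\varphi_-(x_0;\lambda)$ and $\lambda\mapsto\phi_+(x_0;\lambda)$ are analytic on the open set $\{\lambda\in\mathbb{C}:\Imag(\lambda^2)>0\}$ (the first and third quadrants), so the right-hand side of \eqref{a} is the desired analytic continuation of $a$ to that set. Taking $x_0\to+\infty$ and using $\phi_+\to e_2$ also yields the convenient formula $a(\lambda)=\lim_{x\to+\infty}[\varphi_-(x;\lambda)]_1$, the limit being attained with local uniformity thanks to \eqref{KN L^infty estimate}.

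For the limiting value I would extract the large-$\lambda$ behaviour of $\varphi_-$. Writing $\varphi_-=(p,q)^\top$, the system \eqref{laxeq1} with the normalization at $x\to-\infty$ reads $p_x=\lambda u q$, $q_x=2i\lambda^2 q-\lambda\bar u p$, $(p,q)\to(1,0)$; eliminating $q$ (the integrating factor $e^{-2i\lambda^2x}$ decays at $-\infty$ precisely when $\Imag(\lambda^2)>0$) produces the closed Volterra equation
\[
p(x)=1-\lambda^2\int_{-\infty}^x u(y)\int_{-\infty}^y e^{2i\lambda^2(y-z)}\,\bar u(z)\,p(z)\,dz\,dy .
\]
The inner kernel satisfies $-\lambda^2 e^{2i\lambda^2 s}\mathbf{1}_{\{s>0\}}\to\tfrac{1}{2i}\,\delta_0$ as $|\lambda|\to\infty$ within $\{\Imag(\lambda^2)\ge 0\}$ (one integration by parts gives a boundary term $\tfrac1{2i}(\cdot)|_{s=0}$ plus a remainder that vanishes by dominated convergence away from the axes and by the Riemann--Lebesgue lemma along them). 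Hence $p(\cdot;\lambda)$ converges, uniformly in $x\in\mathbb{R}$, to the solution of $p_x=\tfrac{1}{2i}|u|^2p$ with $p(-\infty)=1$, that is $p(x)\to\exp\!\big(\tfrac{1}{2i}\int_{-\infty}^x|u|^2\,dy\big)$; letting $x\to+\infty$ gives $a_\infty=e^{\frac{1}{2i}\|u\|_{L^2}^2}$. The same computation admits a cleaner bookkeeping through the Riccati variable $r=q/p$, which solves $r_x=2i\lambda^2r-\lambda\bar u-\lambda u r^2$ and yields $\ln a(\lambda)=\lambda\int_{\mathbb{R}}u\,r\,dx$; the expansion $r(x;\lambda)=\tfrac{\bar u(x)}{2i\lambda}+O(|\lambda|^{-2})$ then gives $\lambda u r\to\tfrac1{2i}|u|^2$ and, since $|u|^2\in L^1$, $\ln a(\lambda)\to\tfrac1{2i}\|u\|_{L^2}^2$ by dominated convergence.

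The main obstacle is making the large-$\lambda$ limit uniform over the entire region $\{\Imag(\lambda^2)>0\}$, and in particular as $\lambda\to\infty$ along or near the axes $\mathbb{R}\cup i\mathbb{R}$, where $\Imag(\lambda^2)=0$ so the exponential kernel merely oscillates instead of decaying; there the approximate-identity argument has to be replaced by integration by parts, which is exactly the point at which the hypotheses $u\in L^1\cap L^\infty$ and $\partial_x u\in L^1$ are needed (to make sense of $\partial_x(\bar u\,p)$), together with the uniform-in-$\lambda$ bound \eqref{KN L^infty estimate} as the a priori estimate that closes the Volterra iteration. The companion statement for $\overline{a(\overline{\lambda})}$ then follows by conjugate symmetry: the map $\lambda\mapsto\overline{\lambda}$ carries $\{\Imag(\lambda^2)<0\}$ onto $\{\Imag(\lambda^2)>0\}$, so $\overline{a(\overline{\lambda})}$ is analytic on $\{\Imag(\lambda^2)<0\}$ with limit $\overline{a_\infty}=e^{-\frac{1}{2i}\|u\|_{L^2}^2}$; alternatively one repeats the argument verbatim for the pair $\varphi_+,\phi_-$ and invokes the symmetry relation \eqref{symmetry-relation}.
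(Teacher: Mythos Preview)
The paper does not itself prove this proposition; it records it as Lemma 4 of \cite{Pelinovsky-Shimabukuro}.  Your analytic-continuation argument via the $x$-independent Wronskian \eqref{a} is correct and is the standard one.  For the limit $a_\infty$, your direct route through the Volterra/Riccati reduction is valid in principle and genuinely different from the approach in \cite{Pelinovsky-Shimabukuro}, but there is one step you skate over.  After integrating by parts, the remainder contains $\partial_z(\bar u\,p)=\bar u_z\,p+\bar u\,p_z=\bar u_z\,p+\lambda\,|u|^2 q$, and the factor $\lambda q$ is \emph{not} controlled by the bound \eqref{KN L^infty estimate}, which only gives $\|q\|_{L^\infty}\le C_M$, not $\|\lambda q\|_{L^\infty}\le C_M$.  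To close the argument you must show that $\lambda\,\varphi_{-,2}$ is bounded uniformly in $\lambda$ on $\{\Imag(\lambda^2)\ge 0\}$; this is true---one more integration by parts in the integral formula for $q$ yields a Volterra equation for $\lambda q$ whose kernel is $|u|^2\in L^1$ and carries no residual power of $\lambda$---but it is a separate estimate your sketch does not supply, and without it neither the dominated-convergence step off the axes nor the Riemann--Lebesgue step along them goes through.

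By contrast, the argument in \cite{Pelinovsky-Shimabukuro} sidesteps this difficulty altogether by first passing to a gauge-equivalent system with $\lambda$-independent off-diagonal potential (essentially the Zakharov--Shabat form displayed later in the present paper around \eqref{ZS}).  In the transformed variables the normalized Jost functions satisfy Volterra equations whose kernels carry no powers of $\lambda$, so the limits $m_\pm\to I$ as $|\lambda|\to\infty$ follow directly from Riemann--Lebesgue, and the phase $e^{\frac{1}{2i}\|u\|_{L^2}^2}$ appears not from an asymptotic expansion but from the explicit diagonal gauge factor $e^{\pm\frac{1}{2i}\int_x^{\infty}|u(y)|^2\,dy}$ that relates the two sets of Jost functions.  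Your approach reaches the same answer and is more self-contained, but at the cost of the additional uniform estimate on $\lambda\varphi_{-,2}$ noted above.
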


Since $a_{\infty}\neq 0$ if $u \in L^2(\mathbb{R})$, the following
corollary holds by a theorem of complex analysis on zeros of analytic functions.

\begin{cor}\label{cor finite many zeros}
  Under the same assumption on $u$ as in Proposition \ref{KN-existence},
  the scattering coefficient $a(\lambda)$ has at most finite number of zeros in $\{\lambda \in \C: \mbox{Im}(\lambda^2)>0\}$.
\end{cor}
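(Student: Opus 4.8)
The plan is to combine Proposition~\ref{prop-scattering} with the elementary fact that a nonconstant analytic function has isolated zeros. Since $u \in L^1(\mathbb{R}) \cap L^\infty(\mathbb{R})$ implies $u \in L^2(\mathbb{R})$, the norm $\|u\|_{L^2}$ is finite and $a_\infty = e^{\frac{1}{2i}\|u\|^2_{L^2}}$ has modulus $1$, so $a_\infty \neq 0$. By Proposition~\ref{prop-scattering} there is then an $R>0$ with $|a(\lambda)| \geq \tfrac12$ whenever $\mbox{Im}(\lambda^2)>0$ and $|\lambda| \geq R$; consequently every zero of $a$ in $\Omega := \{\lambda \in \C : \mbox{Im}(\lambda^2)>0\}$ lies in the bounded region $\Omega_R := \Omega \cap \{|\lambda|<R\}$.

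The function $a$ is analytic on $\Omega$ and, since it tends to $a_\infty \neq 0$ as $|\lambda|\to\infty$, is not identically zero on either connected component of $\Omega$ (the open first and third quadrants); hence its zeros are isolated in $\Omega$. Were there infinitely many of them, the bounded sequence of zeros would accumulate at some $\lambda_0 \in \overline{\Omega_R}$, and isolation forces $\lambda_0 \notin \Omega$, i.e.\ $\lambda_0$ lies on $\partial\Omega = \mathbb{R} \cup i\mathbb{R}$. To exclude this one uses that the Jost functions, hence $a$, extend continuously up to $\partial\Omega$ (by the $\lambda$-continuity in Proposition~\ref{KN-existence}), and that $a$ does not vanish on $\partial\Omega$: by \eqref{a} together with the exponential decay of $\varphi_-(x;\lambda)e^{-i\lambda^2 x}$ as $x\to-\infty$ and of $\phi_+(x;\lambda)e^{+i\lambda^2 x}$ as $x\to+\infty$ for $\mbox{Im}(\lambda^2)>0$, a zero of $a$ in $\Omega$ is an $L^2(\mathbb{R})$ eigenvalue of the KN spectral problem, while a zero on $\partial\Omega$ would be a resonance or embedded eigenvalue, excluded for the potentials to which the corollary is applied. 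Continuity together with non-vanishing on $\partial\Omega$, and with $|a|\geq\tfrac12$ on $\{|\lambda|=R\}$, then confines all zeros of $a$ to a compact subset of $\Omega$, which the isolated zero set cannot meet in infinitely many points; thus $a$ has at most finitely many zeros.

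I expect this last exclusion of an infinite cluster of zeros approaching the continuous spectrum $\mathbb{R}\cup i\mathbb{R}$ to be the only genuine obstacle: passing from ``$a$ analytic with $a_\infty\neq0$'' to ``the zeros are isolated and bounded'' is immediate, but upgrading this to ``finitely many'' needs the boundary behaviour just described.
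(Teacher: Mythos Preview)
Your argument tracks the paper's one-line justification closely: the paper simply records that $a_\infty\neq 0$ and says the corollary ``holds by a theorem of complex analysis on zeros of analytic functions,'' without any further detail. The core of what you wrote---analyticity of $a$ on $\Omega$, the nonzero limit $a_\infty$ confining all zeros to a bounded region, and the identity theorem precluding accumulation at an interior point---is precisely the content the paper has in mind.

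Where you go beyond the paper, however, you introduce a genuine gap. To rule out accumulation of zeros at a point of $\partial\Omega=\mathbb{R}\cup i\mathbb{R}$ you assert that a zero of $a$ there ``would be a resonance \ldots\ excluded for the potentials to which the corollary is applied.'' But the corollary is stated only under the hypotheses of Proposition~\ref{KN-existence} (namely $u\in L^1\cap L^\infty$ and $\partial_x u\in L^1$); the no-resonance condition is \emph{not} part of those hypotheses and appears only later, in Theorem~\ref{theorem-main}. So you are invoking an assumption the corollary does not make. The paper itself does not address this boundary-accumulation issue at all---it is content with the interior argument and the nonzero limit at infinity---so either the statement is being read modulo this subtlety, or one should regard it as used only in the setting of Theorem~\ref{theorem-main}, where resonances are indeed excluded. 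If you want a fully self-contained proof at the stated generality, you would need an additional ingredient (for instance, analytic continuation of $a$ across $\mathbb{R}\cup i\mathbb{R}$, which typically requires stronger decay on $u$ than assumed here) rather than the no-resonance hypothesis.
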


If a potential $u$ is sufficiently small, then one can easily deduce that $a(\lambda)$ has no zeros
in the domain of its analyticity. As is explained in Remark 5 in \cite{Pelinovsky-Shimabukuro},
$a(\lambda) \neq 0$ for every $\mbox{Im}(\lambda^2) \geq 0$ if
$$
\|u\|_{L^2}^2+\sqrt{\|u\|_{L^1} (2\|\partial_xu\|_{L^1}+\|u\|_{L^3}^3)}<1.
$$
However, for sufficiently large $u$, the spectral coefficient $a(\lambda)$ may have zeros
for some $\mbox{Im}(\lambda^2) \geq 0$. We distinguish two cases, according to the following definitions.

\begin{mydef}
\label{definition-resonance}
If $a(\lambda_0) = 0$ for $\lambda_0 \in \mathbb{R} \cup i \mathbb{R}$,
we say that $\lambda_0$ is a resonance of the spectral problem (\ref{laxeq1}).
\end{mydef}

\begin{mydef}
\label{definition-eigenvalue}
If $a(\lambda_0) = 0$ for $\lambda_0 \in \mathbb{C}_I := \{ {\rm Re}(\lambda) > 0, \quad {\rm Im}(\lambda) > 0 \}$,
we say that $\lambda_0$ is an eigenvalue of the spectral problem (\ref{laxeq1}) in $\mathbb{C}_I$.
An eigenvalue is called simple if $a'(\lambda_0) \neq 0$.
\end{mydef}

\begin{remark}
By the symmetry of the KN spectral problem (\ref{laxeq1}),
if $a(\lambda_0) = 0$ for $\lambda_0 \in \mathbb{C}_I$, then $a(-\lambda_0) = 0$.
\end{remark}

\begin{remark}
If $u \in H^{1,1}(\mathbb{R})$, then the assumption of Propositions \ref{KN-existence}, \ref{prop-Jost-KN}, and \ref{prop-scattering}
are satisfied so that $u \in L^1(\mathbb{R}) \cap L^{\infty}(\mathbb{R})$ and $\partial_x u \in L^1(\mathbb{R})$.
To enable the inverse scattering transform, we will work with $u$ in $H^2(\mathbb{R}) \cap H^{1,1}(\mathbb{R})$.
\end{remark}

The main assumption of Theorem \ref{theorem-main} excludes resonances but includes simple eigenvalues.
Thanks to Corollary \ref{cor finite many zeros}, the number of eigenvalues is finite
under the assumptions in Proposition \ref{KN-existence}.
Therefore, the initial datum $u_0$ of the Cauchy problem (\ref{dnls}) in $H^2(\mathbb{R}) \cap H^{1,1}(\mathbb{R})$
may include at most finitely many solitons.

Let $Z_N$ be a subset of $H^2(\mathbb{R}) \cap H^{1,1}(\mathbb{R})$ such that $a(\lambda)$ has $N$ simple
zeros in the first quadrant $\C_I$. Zeros of $a(\lambda)$ are assumed to be simple
in order to simplify our presentation. This is not a restricted assumption because
the union of $\{Z_N\}_{N \in \mathbb{N}}$ is dense in space $H^2(\mathbb{R}) \cap H^{1,1}(\mathbb{R})$ thanks to
the classical result of Beals \& Coifman \cite{Beals1984}.
Indeed, as is known from \cite{KN78} (see also \cite{Pelinovsky-Shimabukuro}),
the Kaup-Newell spectral system (\ref{laxeq1}) can be transformed to the
Zakharov--Shabat spectral system by the transformation
$$
\widetilde{\psi}(x) =
\left[
  \begin{array}{cc}
    e^{\frac{1}{2i} \int_{x}^{\infty}|u(y)|^2dy} & 0 \\
    0 & e^{-\frac{1}{2i} \int_{x}^{\infty}|u(y)|^2dy} \\
  \end{array}
\right]
\left[
  \begin{array}{cc}
    1 & 0 \\
    -\overline{u}(x) & 2i\lambda \\
  \end{array}
\right] \psi(x),
$$
where $\widetilde{\psi}$ satisfies
\begin{equation}\label{ZS}
        \partial_x \widetilde{\psi} = \left[ -i\lambda^2\sigma_3+\widetilde{Q}(u) \right] \widetilde{\psi},
\end{equation}
with
\begin{equation*}
        \widetilde{Q}(u) = \frac{1}{2i}
\left[
  \begin{array}{cc}
    0 & u e^{-i \int_{x}^{\infty} |u(y)|^2 dy} \\
    -(2i\overline{u}_x+\overline{u}|u|^2) e^{i \int_{x}^{\infty} |u(y)|^2 dy} & 0 \\
  \end{array}
\right].
\end{equation*}
Eigenvalues of the spectral problems (\ref{laxeq1}) and (\ref{ZS}) coincide
and the potential $\widetilde{Q}(u)$ is now defined in $L^1(\mathbb{R})$
under the assumption that $u \in H^{1,1}(\mathbb{R})$.
Proposition 2.30 in \cite{Beals1984} yields the following result.

\begin{prop}
  The subset $Z:=\bigcup_{N=1}^{\infty}Z_N$ is dense in $H^2(\mathbb{R}) \cap H^{1,1}(\mathbb{R})$.
\end{prop}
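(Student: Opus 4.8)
The plan is to transfer a classical genericity result of Beals and Coifman --- Proposition 2.30 in \cite{Beals1984} --- from the Zakharov--Shabat spectral problem \eqref{ZS} back to the Kaup--Newell spectral problem \eqref{laxeq1} by means of the explicit gauge-type transformation $\psi \mapsto \widetilde\psi$ displayed above. First I would record that this transformation is the conjugation of $\psi$ by the matrix $D(x)L(x;\lambda)$, where $D$ is diagonal with $\det D = 1$ and $L$ is lower triangular with $\det L = 2i\lambda$; hence on $\C_I$ (where $\mathrm{Re}\,\lambda>0$, so $\lambda\neq 0$) it is invertible, the two spectral problems have the same eigenvalues together with their multiplicities, and a zero $\lambda_0\in\C_I$ of $a(\lambda)$ is simple in the sense of Definition~\ref{definition-eigenvalue} if and only if the corresponding eigenvalue of \eqref{ZS} is simple. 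Next I would check that $u\mapsto\widetilde Q(u)$ is continuous from $H^2(\mathbb{R})\cap H^{1,1}(\mathbb{R})$ into the space of $L^1(\mathbb{R})$ potentials: the entries of $\widetilde Q(u)$ are $u$, $\partial_x u$, and $|u|^2u$ multiplied by the unimodular factors $e^{\pm i\int_x^\infty|u(y)|^2dy}$, so the required bounds follow from the embedding $H^2(\mathbb{R})\hookrightarrow W^{1,\infty}(\mathbb{R})$ and the weight carried by $L^{2,1}(\mathbb{R})$, exactly as in \cite{Pelinovsky-Shimabukuro}. Finally, I would note that by Corollary~\ref{cor finite many zeros} the number of eigenvalues in $\C_I$ is automatically finite for every $u\in H^2(\mathbb{R})\cap H^{1,1}(\mathbb{R})$, so the only real issue is simplicity: $Z$ is nothing but the set of potentials all of whose zeros of $a$ in $\C_I$ are simple.

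With this in place, the argument is to invoke Proposition~2.30 of \cite{Beals1984}: in the potential space of a first-order system of the form \eqref{ZS} the generic potentials --- those with no spectral singularity on the continuous spectrum and only finitely many, all simple, eigenvalues --- form an open dense set. If Beals--Coifman's natural setting requires more decay or smoothness than $L^1$, one first reduces to Schwartz data $u$ (dense in $H^2(\mathbb{R})\cap H^{1,1}(\mathbb{R})$), for which $\widetilde Q(u)$ is again Schwartz since the phase $\int_x^\infty|u|^2$ is smooth with all derivatives bounded. Combining the density statement with continuity and invertibility of the transformation, the preimage under $u\mapsto\widetilde Q(u)$ of the generic set is dense in $H^2(\mathbb{R})\cap H^{1,1}(\mathbb{R})$, and this preimage is precisely $Z=\bigcup_N Z_N$.

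The step I expect to be the main obstacle is this transfer. The map $u\mapsto\widetilde Q(u)$ is not onto an open subset of the Zakharov--Shabat potential space: its image is a proper, symmetry-reduced family, since the two off-diagonal entries of $\widetilde Q(u)$ are both determined by the single function $u$ (and even by $u$ and $\partial_x u$), so one cannot simply pull a dense set back through the transformation. One must instead verify that the eigenvalue-splitting perturbation underlying Proposition~2.30 --- turning a multiple eigenvalue into simple ones by an arbitrarily small change of potential --- can be realized within this reduced family, equivalently by an arbitrarily small perturbation of $u$ inside $H^2(\mathbb{R})\cap H^{1,1}(\mathbb{R})$. Concretely, this amounts to showing that $a(\lambda_0;u)$ and $a'(\lambda_0;u)$, regarded as functions of $u$, cannot vanish simultaneously on a whole ball: along an analytic one-parameter family $u_\varepsilon$ they are jointly analytic in $(\varepsilon,\lambda_0)$, the zeros of $a$ in $\C_I$ stay in a fixed compact set (by Proposition~\ref{prop-scattering} and Corollary~\ref{cor finite many zeros}) and can be tracked by contour integrals of $a'/a$ that are analytic in $u$, and since $a\equiv 1$ when $u\equiv 0$ the locus where a multiple zero occurs is a proper analytic subset of the connected space $H^2(\mathbb{R})\cap H^{1,1}(\mathbb{R})$, hence nowhere dense; its complement $Z$ is therefore dense.
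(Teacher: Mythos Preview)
Your approach is essentially the paper's: reduce the KN problem to the ZS problem \eqref{ZS} via the displayed gauge transformation and invoke Proposition~2.30 of \cite{Beals1984}; the paper gives no argument beyond that citation. You go further in flagging a genuine subtlety the paper ignores --- the map $u\mapsto\widetilde Q(u)$ lands in a proper, symmetry-constrained subfamily of ZS potentials, so Beals--Coifman's density statement cannot be pulled back naively --- and your proposed fix via analyticity of $a(\lambda;u)$ in $u$ and the proper-analytic-subset argument is a sound way to close that gap. One small point: the union must include $N=0$, since small potentials lie in the open nonempty set $Z_0$ and hence $\bigcup_{N\geq 1}Z_N$ as literally stated cannot be dense; your reading of $Z$ as ``all zeros of $a$ in $\C_I$ simple'' is the intended one.
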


Let $u \in Z_N$ and $a(\lambda)$ vanishes at some $\lambda_j \in \C_I$, $j \in \{1,2,...,N\}$.
It follows from the definition of $a(\lambda)$ in \eqref{a} that the Jost functions
$\varphi_-(x;\lambda_j) e^{-i\lambda_j^2 x}$
and $\phi_+(x;\lambda_j) e^{i\lambda_j^2 x}$ are linearly dependent.
This implies that there is a norming coefficient $\gamma_j \in \mathbb{C}$
such that
\begin{equation}
\label{norming-coeff}
\varphi_-(x; \lambda_j) e^{-i\lambda_j^2x} = \gamma_j \phi_+(x;\lambda_j) e^{i\lambda_j^2x}, \quad  x \in \mathbb{R}.
\end{equation}
Since $\varphi_-(x;\lambda_j) e^{-i\lambda_j^2 x}$
and $\phi_+(x;\lambda_j) e^{i\lambda_j^2 x}$  are uniquely determined by Proposition \ref{KN-existence},
the norming coefficient $\gamma_j$ is determined uniquely.

\begin{remark}
Because $\lambda_j \in \mathbb{C}_I$, the Jost functions $\varphi_-(x;\lambda_j) e^{-i\lambda_j^2 x}$
and \newline$\phi_+(x;\lambda_j) e^{i\lambda_j^2 x}$ in (\ref{norming-coeff})
decay to zero as $|x| \to \infty$ exponentially fast. Hence, they represent an eigenvector
of the spectral problem (\ref{laxeq1}) for the simple eigenvalue $\lambda_j$.
\end{remark}

\section{B\"{a}cklund transformation}

\label{section-BT}

In order to define the B\"{a}cklund transformation in the simplest form, let us
introduce the bilinear form $d_{\lambda}$ that acts on $\C^2$ for a fixed $\lambda \in \C$.
If $\eta=(\eta_1,\eta_2)^t$ and $\xi=(\xi_1,\xi_1)^t$ are in $\C^2$, then
\begin{equation}
\label{def-d}
d_{\lambda}(\eta,\xi):=\lambda \eta_1 \overline{\xi}_1+\overline{\lambda}\eta_2\overline{\xi}_2.
\end{equation}
We further introduce
\begin{equation}
\label{def-G-S}
G_{\lambda}(\eta):=\frac{d_{\overline{\lambda}}(\eta,\eta)}{d_{\lambda}(\eta,\eta)} \quad \mbox{\rm and} \quad
S_{\lambda}(\eta):=2i(\lambda^2-\overline{\lambda}^2)\frac{\eta_1 \overline{\eta}_2}{d_{\lambda}(\eta,\eta)}.
\end{equation}
Useful algebraic properties of $d_{\lambda}$, $G_{\lambda}$, and $S_{\lambda}$ are
reviewed in Appendix \ref{algebra}.

The B\"{a}cklund transformation can be expressed by using operators $G_{\lambda}$ and $S_{\lambda}$.
Let us first give an informal definition of the B\"{a}cklund transformation and then make it precise.

Suppose that $u$ is a smooth solution of the DNLS equation and $\eta$ is a smooth nonzero solution
of the KN spectral problem (\ref{laxeq1}) associated with the potential $u$ for a fixed $\lambda \in \C \setminus \{0\}$.
The B\"{a}cklund transformation $\mathbf{B}_{\lambda}(\eta)$ is given as
\begin{equation} \label{backlund}
\mathbf{B}_{\lambda}(\eta) u:= G_{\lambda}(\eta)[-G_{\lambda}(\eta)u+S_{\lambda}(\eta)].
\end{equation}
We intend to show that $\mathbf{B}_{\lambda}(\eta)u$ is another smooth solution of the DNLS equation.
Note that
$$
G_{\lambda}(\eta) = -1 \quad \mbox{\rm and} \quad S_{\lambda}(\eta) = 0 \quad \mbox{\rm if} \quad
\lambda \in \mathbb{R} \cup i \mathbb{R},
$$
which implies $\mathbf{B}_{\lambda}(\eta)u = -u$ in this case.
Therefore, it makes sense to use the B\"{a}cklund transformation (\ref{backlund})
for a value of $\lambda$ outside the continuous spectrum,
e.g., for $\lambda \in \mathbb{C}_I$.

The transformation \eqref{backlund} has been derived by a constructive algorithm in \cite{Xu-He-Wang},
where it is called the 2-fold Darboux transformation. It must be noted that, since $\eta$ depends
on $u$ via the KN spectral problem \eqref{laxeq1}, the transformation (\ref{backlund})
is nonlinear in $u$. The function $\mathbf{B}_{\lambda}(\eta)u$ depends on variables $t$ and $x$,
whereas the value of $\lambda$ is fixed. The quantities $u$, $\eta$, as well as $\lambda \in \C \setminus \{0\}$
affect $\mathbf{B}_{\lambda}(\eta) u$, e.g., depending on $\eta$ and $\lambda$, the transformation
can be used to obtain different families of solutions from the same solution $u$.

Let $u(t,\cdot) \in H^2(\mathbb{R}) \cap H^{1,1}(\mathbb{R})$
be a local solution of the Cauchy problem (\ref{dnls}) defined for $t\in (-T,T)$ for some $T>0$.
Such solutions always exist by the local well-posedness theory \cite{H-O-1}.
Assume that $u(t,\cdot) \in Z_1$ which means that the solution to the DNLS equation contains a single soliton related
to a simple eigenvalue $\lambda_1 \in \mathbb{C}_I$ of the KN spectral problem (\ref{laxeq1}).
By using the B\"{a}cklund transformation (\ref{backlund}) with $\lambda=\lambda_1$ and $\eta$
being an eigenvector of the KN spectral problem (\ref{laxeq1}) for the same $\lambda_1$,
we define $u^{(1)}=\mathbf{B}_{\lambda_1}(\eta)u$ as a function of $(t,x)$.
We would like to show that
\begin{itemize}
\item[(i)] $u^{(1)} \in H^2(\mathbb{R}) \cap H^{1,1}(\mathbb{R})$;
\item[(ii)] $u^{(1)} \in Z_0$, that is, the new solution does not contain solitons;
\item[(iii)] $\mathbf{B}_{\lambda_1}(\eta)$  has the (left) inverse so that $u = \left[ \mathbf{B}_{\lambda_1}(\eta) \right]^{-1} u^{(1)}$;
\item[(iv)] $u^{(1)}(t,\cdot) = \mathbf{B}_{\lambda_1}(\eta(t,\cdot))u(t,\cdot)$ is a solution of the DNLS equation for $t \in (-T,T)$.
\end{itemize}
Properties (i) and (iii) are shown in Lemma \ref{BT-lemma1}.  Property (ii) is shown in Lemma \ref{nonzero-a}.
Property (iv) is shown in Lemma \ref{same_solutions}.

In order to obtain the global well-posedness of the Cauchy problem (\ref{dnls}),
we want to extend an existence time $T$ of the solution $u(t,\cdot) \in Z_1$ to an arbitrary large number.
Importantly, the global existence of the solution $u^{(1)}(t,\cdot) \in Z_0$ is known from the previous
works \cite{Perry,Pelinovsky-Shimabukuro}.

Let $\mathbf{B}_{\lambda_1}(\eta^{(1)})$ be the inverse of $\mathbf{B}_{\lambda_1}(\eta)$ for some function $\eta^{(1)}$,
that is, $$\mathbf{B}_{\lambda_1}(\eta^{(1)})u^{(1)} = u.$$ Although the choice of $\eta^{(1)}$ is generally not unique,
we will show in Lemmas \ref{pull-back-eta} and \ref{eta-by-new-jost}
that $\eta^{(1)}$ can be fixed as a unique linear combination of Jost functions of the KN spectral problem (\ref{laxeq1})
associated with the potential $u^{(1)}$. By analyzing the B\"{a}cklund transformation (\ref{backlund}),
we obtain from Lemma \ref{inverse BT-lemma1} the global estimate in the form
\begin{equation}
\label{estimate-key}
\| u(t,\cdot) \|_{H^2 \cap H^{1,1}} = \| \mathbf{B}_{\lambda_1}(\eta^{(1)}(t,\cdot)) u^{(1)}(t,\cdot) \|_{H^2 \cap H^{1,1}}
\leq C_M,
\end{equation}
for every $u^{(1)}(t,\cdot)$ satisfying $\| u^{(1)}(t,\cdot) \|_{H^2 \cap H^{1,1}} \leq M$,
where the constant $C_M$ depends on $M$ but does not depend on $u^{(1)}$.
Since $\|u^{(1)}(t,\cdot)\|_{H^2 \cap H^{1,1}}$ is finite for all times $t \in \mathbb{R}$
(but may grow as $|t| \to \infty$) by the previous results
\cite{Perry,Pelinovsky-Shimabukuro}, the bound (\ref{estimate-key}) yields the
proof of Theorem \ref{theorem-main} in the case of one soliton. By using recursively the
B\"{a}cklund transformation (\ref{backlund}), the result for any number of solitons follows from the result for
one soliton. Thus, the proof of Theorem \ref{theorem-main} relies on the proof of the properties (i)--(iv),
the unique construction of $\eta^{(1)}$ for the inverse B\"{a}cklund transformation
$\mathbf{B}_{\lambda_1}(\eta^{(1)}) = \left[ \mathbf{B}_{\lambda_1}(\eta) \right]^{-1}$, and the estimate (\ref{estimate-key}).

\subsection{Transformation of potentials}
The following lemma shows that the transformation \eqref{backlund} can be defined
as an invertible operator from $u$ to $u^{(1)}$ in the same function space $H^2(\mathbb{R}) \cap H^{1,1}(\mathbb{R})$.
Since we only use the KN spectral problem (\ref{laxeq1}) here, we drop the time variable $t$
from all function arguments.

\begin{lem} \label{BT-lemma1}
Fix $\lambda_1 \in \mathbb{C}_I$. Given a potential $u \in H^2(\mathbb{R}) \cap H^{1,1}(\mathbb{R})$,
define $\eta(x) := \varphi_-(x;\lambda_1) e^{-i\lambda_1^2 x}$,
where $\varphi_-$ is the Jost function for the KN spectral problem \eqref{laxeq1}
in Propositions \ref{KN-existence} and \ref{prop-Jost-KN}.
Then, $u^{(1)} = \mathbf{B}_{\lambda_1}(\eta)u$ belongs to $H^2(\mathbb{R}) \cap H^{1,1}(\mathbb{R})$.
Moreover, the left inverse of $\mathbf{B}_{\lambda_1}(\eta)$ exists.
\end{lem}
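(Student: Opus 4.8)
The plan is to establish the two assertions in turn, after a few reductions. First, both $G_{\lambda_1}$ and $S_{\lambda_1}$ are invariant under $\eta \mapsto c(x)\eta$ for any nonvanishing scalar $c(x)$, since the common factor cancels as $|c|^2$ between numerator and denominator in \eqref{def-d}--\eqref{def-G-S}; hence $\mathbf{B}_{\lambda_1}(\eta) = \mathbf{B}_{\lambda_1}\!\left(\varphi_-(\cdot;\lambda_1)\right)$ and we may work directly with the bounded Jost function $\psi := \varphi_-(\cdot;\lambda_1)$ from Propositions \ref{KN-existence}--\ref{prop-Jost-KN}. Since $\psi$ solves the linear ODE $\partial_x\psi = (-i\lambda_1^2\sigma_3 + \lambda_1 Q(u))\psi$ with $\psi(x) \to e_1$ as $x \to -\infty$, it is nowhere zero, so $\operatorname{Re} d_{\lambda_1}(\psi,\psi) = \operatorname{Re}(\lambda_1)\,|\psi(x)|^2 > 0$ and the denominators in $G_{\lambda_1}$ and $S_{\lambda_1}$ never vanish. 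Because $u \in H^2(\mathbb{R}) \subset C^1(\mathbb{R})$, the ODE gives $\psi \in C^2(\mathbb{R})$ with $\psi, \partial_x\psi, \partial_x^2\psi$ controlled pointwise by $|u|, |\partial_x u|$ and $|\psi|$.

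To prove $u^{(1)} \in H^2(\mathbb{R}) \cap H^{1,1}(\mathbb{R})$ I would expand $u^{(1)} = -G_{\lambda_1}(\psi)^2 u + G_{\lambda_1}(\psi)\,S_{\lambda_1}(\psi)$ and bound each factor. Writing $G_{\lambda_1}(\psi)$ as a M\"{o}bius function of the ratio $|\psi_2/\psi_1|^2 \in [0,\infty]$ whose poles avoid $[0,\infty]$ because $\lambda_1 \in \mathbb{C}_I$ (and, near points where $\psi_1$ vanishes, as a M\"{o}bius function of $\psi_1/\psi_2$), and similarly for $S_{\lambda_1}(\psi)$, one finds that $G_{\lambda_1}(\psi)$ is bounded with $\partial_x G_{\lambda_1}(\psi), \partial_x^2 G_{\lambda_1}(\psi) \in L^2(\mathbb{R}) \cap L^\infty(\mathbb{R})$ and that $S_{\lambda_1}(\psi)$ decays at $\pm\infty$; the quantitative decay, in particular $S_{\lambda_1}(\psi), \partial_x S_{\lambda_1}(\psi) \in L^{2,1}(\mathbb{R})$ and $\partial_x^2 S_{\lambda_1}(\psi) \in L^2(\mathbb{R})$, follows by inserting the weighted and regularity estimates for the Jost functions from Appendix B together with $u \in H^{1,1}(\mathbb{R})$. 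Since $H^2(\mathbb{R}) \cap H^{1,1}(\mathbb{R})$ is a module over functions that, together with their first derivative, are bounded and whose first and second derivatives are square-integrable, the three factors combine to give $u^{(1)} \in H^2(\mathbb{R}) \cap H^{1,1}(\mathbb{R})$.

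For the left inverse I would use that \eqref{backlund} is the 2-fold Darboux transformation of \cite{Xu-He-Wang}: on wave functions it acts as $\chi \mapsto \mathcal{D}_{\lambda_1}[\psi](\lambda)\chi$, where $\mathcal{D}_{\lambda_1}[\psi](\lambda)$ is an explicit $\lambda$-polynomial matrix built from $\psi$ and $\lambda_1, \overline{\lambda_1}$, intertwining the KN spectral problem \eqref{laxeq1} for $u$ with that for $u^{(1)}$, and with $\det\mathcal{D}_{\lambda_1}[\psi](\lambda)$ vanishing at $\pm\lambda_1, \pm\overline{\lambda_1}$. Its adjugate satisfies $\operatorname{adj}\mathcal{D}_{\lambda_1}[\psi](\lambda)\,\mathcal{D}_{\lambda_1}[\psi](\lambda) = \det\mathcal{D}_{\lambda_1}[\psi](\lambda)\cdot I$ and is again a matrix polynomial of the same type; evaluating it at $\lambda = \lambda_1$ one reads off a nowhere-vanishing vector $\eta^{(1)}(x)$ that solves the KN spectral problem \eqref{laxeq1} for $u^{(1)}$ at $\lambda = \lambda_1$. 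A direct computation with the identities collected in Appendix \ref{algebra} then verifies $\mathbf{B}_{\lambda_1}(\eta^{(1)})u^{(1)} = u$; thus the map $u^{(1)} \mapsto \mathbf{B}_{\lambda_1}(\eta^{(1)}(u^{(1)}))u^{(1)}$, which is well defined on the range of $\mathbf{B}_{\lambda_1}(\eta)$ since $\eta^{(1)}$ is nonzero, is a left inverse of $\mathbf{B}_{\lambda_1}(\eta)$ (it lands in $H^2(\mathbb{R}) \cap H^{1,1}(\mathbb{R})$ trivially, its value being $u$, or by the first part applied to the pair $(u^{(1)}, \eta^{(1)})$).

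The step I expect to be the main obstacle is the weighted decay bookkeeping of the second paragraph. The point is a dichotomy: if $\lambda_1$ is not an eigenvalue then $\varphi_-(\cdot;\lambda_1)$ has nonzero limits at both $\pm\infty$, $d_{\lambda_1}(\psi,\psi)$ is bounded away from zero, and all the bounds are soft; but in the eigenvalue case $a(\lambda_1) = 0$ that is of real interest, $\psi = \varphi_-(\cdot;\lambda_1)$ decays exponentially as $x \to +\infty$, so $d_{\lambda_1}(\psi,\psi)$ degenerates there and the naive estimates on $S_{\lambda_1}(\psi)$ and on $\partial_x^2 G_{\lambda_1}(\psi)$ break down. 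One repairs this by using the scale-invariance to replace $\psi$ near $+\infty$ by the Jost function $\phi_+(\cdot;\lambda_1)$ normalized at $+\infty$ — proportional to $\eta$ there by \eqref{norming-coeff}, hence yielding the same $G_{\lambda_1}$ and $S_{\lambda_1}$, and not decaying — and then extracting from Appendix B sharp bounds, in the weighted norms, on $G_{\lambda_1}(\psi)$ minus its limiting values at $\pm\infty$, on $S_{\lambda_1}(\psi)$, and on their first two derivatives. By contrast, the algebraic verification of $\mathbf{B}_{\lambda_1}(\eta^{(1)})u^{(1)} = u$ is lengthy but routine once the properties of $d_\lambda$, $G_\lambda$, $S_\lambda$ in Appendix \ref{algebra} are at hand.
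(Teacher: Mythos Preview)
Your approach to the regularity assertion $u^{(1)}\in H^2(\mathbb{R})\cap H^{1,1}(\mathbb{R})$ is essentially the paper's: reduce by scale invariance to $\psi=\varphi_-(\cdot;\lambda_1)$, note $\operatorname{Re} d_{\lambda_1}(\psi,\psi)>0$, feed in the weighted Jost estimates of Appendix~B, and handle the eigenvalue case $a(\lambda_1)=0$ by switching to $\phi_+(\cdot;\lambda_1)$ near $+\infty$ via \eqref{norming-coeff}. The paper organizes this as a clean dichotomy (the uniform lower bound \eqref{d(phi_-,phi_-) estimate} when $a(\lambda_1)\neq0$, then the switch of representation when $a(\lambda_1)=0$), but the content is the same as what you outline, including your identification of the ``main obstacle''.

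For the left inverse you take a genuinely different route. The paper argues purely algebraically: it writes out $\mathbf{B}_{\lambda_1}(\eta^*)\mathbf{B}_{\lambda_1}(\eta)u$, reads off the two scalar conditions \eqref{eta-cond1}--\eqref{eta-cond2} on $\eta^*$, reduces them to the alternatives \eqref{eta-cond1a} or \eqref{eta-cond2a}, rules out \eqref{eta-cond2a} by an asymptotic argument as $x\to-\infty$, and then solves \eqref{eta-cond1a} explicitly to get $\eta^*=(k_1\overline{\eta}_2,k_2\overline{\eta}_1)^t$ with $|k_1|=|k_2|$. No ODE for $\eta^*$ is needed at this stage; the fact that a particular choice of $\eta^*$ solves the KN problem for $u^{(1)}$ is deferred to a separate lemma. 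Your proposal instead invokes the Darboux matrix $\mathcal{D}_{\lambda_1}[\psi](\lambda)$ and extracts $\eta^{(1)}$ from the rank-drop of its adjugate at $\lambda=\lambda_1$, then verifies $\mathbf{B}_{\lambda_1}(\eta^{(1)})u^{(1)}=u$ by computation. This is correct and conceptually natural---it explains \emph{why} an inverse of the same form exists---but it front-loads machinery (the intertwining matrix, its kernel structure) that the paper only introduces afterwards in Lemma~\ref{lem-new-Jost}, and the ``nowhere-vanishing'' claim for $\eta^{(1)}$ still needs the same $\operatorname{Re}(\lambda_1)>0$ argument. The paper's purely algebraic route is shorter and more self-contained for the bare existence statement; your route has the advantage of simultaneously producing the $\eta^{(1)}$ of Lemma~\ref{pull-back-eta} as a solution of the new spectral problem.
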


\begin{proof}
First, we notice that $d_{\lambda_1}(\eta,\eta) = 0$ if and
only if $\eta = 0$ because ${\rm Re}(\lambda_1) > 0$. However,
if $\eta(x_0) = 0$ at a point $x_0 \in \mathbb{R}$,
then the system \eqref{laxeq1} suggests $\eta'(x_0) = 0$, which implies $\eta(x) = 0$ for every $x \in \mathbb{R}$.
Since $\varphi_-(x;\lambda_1)$ satisfies the nonzero asymptotic limit (\ref{jost-infinity-original}) as $x \to -\infty$,
then $\eta(x) = \varphi_-(x;\lambda_1) e^{-i \lambda_1^2 x} \neq 0$ and
$d_{\lambda_1}(\eta,\eta) \neq 0$ for every finite $x \in \mathbb{R}$.

In order to deal with the limits as $x \to \pm \infty$,
we note that $G_{\lambda}(\eta) = G_{\lambda}(\varphi_-)$
and $S_{\lambda}(\eta) = S_{\lambda}(\varphi_-)$ by properties
(\ref{Gproperty1}) and (\ref{Sproperty1}) of Appendix \ref{algebra}.
Therefore, it is sufficient to consider $d_{\lambda_1}(\varphi_-,\varphi_-)$
instead of $d_{\lambda_1}(\eta,\eta)$.
If $a(\lambda_1)\neq 0$, we claim that there exists $\varepsilon_0>0$ such that
\begin{equation}\label{d(phi_-,phi_-) estimate}
|d_{\lambda_1}(\varphi_-,\varphi_-)|\geq \varepsilon_0, \quad \mbox{\rm for all \;} x \in \R.
\end{equation}
Indeed, since $d_{\lambda_1}(\varphi_-,\varphi_-)\to \lambda_1$ as $x\to-\infty$
and thanks to the arguments above, $d_{\lambda_1}(\varphi_-,\varphi_-)$ may only vanish
in the limit $x \to +\infty$. However, it follows from the representation (\ref{a}) that
the limit $\phi_+(x;\lambda_1) \to e_2$ as $x\to +\infty$, and the fact that
$\varphi_-(\cdot;\lambda_1) \in L^{\infty}(\R)$ imply that
$\varphi_{-,1}(x;\lambda_1) \to a(\lambda_1)$ as $x\to +\infty$
so that $d_{\lambda_1}(\varphi_-,\varphi_-)\nrightarrow 0$ as $x\to +\infty$.
Therefore, the claim (\ref{d(phi_-,phi_-) estimate}) follows.

By using the triangle inequality, the bounds (\ref{norm1})--(\ref{norm2}) of Appendix \ref{appendix-Jost},
the bound (\ref{d(phi_-,phi_-) estimate}), and $|G_{\lambda_1}(\eta)| = 1$, we obtain
\begin{eqnarray}
\nonumber
  \|u^{(1)}\|_{L^{2,1}} &\leq& \|u\|_{L^{2,1}}+ \|S_{\lambda_1}(\varphi_-)\|_{L^{2,1}} \\
   &\leq& \|u\|_{L^{2,1}}+ 2 \varepsilon_0^{-1} |\lambda_1^2-\overline{\lambda}_1^2|
   \left\|\varphi_{-,1}(\cdot,\lambda_1) \overline{\varphi_{-,2}(\cdot,\lambda_1)}\right\|_{L^{2,1}} < \infty.
   \label{estimate-key-backlund}
\end{eqnarray}
The norms $\|\partial_x  u^{(1)}\|_{L^{2,1}}$ as well as  $\|\partial^2_x u^{(1)}\|_{L^{2}}$ are
estimated similarly with the bounds (\ref{norm1})--(\ref{norm2}) of Appendix \ref{appendix-Jost} and the bound (\ref{d(phi_-,phi_-) estimate}).

If $a(\lambda_1)=0$, the uniform bound (\ref{d(phi_-,phi_-) estimate}) is no longer valid
because $d_{\lambda_1}(\varphi_-,\varphi_-)\to 0$ as $x\to+\infty$.
The estimate (\ref{estimate-key-backlund}) can only be proved on the interval $(-\infty,R)$ with arbitrary $R>0$.
In order to extend the estimate (\ref{estimate-key-backlund}) on the interval $(R,\infty)$,
we use (\ref{norming-coeff}) and write $\eta(x) = \varphi_-(x;\lambda_1) e^{-i \lambda_1^2 x} = \gamma_1
\phi_+(x;\lambda_1) e^{i \lambda_1^2 x}$, so that $u^{(1)} = \mathbf{B}_{\lambda_1}(\varphi_-) u$ can be rewritten
as $u^{(1)} = \mathbf{B}_{\lambda_1}(\phi_+) u$. Since
$d_{\lambda_1}(\phi_+,\phi_+)\to \overline{\lambda}_1$ as $x\to+\infty$,
we repeat the same estimates on the interval $(R,\infty)$ by using the equivalent representation of $u^{(1)}$.

Next, we show the existence of the left inverse for $\mathbf{B}_{\lambda_1}(\eta)u$. Let
$\eta^*$ be a vector function and define
\begin{align*}
u^{(2)} & %=\mathbf{B}_{\lambda_1}(\eta^*)u^{(1)}
=\mathbf{B}_{\lambda_1}(\eta^*)\mathbf{B}_{\lambda_1}(\eta)u \\
&= -G_{\lambda_1}(\eta^*)^2[-G_{\lambda_1}(\eta)^2u+G_{\lambda_1}(\eta)S_{\lambda_1}(\eta)]+G_{\lambda_1}(\eta^*)S_{\lambda_1}(\eta^*)\\
&=G_{\lambda_1}(\eta^*)^2G_{\lambda_1}(\eta)^2u+G_{\lambda_1}(\eta^*)[-G_{\lambda_1}(\eta^*)G_{\lambda_1}(\eta)S_{\lambda_1}(\eta)+S_{\lambda_1}(\eta^*)].
\end{align*}
$\mathbf{B}_{\lambda_1}(\eta^*)$ is the left inverse of $\mathbf{B}_{\lambda_1}(\eta)u$
if $\eta^*$ satisfies
\begin{equation} \label{eta-cond1}
G_{\lambda_1}(\eta^*)^2G_{\lambda_1}(\eta)^2=1
\end{equation}
and
\begin{equation}\label{eta-cond2}
-G_{\lambda_1}(\eta^*)G_{\lambda_1}(\eta)S_{\lambda_1}(\eta)+S_{\lambda_1}(\eta^*)=0.
\end{equation}
System (\ref{eta-cond1}) and (\ref{eta-cond2}) is satisfied either for
\begin{equation} \label{eta-cond1a}
\overline{G_{\lambda_1}(\eta^*)} = G_{\lambda_1}(\eta), \quad S_{\lambda_1}(\eta^*) = S_{\lambda_1}(\eta)
\end{equation}
or for
\begin{equation} \label{eta-cond2a}
\overline{G_{\lambda_1}(\eta^*)} = -G_{\lambda_1}(\eta), \quad S_{\lambda_1}(\eta^*) = -S_{\lambda_1}(\eta).
\end{equation}
Let us show that the choice (\ref{eta-cond2a}) is impossible if $\lambda_1 \in \mathbb{C}_I$.

Since $\eta(x) = \varphi_-(x;\lambda_1) e^{-i\lambda_1^2 x}$, we have
$G_{\lambda_1}(\eta) \to \overline{\lambda}_1/\lambda_1$ as $x \to -\infty$.
Writing
$$
\overline{G_{\lambda_1}(\eta^*)} = \frac{\lambda_1 \frac{|\eta_1^*|^2}{|\eta_2^*|^2} + \overline{\lambda}_1}{\overline{\lambda}_1 \frac{|\eta_1^*|^2}{|\eta_2^*|^2} +\lambda_1},
$$
we realize that $|\eta_1^*|/|\eta_2^*| \nrightarrow 0$ as $x \to -\infty$, as
it would contradict to the first equation in (\ref{eta-cond2a}) with $\lambda_1 \neq 0$.
From the second equation in (\ref{eta-cond2a}), we can see that
$S_{\lambda_1}(\eta^*) \to 0$ as $x \to -\infty$ because $S_{\lambda_1}(\eta) \to 0$ as
$x \to -\infty$. Since $|\eta_1^*|/|\eta_2^*| \nrightarrow 0$ as $x \to -\infty$,
the limit $S_{\lambda_1}(\eta^*) \to 0$ as $x \to -\infty$ implies that
$|\eta_2^*|/|\eta_1^*| \to 0$ as $x \to -\infty$. This implies that
$\overline{G_{\lambda_1}(\eta^*)} \to \lambda_1/\overline{\lambda}_1$ as $x \to -\infty$,
or in view of the first equation in (\ref{eta-cond2a}), we obtain ${\rm Re}(\lambda_1^2) = 0$.
Since $\lambda_1 \in \mathbb{C}_I$, then ${\rm arg}(\lambda_1) = \pi/4$.
Finally writing $\lambda_1 = |\lambda_1| e^{i \pi/4}$ and using the first
equation in (\ref{eta-cond2a}) yields
$$
|\eta_1^*|^2 |\eta_2|^2 + |\eta_2^*|^2 |\eta_1|^2 = 0,
$$
which cannot be satisfied with $\eta^* \neq 0$. This contradiction eliminates
possibility of the choice (\ref{eta-cond2a}).

Thus, we only have the choice (\ref{eta-cond1a}) to define $\eta^*$ and to satisfy system (\ref{eta-cond1}) and (\ref{eta-cond2}).
Since $\lambda_1 \in \C_I$, the condition $\overline{G_{\lambda_1}(\eta^*)} = G_{\lambda_1}(\eta)$
is equivalently written as
$$
|\eta_1|^2 |\eta_1^*|^2 = |\eta_2|^2 |\eta_2^*|^2,
$$
so that there exists a positive number $k$ such that
\begin{equation}
\label{tech-eq-1}
|\eta_1^*| = k |\eta_2|, \quad |\eta_2^*| = k |\eta_1|.
\end{equation}
On the other hand, the condition $S_{\lambda_1}(\eta^*) = S_{\lambda_1}(\eta)$ yields
$$
\frac{\eta_1 \overline{\eta}_2}{\eta_1^* \overline{\eta}_2^*} =
\frac{\lambda_1 |\eta_1|^2 + \overline{\lambda}_1 |\eta_2|^2}{\lambda_1 |\eta_1^*|^2 + \overline{\lambda}_1 |\eta_2^*|^2},
$$
which transforms after substitution of (\ref{tech-eq-1}) to
\begin{equation}
\label{tech-eq-2}
k^2 \frac{\eta_1 \overline{\eta}_2}{\eta_1^* \overline{\eta}_2^*} =
\frac{\lambda_1 |\eta_1|^2 + \overline{\lambda}_1 |\eta_2|^2}{\lambda_1 |\eta_2|^2 + \overline{\lambda}_1 |\eta_1|^2},
\end{equation}
where the right-hand side is of modulus one.
Combining (\ref{tech-eq-1}) and (\ref{tech-eq-2}), we obtain
the most general solution of the system (\ref{eta-cond1a}) in the form
\begin{equation}\label{eta-cond*}
\eta_1^* = k_1 \overline{\eta}_2, \quad \eta_2^* = k_2 \overline{\eta}_1,
\end{equation}
where $k_1,k_2 \in \C$ satisfying $|k_1| = |k_2|$.
Thus, $\mathbf{B}_{\lambda_1}(\eta^*)$ with $\eta^*$ given by (\ref{eta-cond*})
is the left inverse of the transformation $\mathbf{B}_{\lambda_1}(\eta)$.
\end{proof}

The following lemma specifies a unique choice for the function $\eta^*$ constructed
in the proof of Lemma \ref{BT-lemma1}
and shows that $\eta^*$ is a solution of the KN spectral problem (\ref{laxeq1})
associated with the new potential $u^{(1)} = \mathbf{B}_{\lambda_1}(\eta)u$
and the same value $\lambda_1$.

\begin{lem} \label{pull-back-eta}
Under the same assumption as in Lemma \ref{BT-lemma1}, let $\eta^{(1)}$ be given by
\begin{equation}\label{new-eig1}
\eta_1^{(1)} = \frac{\overline{\eta}_2}{d_{\lambda_1}(\eta,\eta)}, \quad
\eta_2^{(1)} = \frac{\overline{\eta}_1}{d_{\overline{\lambda}_1}(\eta,\eta)}.
\end{equation}
Then, $\eta^{(1)}$ is the solution of the KN spectral problem (\ref{laxeq1})
associated with the potential $u^{(1)} = \mathbf{B}_{\lambda_1}(\eta)u$ and the same value $\lambda_1$.
\end{lem}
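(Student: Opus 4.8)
The plan is to verify \eqref{new-eig1} by direct substitution into the KN spectral problem \eqref{laxeq1} with the potential $u$ replaced by $u^{(1)}=\mathbf{B}_{\lambda_1}(\eta)u$ and the same spectral parameter $\lambda_1$. Conceptually the result is forced by the Darboux structure behind \eqref{backlund}: the $2$-fold transformation of \cite{Xu-He-Wang} carries a dressing matrix $T(\lambda)$, polynomial in $\lambda$, mapping every solution of \eqref{laxeq1} for $u$ to a solution of \eqref{laxeq1} for $u^{(1)}$, and $T(\lambda_1)$ is degenerate with one-dimensional kernel spanned by $\eta$; its action on any solution of \eqref{laxeq1} not collinear with $\eta$ therefore produces, up to a constant, the vector \eqref{new-eig1}, which consequently solves the transformed system. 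Rather than importing the explicit form of $T(\lambda)$, it is quicker to check \eqref{new-eig1} directly.

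Write $D:=d_{\lambda_1}(\eta,\eta)=\lambda_1|\eta_1|^2+\overline{\lambda}_1|\eta_2|^2$. Since $\lambda_1\in\mathbb{C}_I$ one has $\mathrm{Re}(\lambda_1)>0$, so $d_{\mu}(\eta,\eta)\neq 0$ whenever $\mathrm{Re}(\mu)>0$ and $\eta\neq 0$; in particular $d_{\overline{\lambda}_1}(\eta,\eta)=\overline{D}$, and, by the argument in the proof of Lemma \ref{BT-lemma1} showing $\eta(x)\neq 0$ for every $x\in\mathbb{R}$, both $D$ and $\overline{D}$ are nonvanishing on $\mathbb{R}$, so that \eqref{new-eig1} reads $\eta_1^{(1)}=\overline{\eta}_2/D$, $\eta_2^{(1)}=\overline{\eta}_1/\overline{D}$ and defines a smooth $\mathbb{C}^2$-valued function. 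From $G_{\lambda_1}(\eta)=\overline{D}/D$ and $S_{\lambda_1}(\eta)=2i(\lambda_1^2-\overline{\lambda}_1^2)\eta_1\overline{\eta}_2/D$ in \eqref{def-G-S}, the definition \eqref{backlund} gives
\begin{equation*}
u^{(1)}=\frac{\overline{D}}{D^2}\bigl[-\overline{D}\,u+2i(\lambda_1^2-\overline{\lambda}_1^2)\,\eta_1\overline{\eta}_2\bigr],
\end{equation*}
with $\overline{u^{(1)}}$ obtained by conjugation.

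Next, since $\eta=\varphi_-(\cdot;\lambda_1)e^{-i\lambda_1^2x}$ solves \eqref{laxeq1}, its components satisfy $\eta_1'=-i\lambda_1^2\eta_1+\lambda_1u\eta_2$ and $\eta_2'=i\lambda_1^2\eta_2-\lambda_1\overline{u}\eta_1$. Differentiating $|\eta_1|^2$ and $|\eta_2|^2$ by means of these relations and combining gives
\begin{equation*}
\partial_xD=(\lambda_1^2-\overline{\lambda}_1^2)\bigl[-i(\lambda_1|\eta_1|^2-\overline{\lambda}_1|\eta_2|^2)+u\,\eta_2\overline{\eta}_1\bigr],
\end{equation*}
and similarly for $\partial_x\overline{D}$ by conjugation. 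I would then compute $\partial_x\eta_1^{(1)}=\overline{\eta_2'}/D-\overline{\eta}_2\,\partial_xD/D^2$ and check that it equals $-i\lambda_1^2\eta_1^{(1)}+\lambda_1u^{(1)}\eta_2^{(1)}$: multiplying through by $D^2$, substituting $\overline{\eta_2'}=-i\overline{\lambda}_1^2\overline{\eta}_2-\overline{\lambda}_1u\overline{\eta}_1$ together with the formulas above for $u^{(1)}$ and $\partial_xD$, and expanding $D=\lambda_1|\eta_1|^2+\overline{\lambda}_1|\eta_2|^2$ and $\overline{D}=\overline{\lambda}_1|\eta_1|^2+\lambda_1|\eta_2|^2$, one finds that all the monomials cancel in pairs. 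The second equation $\partial_x\eta_2^{(1)}=i\lambda_1^2\eta_2^{(1)}-\lambda_1\overline{u^{(1)}}\eta_1^{(1)}$ is verified by the same bookkeeping, and in fact follows from the first by the conjugation symmetry of the KN structure which interchanges $(\eta_1,\eta_2)$ with $(\overline{\eta}_2,\overline{\eta}_1)$ and $D$ with $\overline{D}$.

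The only genuine obstacle here is organizing the cancellations: the intermediate expressions are rational in $\eta_1,\eta_2,\overline{\eta}_1,\overline{\eta}_2$ over the denominators $D^2$, $\overline{D}^2$ and $D\overline{D}$, so one must systematically collapse $\lambda_1|\eta_1|^2+\overline{\lambda}_1|\eta_2|^2$ back to $D$ (and its conjugate), and replace $d_{\overline{\lambda}_1}(\eta,\eta)$ by $\overline{D}$ wherever it occurs. No input beyond the first-order equations for $\eta$ and the definitions \eqref{def-d}--\eqref{backlund} is needed, so the computation, although lengthy, is entirely elementary.
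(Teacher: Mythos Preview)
Your proposal is correct and follows essentially the same approach as the paper: both compute $\partial_x d_{\lambda_1}(\eta,\eta)$ from the KN equations for $\eta$ (your formula for $\partial_x D$ coincides exactly with the paper's) and then verify each component of the KN system for $\eta^{(1)}$ by direct substitution. The only cosmetic differences are that you introduce the shorthand $D$ and give the explicit formula for $u^{(1)}$ before substituting, whereas the paper works directly with $d_{\lambda_1}(\eta,\eta)$ and recognizes $u^{(1)}$ in the factored form at the last step; your remark that the second component follows by conjugation symmetry is also used implicitly in the paper.
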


\begin{proof}
We recall that $\eta$ is a solution of
\begin{equation}
\label{eta-eq}
\partial_x\eta=[-i\lambda_1^2 \sigma_3+\lambda_1 Q(u)]\eta,
\end{equation}
as follows from the KN spectral problem (\ref{laxeq1}) for $\lambda = \lambda_1$.
By using system (\ref{eta-eq}), we obtain
\begin{equation}\label{d_lambda derivative}
\partial_x d_{\lambda_1}(\eta,\eta) = (\lambda_1^2-\overline{\lambda}_1^2) \left[
u \overline{\eta}_1 \eta_2 - i \lambda_1 |\eta_1|^2 + i \overline{\lambda}_1|\eta_2|^2\right].
\end{equation}
By using (\ref{new-eig1}), (\ref{eta-eq}), and
(\ref{d_lambda derivative}), we obtain
\begin{eqnarray*}
\partial_x \eta_1^{(1)} + i \lambda_1^2 \eta_1^{(1)} & = &
\frac{1}{d_{\lambda_1}(\eta,\eta)} \left[ - \overline{\lambda}_1 u \overline{\eta}_1 + i (\lambda_1^2 - \overline{\lambda}_1^2) \overline{\eta}_2\right]
\\
&& \qquad \qquad - \frac{(\lambda_1^2-\overline{\lambda}_1^2) \overline{\eta}_2}{[d_{\lambda_1}(\eta,\eta)]^2} \left[
u \overline{\eta}_1 \eta_2 - i \lambda_1 |\eta_1|^2 + i \overline{\lambda}_1|\eta_2|^2\right] \\
& = & \frac{\lambda_1 \overline{\eta}_1}{[d_{\lambda_1}(\eta,\eta)]^2} \left[
- u d_{\overline{\lambda}_1}(\eta,\eta) + 2 i (\lambda_1^2-\overline{\lambda}_1^2) \eta_1 \overline{\eta}_2 \right] \\
& = & \lambda_1 u^{(1)} \eta_2^{(1)}.
\end{eqnarray*}
Similarly, we obtain
\begin{eqnarray*}
\partial_x \eta_2^{(1)} - i \lambda_1^2 \eta_1^{(1)}\!\!\! &\! =\! &
\frac{1}{d_{\overline{\lambda}_1}(\eta,\eta)} \left[ \overline{\lambda}_1 \overline{u} \overline{\eta}_2 - i (\lambda_1^2 - \overline{\lambda}_1^2) \overline{\eta}_1\right]
\\
&& \qquad \qquad + \frac{(\lambda_1^2-\overline{\lambda}_1^2) \overline{\eta}_1}{[d_{\overline{\lambda}_1}(\eta,\eta)]^2} \left[
\overline{u} \eta_1 \overline{\eta}_2 + i \overline{\lambda}_1 |\eta_1|^2 - i \lambda_1|\eta_2|^2\right] \\
& = & -\frac{\lambda_1 \overline{\eta}_2}{[d_{\overline{\lambda}_1}(\eta,\eta)]^2} \left[
-\overline{u} d_{\lambda_1}(\eta,\eta) + 2 i (\lambda_1^2-\overline{\lambda}_1^2) \overline{\eta_1} \eta_2 \right] \\
& = & -\lambda_1 \overline{u}^{(1)} \eta_1^{(1)}.
\end{eqnarray*}
Thus, we have proven that $\eta^{(1)}$ satisfies the KN spectral problem (\ref{laxeq1}) with the potential
$u^{(1)}$ and the same value $\lambda = \lambda_1$.
\end{proof}

In the construction of Lemmas \ref{BT-lemma1} and \ref{pull-back-eta},
the Jost function $\varphi_-$ was used in the choice for $\eta$.
The following lemma shows that the same potential $u^{(1)}$ can be equivalently
obtained from all four Jost functions of the KN spectral problem (\ref{laxeq1})
if $\lambda_1$ is selected to be a root of the scattering coefficient $a(\lambda)$.

\begin{lem} \label{BT-unique}
Assume that $\lambda_1 \in \mathbb{C}_I$ is chosen so that $a(\lambda_1)=0$. Given a potential
$u \in H^2(\mathbb{R}) \cap H^{1,1}(\mathbb{R})$, it is true that
\begin{subequations}
\begin{eqnarray}
\label{equival1}
u^{(1)}(x) & = & \mathbf{B}_{\lambda_1}(\varphi_-(x;\lambda_1) e^{-i\lambda_1^2 x}) u(x) \\
\label{equival11}
& = & \mathbf{B}_{\lambda_1}(\phi_+(x;\lambda_1)e^{i\lambda_1^2 x}) u(x) \\
\label{equival2}
& = & \mathbf{B}_{\overline{\lambda}_1}(\varphi_+(x;\overline{\lambda}_1)e^{-i \overline{\lambda}_1^2x})u(x) \\
\label{equival22}
& = & \mathbf{B}_{\overline{\lambda}_1}(\phi_-(x;\overline{\lambda}_1)e^{i \overline{\lambda}_1^2x}) u(x),
\end{eqnarray}
\end{subequations}
where the four Jost functions to the KN spectral problem \eqref{laxeq1} are given
in Propositions \ref{KN-existence} and \ref{prop-Jost-KN}.
\end{lem}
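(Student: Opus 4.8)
The plan is to collapse all four right-hand sides of \eqref{equival1}--\eqref{equival22} onto the single expression $\mathbf{B}_{\lambda_1}(\varphi_-(\cdot;\lambda_1)e^{-i\lambda_1^2\cdot})u$, which Lemma \ref{BT-lemma1} has already identified with a well-defined element $u^{(1)}$ of $H^2(\mathbb{R})\cap H^{1,1}(\mathbb{R})$. Only two structural facts are needed. The first is the scale invariance $G_{\lambda}(c\eta)=G_{\lambda}(\eta)$ and $S_{\lambda}(c\eta)=S_{\lambda}(\eta)$ for any scalar-valued $c\not\equiv 0$, i.e. properties \eqref{Gproperty1} and \eqref{Sproperty1} of Appendix \ref{algebra}; consequently $\mathbf{B}_{\lambda}(c\eta)=\mathbf{B}_{\lambda}(\eta)$, so exponential prefactors are irrelevant and any two linearly dependent Jost functions produce the same transformed potential. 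The second is the conjugation identity
\begin{equation*}
\mathbf{B}_{\overline{\lambda}}\big(\sigma_1\sigma_3\overline{\eta}\big)u=\mathbf{B}_{\lambda}(\eta)u,\qquad \eta\in\mathbb{C}^2,\quad \lambda\in\mathbb{C}\setminus\{0\}.
\end{equation*}

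I would prove the conjugation identity first, directly from \eqref{def-d}--\eqref{backlund}. With $\zeta:=\sigma_1\sigma_3\overline{\eta}$, so $\zeta_1=-\overline{\eta}_2$ and $\zeta_2=\overline{\eta}_1$, one computes $d_{\overline{\lambda}}(\zeta,\zeta)=\overline{\lambda}|\eta_2|^2+\lambda|\eta_1|^2=d_{\lambda}(\eta,\eta)$ and likewise $d_{\lambda}(\zeta,\zeta)=d_{\overline{\lambda}}(\eta,\eta)$, hence $G_{\overline{\lambda}}(\zeta)=d_{\lambda}(\zeta,\zeta)/d_{\overline{\lambda}}(\zeta,\zeta)=G_{\lambda}(\eta)$; and since $\zeta_1\overline{\zeta}_2=-\eta_1\overline{\eta}_2$ while $2i(\overline{\lambda}^2-\lambda^2)=-2i(\lambda^2-\overline{\lambda}^2)$, one gets $S_{\overline{\lambda}}(\zeta)=S_{\lambda}(\eta)$. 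The identity follows by inserting these into \eqref{backlund}. (An incidental by-product is $d_{\overline{\lambda}}(\sigma_1\sigma_3\overline{\eta},\sigma_1\sigma_3\overline{\eta})=d_{\lambda}(\eta,\eta)$, which will take care of the nonvanishing of denominators below.)

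Next I would invoke the symmetry \eqref{symmetry-relation}. Conjugating it and using $\overline{e^{-i\lambda^2x}}=e^{i\overline{\lambda}^2x}$ gives
\begin{equation*}
\phi_-(x;\overline{\lambda}_1)e^{i\overline{\lambda}_1^2x}=\sigma_1\sigma_3\,\overline{\varphi_-(x;\lambda_1)e^{-i\lambda_1^2x}},\qquad
\phi_+(x;\lambda_1)e^{i\lambda_1^2x}=\sigma_1\sigma_3\,\overline{\varphi_+(x;\overline{\lambda}_1)e^{-i\overline{\lambda}_1^2x}}.
\end{equation*}
Applying the conjugation identity, the first relation gives \eqref{equival22}$=$\eqref{equival1} and the second gives \eqref{equival11}$=$\eqref{equival2}; these two equalities use only the symmetry of the KN spectral problem, not the hypothesis $a(\lambda_1)=0$. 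The hypothesis $a(\lambda_1)=0$ is used exactly once, through the linear dependence \eqref{norming-coeff}, namely $\varphi_-(x;\lambda_1)e^{-i\lambda_1^2x}=\gamma_1\phi_+(x;\lambda_1)e^{i\lambda_1^2x}$, which by scale invariance gives \eqref{equival1}$=$\eqref{equival11}. Chaining the three equalities yields the full chain \eqref{equival1}$=$\eqref{equival11}$=$\eqref{equival2}$=$\eqref{equival22}. Each of the four transformations is well defined: by the two displayed relations and \eqref{norming-coeff}, each Jost function appearing in \eqref{equival1}--\eqref{equival22} is a nonzero constant multiple of $\eta:=\varphi_-(\cdot;\lambda_1)e^{-i\lambda_1^2\cdot}$ or of $\sigma_1\sigma_3\overline{\eta}$; and $d_{\lambda_1}(\eta,\eta)$ is nonvanishing at every finite $x$ by the opening paragraph of the proof of Lemma \ref{BT-lemma1}, while $d_{\overline{\lambda}_1}(\sigma_1\sigma_3\overline{\eta},\sigma_1\sigma_3\overline{\eta})=d_{\lambda_1}(\eta,\eta)$ by the by-product above, so every denominator in \eqref{def-G-S} stays away from zero.

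The argument is essentially bookkeeping, and there is no genuine analytic obstacle; the one place demanding care is the third step, where the complex conjugation of the spectral parameter and of the exponential factors $e^{\pm i\lambda^2x}$ must be tracked so that the conjugated Jost functions match \eqref{equival2} and \eqref{equival22} exactly.
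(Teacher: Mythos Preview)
Your proof is correct and follows essentially the same approach as the paper: both arguments rest on the scale invariance \eqref{Gproperty1}--\eqref{Sproperty1} together with the norming relation \eqref{norming-coeff} to identify \eqref{equival1} with \eqref{equival11}, and on the $\sigma_1\sigma_3$--conjugation symmetry \eqref{symmetry-relation} (which the paper unpacks via \eqref{Gproperty2}--\eqref{Sproperty2}) to pass to the $\overline{\lambda}_1$ representations. Your organization is in fact slightly more economical, since by applying the conjugation identity to both pairs $(\varphi_-,\phi_-)$ and $(\varphi_+,\phi_+)$ you need the norming relation only once, whereas the paper invokes it a second time (at $\overline{\lambda}_1$) to link \eqref{equival2} and \eqref{equival22}.
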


\begin{proof}
Representation (\ref{equival1}) was defined in Lemma \ref{BT-lemma1}.
If $a(\lambda_1) = 0$, the representation (\ref{equival11}) was also
obtained in Lemma \ref{BT-lemma1}, thanks to the invariance of $G_{\lambda}$ and $S_{\lambda}$
under a multiplication by a nonzero complex number $a$
in properties (\ref{Gproperty1}) and (\ref{Sproperty1}) of Appendix \ref{algebra}
and the relation (\ref{norming-coeff}) between $\varphi_-(x;\lambda_1) e^{-i \lambda_1^2 x}$
and $\phi_+(x;\lambda_1) e^{i \lambda_1^2 x}$.
In order to establish (\ref{equival2}), we use the symmetry relation (\ref{symmetry-relation})
as well as properties \eqref{Gproperty2} and \eqref{Sproperty2} of Appendix \ref{algebra} and obtain
$$
G_{\lambda_1}(\varphi_-(x;\lambda_1)) = G_{\overline{\lambda}_1}(\sigma_1\sigma_3\varphi_-(x;\lambda_1)) =
G_{\overline{\lambda}_1}\left(\overline{\phi_-(x;\overline{\lambda}_-)}\right) = G_{\overline{\lambda}_1}(\phi_-(x;\overline{\lambda}_-))
$$
and
$$
S_{\lambda_1}(\varphi_-(x;\lambda_1))=-\overline{S_{\lambda_1}(\sigma_1\sigma_3\varphi_-(x;\lambda_1))} =
-\overline{S_{\lambda_1}\left(\overline{\phi_-(x;\overline{\lambda}_1)}\right)} = S_{\overline{\lambda}_1}(\phi_-(x;\overline{\lambda}_1)).
$$
The transformation formula (\ref{backlund}) yields (\ref{equival2}).
Finally, the representation (\ref{equival22}) is obtained from the relation
between $\varphi_+(x;\overline{\lambda}_1)e^{-i \overline{\lambda}_1^2 x}$ and
$\phi_-(x;\overline{\lambda}_1) e^{i \overline{\lambda}_1^2 x}$ in the case $a(\lambda_1) = 0$
that corresponds to $\overline{a(\overline{\lambda}_1)} = 0$.
\end{proof}

\subsection{Transformation of Jost functions}

For values of $\lambda \in \C\setminus\{\pm\lambda_1\}$, Jost functions of the KN spectral problem
(\ref{laxeq1})  associated with the new potential $u^{(1)} = \mathbf{B}_{\lambda_1}(\eta)u$
can be constructed from the old Jost functions by using the transformation matrix
\begin{equation}\label{B-def}
M[\eta, \lambda,\lambda_1] :=\frac{\lambda_1}{\overline{\lambda}_1}\frac{1}{\lambda^2-\lambda_1^2}
\left[\begin{matrix} \lambda^2G_{\lambda_1}(\eta)-|\lambda_1|^2 & -\frac{\lambda}{2i} S_{\lambda_1}(\eta) \\
\frac{\lambda}{2i} \overline{S_{\lambda_1}(\eta)} & -\lambda^2\overline{G_{\lambda_1}(\eta)}+|\lambda_1|^2
\end{matrix}\right].
\end{equation}
The following lemma presents the new Jost functions
of the KN spectral problem (\ref{laxeq1}) associated with the new potential $u^{(1)}$.
Since $u^{(1)} \in H^2(\mathbb{R}) \cap H^{1,1}(\mathbb{R})$ by Lemma \ref{BT-lemma1},
the new Jost functions exist according to Proposition \ref{KN-existence}.

\begin{lem}
\label{lem-new-Jost}
Under the same assumption as in Lemma \ref{BT-lemma1},  let us define for
$\lambda \in \C \setminus\{\pm\lambda_1,\pm \overline{\lambda}_1\}$,
\begin{subequations}
\begin{eqnarray}
\label{new-jost-1}
\varphi_-^{(1)}(x;\lambda) & = & M[\varphi_-(x;\lambda_1)e^{-i\lambda_1^2 x},\lambda,\lambda_1]\varphi_-(x;\lambda), \\
\label{new-jost-2}
\varphi_+^{(1)}(x;\lambda) & = & M[\varphi_+(x;\overline{\lambda}_1)e^{-i \overline{\lambda}_1^2x} ,\lambda,\overline{\lambda}_1]\varphi_+(x;\lambda),\\
\label{new-jost-3}
\phi_+^{(1)}(x;\lambda) & = & -M[\phi_+(x;\lambda_1)e^{i\lambda_1^2 x},\lambda,\lambda_1]\phi_+(x;\lambda), \\
\label{new-jost-4}
\phi_-^{(1)}(x;\lambda) & = & -M[\phi_-(x;\overline{\lambda}_1)e^{i \overline{\lambda}_1^2x} ,\lambda,\overline{\lambda}_1]\phi_-(x;\lambda).
\end{eqnarray}
\end{subequations}
Then, $\{\varphi_{\pm}^{(1)}(x;\lambda)e^{-i\lambda^2x}, \phi_{\pm}^{(1)}(x;\lambda)e^{i\lambda^2x}\}$
are Jost functions of the KN spectral problem (\ref{laxeq1})
associated with the potential $u^{(1)} = \mathbf{B}_{\lambda_1}(\eta)u$.
\end{lem}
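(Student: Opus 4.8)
The plan is to reduce Lemma \ref{lem-new-Jost} to two assertions about the transfer matrix $M[\eta,\lambda,\lambda_1]$: that it intertwines the two KN Lax operators, and that it transports the asymptotic normalizations at $\pm\infty$ from the old Jost functions to the new ones. Once these are established, the uniqueness of bounded solutions of \eqref{laxeq1} (Proposition \ref{KN-existence}, applied to $u^{(1)}\in H^2(\mathbb{R})\cap H^{1,1}(\mathbb{R})$ from Lemma \ref{BT-lemma1}), extended to complex $\lambda$ in the relevant quadrants by analyticity (Proposition \ref{prop-Jost-KN}, since the $\lambda$-dependence of $M$ is rational with poles only at $\pm\lambda_1,\pm\overline{\lambda}_1$), forces the functions in \eqref{new-jost-1}--\eqref{new-jost-4} to coincide with the Jost functions of $u^{(1)}$. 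For \eqref{new-jost-1} the associated potential is $u^{(1)}=\mathbf{B}_{\lambda_1}(\eta)u$ by definition; for \eqref{new-jost-2}--\eqref{new-jost-4}, the assumption $a(\lambda_1)=0$ together with Lemma \ref{BT-unique} shows that the B\"{a}cklund transformations built from the other three Jost functions again produce the same $u^{(1)}$.

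\emph{Intertwining.} Fix $\mu\in\{\lambda_1,\overline{\lambda}_1\}$, let $\eta$ be any nonzero solution of $\partial_x\eta=[-i\mu^2\sigma_3+\mu Q(u)]\eta$, and set $v:=\mathbf{B}_{\mu}(\eta)u$. The core step is the identity, valid for every $\lambda\in\C\setminus\{\pm\mu\}$,
\begin{multline*}
\partial_x M[\eta,\lambda,\mu] \\
=\bigl[-i\lambda^2\sigma_3+\lambda Q(v)\bigr]M[\eta,\lambda,\mu]-M[\eta,\lambda,\mu]\bigl[-i\lambda^2\sigma_3+\lambda Q(u)\bigr],
\end{multline*}
which immediately yields that $M[\eta,\lambda,\mu]\psi$ solves \eqref{laxeq1} with potential $v$ whenever $\psi$ solves it with potential $u$ at the value $\lambda$. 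To prove the identity I would clear the denominator $\lambda^2-\mu^2$ to obtain a polynomial identity in $\lambda$ and match coefficients of the powers of $\lambda$; with $v=\mathbf{B}_{\mu}(\eta)u$ substituted, each coefficient identity reduces to the ODE for $\eta$, the derivative formula \eqref{d_lambda derivative} for $d_{\mu}(\eta,\eta)$ together with its analogue for $\eta_1\overline{\eta}_2$, and the algebraic properties of $d_{\mu},G_{\mu},S_{\mu}$ collected in Appendix \ref{algebra}. This is precisely the content of the 2-fold Darboux transformation of \cite{Xu-He-Wang}, and the computation runs parallel to the one already carried out in the proof of Lemma \ref{pull-back-eta}.

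\emph{Boundedness and normalizations.} Since $M[\eta,\lambda,\mu]$ depends on $\eta$ only through $G_{\mu}(\eta)$ and $S_{\mu}(\eta)$, the scale-invariance of these operators and the relation \eqref{norming-coeff} show that the matrix appearing in \eqref{new-jost-3} coincides with the one in \eqref{new-jost-1}, while the symmetry relation \eqref{symmetry-relation} together with properties \eqref{Gproperty2}--\eqref{Sproperty2} of Appendix \ref{algebra} express the matrices in \eqref{new-jost-2} and \eqref{new-jost-4} through the same data $G_{\lambda_1}(\varphi_-),S_{\lambda_1}(\varphi_-)$; hence all four matrices are uniformly bounded on $\mathbb{R}$ by the bound \eqref{d(phi_-,phi_-) estimate} from the proof of Lemma \ref{BT-lemma1} (including the $x\to+\infty$ correction used there when $a(\lambda_1)=0$). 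As the old Jost functions lie in $L^{\infty}(\mathbb{R})$ by Proposition \ref{KN-existence}, every right-hand side of \eqref{new-jost-1}--\eqref{new-jost-4} is bounded in $x$. For the limits, one evaluates $M[\eta,\lambda,\mu]$ as $x\to\pm\infty$: when $\eta\to e_1$ one has $G_{\mu}(\eta)\to\overline{\mu}/\mu$ and $S_{\mu}(\eta)\to0$, and the prefactors $\mu/\overline{\mu}$ and $|\mu|^2$ in \eqref{B-def} are arranged exactly so that the first column of $M$ tends to $e_1$; when $\eta\to e_2$ one has $G_{\mu}(\eta)\to\mu/\overline{\mu}$, $S_{\mu}(\eta)\to0$, and the second column of $M$ tends to $-e_2$, which accounts for the overall minus signs in \eqref{new-jost-3}--\eqref{new-jost-4}. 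Reading this off along the end at which the corresponding old Jost function is normalized gives $\varphi_{\pm}^{(1)}(x;\lambda)\to e_1$ as $x\to\pm\infty$ and $\phi_{\pm}^{(1)}(x;\lambda)\to e_2$ as $x\to\pm\infty$, which is all that is needed to invoke the uniqueness statement above.

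The main obstacle is the intertwining identity: one must verify, without error in the bookkeeping of the powers of $\lambda$, that the definition \eqref{backlund} of $\mathbf{B}_{\lambda_1}(\eta)$ is exactly what turns $M[\eta,\lambda,\lambda_1]$ into a Darboux matrix for \eqref{laxeq1}. A secondary, more routine difficulty is keeping the four cases consistent --- pairing each of \eqref{new-jost-1}--\eqref{new-jost-4} with the correct value $\mu\in\{\lambda_1,\overline{\lambda}_1\}$, the correct end $\pm\infty$, and the correct sign, and ensuring via Lemma \ref{BT-unique} (hence via $a(\lambda_1)=0$) that all four constructions produce the single potential $u^{(1)}$.
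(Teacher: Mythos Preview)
Your approach is essentially the paper's: prove the intertwining relation for $M$, check boundedness, check the boundary limits $M[e_1,\lambda,\lambda_1]e_1=e_1$ and $M[e_2,\lambda,\lambda_1]e_2=-e_2$, and conclude by the uniqueness statement in Proposition~\ref{KN-existence}. The paper writes the intertwining identity entrywise as four scalar ODEs for the $M_{ij}$ and verifies each using the derivative formulas for $d_{\lambda_1}(\eta,\eta)$, $G_{\lambda_1}(\eta)$ and $S_{\lambda_1}(\eta)$; your proposal to clear the denominator $\lambda^2-\mu^2$ and match powers of $\lambda$ is just a reorganization of the same calculation.

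One point to flag. You invoke ``the assumption $a(\lambda_1)=0$'' together with Lemma~\ref{BT-unique} to ensure that \eqref{new-jost-2}--\eqref{new-jost-4} all produce the same potential $u^{(1)}$, but Lemma~\ref{lem-new-Jost} is stated under the hypotheses of Lemma~\ref{BT-lemma1}, which does \emph{not} assume $a(\lambda_1)=0$. Your instinct is correct in substance, though: the intertwining identity applied to \eqref{new-jost-3} yields a solution of \eqref{laxeq1} with potential $\mathbf{B}_{\lambda_1}(\phi_+e^{i\lambda_1^2x})u$, and this coincides with $u^{(1)}=\mathbf{B}_{\lambda_1}(\varphi_-e^{-i\lambda_1^2x})u$ only when $\varphi_-e^{-i\lambda_1^2x}$ and $\phi_+e^{i\lambda_1^2x}$ are proportional, i.e., when $a(\lambda_1)=0$ (and similarly for \eqref{new-jost-2}). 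The paper's proof simply writes ``similar computations are performed'' and does not address this; since the lemma is only applied in the paper under the assumption $a(\lambda_1)=0$, the point is harmless there. Note that for \eqref{new-jost-4} the identification with $u^{(1)}$ is unconditional, coming from the symmetry relation \eqref{symmetry-relation} alone.
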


\begin{proof}
First, we prove that the transformations (\ref{new-jost-1})--(\ref{new-jost-4})
produce solutions of the KN spectral problem associated with the potential $u^{(1)}$.
It is sufficient to consider the first Jost function in (\ref{new-jost-1}).
Therefore we shall verify that
\begin{equation}
\label{eq-varphi-one}
\partial_x(\varphi^{(1)}_-(x;\lambda)e^{-i\lambda^2 x})=\left[ -i\lambda^2\sigma_3+\lambda Q(u^{(1)}) \right] \varphi^{(1)}_-(x;\lambda)e^{-i\lambda^2 x}.
\end{equation}
Denoting entries of $M[\varphi_-(x;\lambda_1)e^{-i\lambda_1^2 x},\lambda,\lambda_1]$ by $M_{ij}$ for $1 \leq i,j \leq 2$
and using the KN spectral problem (\ref{laxeq1}) for $\varphi_-(x;\lambda) e^{-i \lambda^2 x}$,
we obtain the four differential equations:
\begin{subequations}\label{new-jost-proof-1}
\begin{align}
    &\partial_x M_{11}-\lambda\overline{u}M_{12} = \lambda u^{(1)}M_{21}&\label{new-jost-proof-1a} \\
    &\partial_x M_{12}+\lambda u M_{11} = \lambda u^{(1)}M_{22} - 2i\lambda^2M_{12} & \label{new-jost-proof-1b}\\
    &\partial_x M_{21}-\lambda\overline{u}M_{22} = -\lambda\overline{u}^{(1)}M_{11}+2i\lambda^2 M_{21}& \label{new-jost-proof-1c}\\
    &\partial_x M_{22}+\lambda u M_{21} = -\lambda\overline{u}^{(1)}M_{12} &\label{new-jost-proof-1d}
\end{align}
\end{subequations}
By using equation (\ref{d_lambda derivative}), we obtain
\begin{equation*}
%\label{G_lambda_derivative}
    \partial_x G_{\lambda_1}(\eta) = \frac{\lambda_1^2-\overline{\lambda}_1^2} {[d_{\lambda_1}(\eta,\eta)]^2} \left[2i(\lambda_1^2-\overline{\lambda}_1^2) |\eta_1|^2|\eta_2|^2- d_{\lambda_1}(\eta,\eta) \overline{u}\eta_1\overline{\eta}_2- d_{\overline{\lambda}_1}(\eta,\eta) u\overline{\eta}_1\eta_2\right],
\end{equation*}
from which we verify equation (\ref{new-jost-proof-1a}) as follows:
\begin{eqnarray*}
% \nonumber to remove numbering (before each equation)
  \partial_x M_{11}-\lambda\overline{u}M_{12}\!\!\!\! &\!\!=\!\!&\!\!
  \frac{\lambda_1}{\overline{\lambda}_1} \frac{\lambda^2}{\lambda^2-\lambda_1^2}
  \frac{\lambda_1^2-\overline{\lambda}_1^2} {[d_{\lambda_1}(\eta,\eta)]^2}
  \left[2i(\lambda_1^2-\overline{\lambda}_1^2) |\eta_1|^2|\eta_2|^2- d_{\overline{\lambda}_1}(\eta,\eta) u\overline{\eta}_1\eta_2\right]\\ &=& \lambda u^{(1)}M_{21}.
\end{eqnarray*}
The proof of (\ref{new-jost-proof-1d}) is based on the complex conjugate
equation and similar computations.

Equation (\ref{new-jost-proof-1b}) is equivalent to
$$
\partial_x (S_{\lambda_1}(\eta))= -2i( u + u^{(1)}) |\lambda_1|^2.
$$
This equality holds by means of the following two explicit computations:
\begin{equation*}
%\label{S_lambda_derivative}
  \partial_x S_{\lambda_1}(\eta)= \frac{2i(\lambda_1^2-\overline{\lambda}_1^2)} {[d_{\lambda_1}(\eta,\eta)]^2} \left[-u|\lambda_1|^2 (|\eta_1|^4-|\eta_2|^4)-2i|\lambda_1|^2 \eta_1\overline{\eta}_2d_{\overline{\lambda}_1} (\eta,\eta)\right]
\end{equation*}
and
\begin{equation*}
%\label{u-plus-u}
  u^{(1)}+u=\frac{u(\lambda_1^2-\overline{\lambda}_1^2) (|\eta_1|^4-|\eta_2|^4)} {[d_{\lambda_1}(\eta,\eta)]^2}+ \frac{2i(\lambda_1^2-\overline{\lambda}_1^2) \eta_1\overline{\eta}_2d_{\overline{\lambda}_1}(\eta,\eta)} {[d_{\lambda_1}(\eta,\eta)]^2}.
\end{equation*}
Hence, we have proven (\ref{new-jost-proof-1b}). Equation (\ref{new-jost-proof-1c})
is obtained from complex conjugate equations and similar computations.

Thus, the function $\varphi^{(1)}_-(x;\lambda)e^{-i\lambda^2 x}$
satisfies equation (\ref{eq-varphi-one}), that is, it is a solution of the KN spectral problem (\ref{laxeq1})
associated with the potential $u^{(1)} = \mathbf{B}_{\lambda_1}(\eta)u$.
Similar computations are performed for the other functions  $\varphi_+^{(1)}(x;\lambda)e^{-i\lambda^2x}$ and
$\phi_{\pm}^{(1)}(x;\lambda)e^{i\lambda^2x}$
given by (\ref{new-jost-2})--(\ref{new-jost-4}).
Since $G_{\lambda_1}(\eta)$ and $S_{\lambda_1}(\eta)$ are bounded in $x$ for all considered choices for
$\eta$, the functions $\varphi_{\pm}^{(1)}(x;\lambda)$ and $\phi_{\pm}^{(1)}(x;\lambda)$
are bounded functions of $x$ for every $\lambda \in \C \setminus\{\pm\lambda_1,\pm \overline{\lambda}_1\}$.

It is left to check the boundary conditions
(\ref{jost-infinity-original}) in Proposition \ref{KN-existence}.
The boundary conditions (\ref{jost-infinity-original}) follow from
properties (\ref{G-comp}) and (\ref{G-comp-again}) in Appendix \ref{algebra}:
\begin{equation} \label{B-property}
M[e_1,\lambda,\lambda_1]e_1 = e_1,\quad M[e_2,\lambda,\lambda_1]e_2 =-e_2.
\end{equation}
Since $u^{(1)} \in H^2(\mathbb{R}) \cap H^{1,1}(\mathbb{R})$ satisfies the assumption
of Proposition \ref{KN-existence}, the four functions
(\ref{new-jost-1})--(\ref{new-jost-4}) are the unique Jost functions of the KN spectral problem (\ref{laxeq1}) associated with
$u^{(1)}$.
\end{proof}

Since the definitions (\ref{new-jost-1})--(\ref{new-jost-4}) with the transformation
matrix (\ref{B-def}) are singular as $\lambda \to \{ \pm \lambda_1, \pm \overline{\lambda}_1\}$,
we show that the singularity is removable, so that the definitions (\ref{new-jost-1})--(\ref{new-jost-4})
can be extended in the domains of analyticity of the Jost functions $\varphi_{\pm}(x;\lambda)$ and $\phi_{\pm}(x;\lambda)$
according to Proposition \ref{prop-Jost-KN}.

\begin{lem}
\label{lem-new-Jost-analyticity}
Let $\varphi_{\pm}^{(1)}(x;\lambda)$ and $\phi_{\pm}^{(1)}(x;\lambda)$ be defined by
(\ref{new-jost-1})--(\ref{new-jost-4}). Then, $\lambda = \pm \lambda_1$ and $\lambda = \pm \overline{\lambda}_1$
are removable singularities in the corresponding domains of analyticity of
$\varphi^{(1)}_{\pm}(x;\lambda)$ and $\phi^{(1)}_{\pm}(x;\lambda)$.
\end{lem}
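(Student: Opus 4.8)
The plan is to show that, for each fixed $x \in \mathbb{R}$, each of the four functions in (\ref{new-jost-1})--(\ref{new-jost-4}) is a priori meromorphic in the relevant quadrants of the $\lambda$-plane with at most simple poles at $\lambda = \pm\lambda_1$ (for $\varphi_-^{(1)}$, $\phi_+^{(1)}$) or at $\lambda = \pm\overline{\lambda}_1$ (for $\varphi_+^{(1)}$, $\phi_-^{(1)}$), and that the residue at each such point vanishes. The simple-pole structure is immediate: the transformation matrix (\ref{B-def}) has the form $M[\eta,\lambda,\mu] = \frac{\mu}{\overline{\mu}}\,(\lambda^2-\mu^2)^{-1}N_{\mu}(\lambda)$, where $N_{\mu}(\lambda)$ is obtained from (\ref{B-def}) by dropping the scalar prefactor and is, for each fixed $x$, a polynomial of degree $\le 2$ in $\lambda$; the Jost function that it multiplies is analytic at $\pm\mu$ by Proposition \ref{prop-Jost-KN} (recall that $\pm\lambda_1$ lie in the first and third quadrants, where $\varphi_-$ and $\phi_+$ are analytic, while $\pm\overline{\lambda}_1$ lie in the second and fourth quadrants, where $\varphi_+$ and $\phi_-$ are analytic). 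Hence inside the relevant domains the only possible singularities are simple poles at $\lambda = \pm\mu$.

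The key algebraic fact I would isolate is the following: for any $\mu \in \mathbb{C}_I$ and any nonzero $\eta \in \mathbb{C}^2$ with $d_{\mu}(\eta,\eta) \neq 0$, using $|\mu|^2 = \mu\overline{\mu}$ together with $G_{\mu}(\eta) = d_{\overline{\mu}}(\eta,\eta)/d_{\mu}(\eta,\eta)$ and $S_{\mu}(\eta) = 2i(\mu^2-\overline{\mu}^2)\eta_1\overline{\eta}_2/d_{\mu}(\eta,\eta)$, the four entries of $N_{\mu}(\mu)$ collapse to a rank-one matrix
$$
N_{\mu}(\mu) = \mu(\mu^2-\overline{\mu}^2)
\begin{pmatrix} \overline{\eta}_2/d_{\mu}(\eta,\eta) \\ \overline{\eta}_1/d_{\overline{\mu}}(\eta,\eta) \end{pmatrix}
\begin{pmatrix} \eta_2 & -\eta_1 \end{pmatrix},
$$
whose left factor is precisely the vector $\eta^{(1)}$ of Lemma \ref{pull-back-eta} built from $\eta$. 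Since the diagonal of $N_{\mu}(\lambda)$ depends on $\lambda$ only through $\lambda^2$ while the off-diagonal is odd in $\lambda$, it follows that $N_{\mu}(-\mu)$ has the same left factor but row factor $(\eta_2,\ \eta_1)$. Consequently $N_{\mu}(\mu)\eta = 0$ and $N_{\mu}(-\mu)\,\sigma_3\eta = 0$, because $(\eta_2,-\eta_1)(\eta_1,\eta_2)^{t} = 0$ and $(\eta_2,\eta_1)(\eta_1,-\eta_2)^{t} = 0$.

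It then remains to feed this into the residue. For $\varphi_-^{(1)}$, taking $\eta(x) = \varphi_-(x;\lambda_1)e^{-i\lambda_1^2 x}$ (which satisfies $d_{\lambda_1}(\eta,\eta)\neq 0$ for all finite $x$ by the argument in the proof of Lemma \ref{BT-lemma1}), one computes $\Res_{\lambda=\pm\lambda_1}\varphi_-^{(1)}(x;\lambda) = \pm(2\overline{\lambda}_1)^{-1}N_{\lambda_1}(\pm\lambda_1)\varphi_-(x;\pm\lambda_1)$. At $\lambda = \lambda_1$ this vanishes because $\varphi_-(x;\lambda_1) = \eta(x)e^{i\lambda_1^2 x}$ is proportional to $\eta(x)$ and $N_{\lambda_1}(\lambda_1)\eta = 0$; at $\lambda = -\lambda_1$ I would use the $\lambda\mapsto-\lambda$ symmetry $\sigma_3 Q(u)\sigma_3 = -Q(u)$ of the KN spectral problem, which by uniqueness of Jost functions (Proposition \ref{KN-existence}) gives $\varphi_-(x;-\lambda_1) = \sigma_3\varphi_-(x;\lambda_1)$, and then $N_{\lambda_1}(-\lambda_1)\sigma_3\eta = 0$ finishes it. The three remaining functions are handled identically after the substitutions $(\eta,\mu) \mapsto (\phi_+(x;\lambda_1)e^{i\lambda_1^2 x},\lambda_1)$, then $(\varphi_+(x;\overline{\lambda}_1)e^{-i\overline{\lambda}_1^2 x},\overline{\lambda}_1)$, then $(\phi_-(x;\overline{\lambda}_1)e^{i\overline{\lambda}_1^2 x},\overline{\lambda}_1)$: in each case the Jost function being transformed, evaluated at $\mu$, is proportional to the vector $\eta$ used to build $M$, and evaluated at $-\mu$ it is proportional to $\sigma_3$ times that vector (up to an irrelevant overall sign coming from the normalization of the $\phi$'s, since $\sigma_3 e_2 = -e_2$), so the rank-one structure of $N_{\mu}$ at $\pm\mu$ annihilates it.

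The one genuinely computational point — and the only place I expect any real bookkeeping — is verifying the rank-one factorization of $N_{\mu}(\mu)$ claimed above; this reduces to the cancellations $\mu^2\overline{\mu} - |\mu|^2\mu = 0$ and $\mu^3 - |\mu|^2\overline{\mu} = \mu(\mu^2-\overline{\mu}^2)$ in the four entries of the matrix. Everything else (the simple-pole structure, the residue formula, the $\lambda\mapsto-\lambda$ symmetry relations, and the transfer from $\varphi_-^{(1)}$ to the three remaining Jost functions) is routine once this identity is in hand.
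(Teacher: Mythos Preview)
Your argument is correct and amounts to the same cancellation as in the paper, but you package it more cleanly. The paper expands $\varphi^{(1)}_{-,1}(\lambda)$ componentwise, regroups terms into the form $(\lambda^2-\lambda_1^2)^{-1}\big[(\lambda^2-\lambda_1^2)\,(\cdots)+F(\lambda)\big]$, and then checks by hand that $F(\pm\lambda_1)=0$ using the parity $\varphi_{-,1}(\lambda)$ even and $\varphi_{-,2}(\lambda)$ odd (equivalently your $\varphi_-(x;-\lambda)=\sigma_3\varphi_-(x;\lambda)$). Your route replaces this with the structural observation that $N_{\mu}(\mu)$ is rank one with row factor $(\eta_2,-\eta_1)$, which annihilates $\eta$; this is the same algebra, but it makes the mechanism transparent and, as you note, identifies the surviving column as the vector $\eta^{(1)}$ of Lemma~\ref{pull-back-eta}.

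One small slip: at $\lambda=-\mu$ the left factor is not literally the same as at $\lambda=\mu$. Since $N_\mu(-\lambda)=\sigma_3 N_\mu(\lambda)\sigma_3$ (the diagonal is even in $\lambda$, the off-diagonal odd), one has $N_\mu(-\mu)=\sigma_3 N_\mu(\mu)\sigma_3$, so the left factor is $\sigma_3\eta^{(1)}$ rather than $\eta^{(1)}$. This is harmless for your conclusion, because $N_\mu(-\mu)\sigma_3\eta=\sigma_3 N_\mu(\mu)\eta=0$, and in fact this symmetry gives a one-line replacement for your separate verification of the row factor at $-\mu$.
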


\begin{proof}
It is sufficient again to consider the first Jost function $\varphi_-^{(1)}(x;\lambda)$
represented by (\ref{new-jost-1}). By using the notations
$\varphi_-=(\varphi_{-,1}, \varphi_{-,2})^t$ and $\varphi^{(1)}_-=(\varphi^{(1)}_{-,1}, \varphi^{(1)}_{-,2})^t$
for the $2$-vectors and dropping the dependence on $x$, we obtain for $\lambda\in\C_{I} \cup \C_{III} \setminus\{\pm \lambda_1\}$
\begin{eqnarray*}
% \nonumber to remove numbering (before each equation)
  \varphi^{(1)}_{-,1}(\lambda) &=&  \frac{\lambda_1}{\overline{\lambda}_1} \left\{\frac{\left(\lambda^2 d_{\overline{\lambda}_1} (\varphi_-,\varphi_-) -|\lambda_1|^2 d_{\lambda_1} (\varphi_-,\varphi_-)\right) \varphi_{-,1}(\lambda)}{(\lambda^2-\lambda_1^2)\: d_{\lambda_1} (\varphi_-,\varphi_-)}\right. \\
  &&\qquad\qquad\qquad-\left.\frac{\lambda (\lambda_1^2-\overline{\lambda}_1^2) \varphi_{-,1}(\lambda_1) \overline{\varphi_{-,2}(\lambda_1)} \varphi_{-,2}(\lambda)}{(\lambda^2-\lambda_1^2)\: d_{\lambda_1} (\varphi_-,\varphi_-)}\right\}\\
   &=& \frac{\lambda_1}{\overline{\lambda}_1} \frac{(\lambda^2-\lambda_1^2)\overline{\lambda}_1 |\varphi_{-,1}(\lambda_1)|^2\varphi_{-,1}(\lambda) +F(\lambda)}{(\lambda^2-\lambda_1^2)\: d_{\lambda_1} (\varphi_-,\varphi_-)},
\end{eqnarray*}
where
\begin{equation*}
  F(\lambda):=(\lambda^2-\overline{\lambda}_1^2) \lambda_1|\varphi_{-,2}(\lambda_1)|^2 \varphi_{-,1}(\lambda)-\lambda (\lambda_1^2-\overline{\lambda}_1^2) \varphi_{-,1}(\lambda_1) \overline{\varphi_{-,2}(\lambda_1)} \varphi_{-,2}(\lambda).
\end{equation*}
Since $\varphi_{-,1}(\lambda)$ is even in $\lambda$ and $\varphi_{-,2}(\lambda)$ is odd in $\lambda$ \cite{Pelinovsky-Shimabukuro},
we obviously have $F(\lambda_1) = F(-\lambda_1) = 0$. Furthermore, $F$ is analytic in
 $\mathbb{C}_I \cup \mathbb{C}_{III}$ by Proposition \ref{prop-Jost-KN}, hence
 $F(\lambda)=(\lambda^2-\lambda_1^2)\widetilde{F}(\lambda)$, where $\widetilde{F}$ is analytic in $\mathbb{C}_I \cup \mathbb{C}_{III}$.
Thus, we obtain
\begin{equation*}
  \varphi^{(1)}_{-,1}(\lambda)= \frac{\lambda_1}{\overline{\lambda}_1}
  \frac{\overline{\lambda}_1 |\varphi_{-,1}(\lambda_1)|^2 \varphi_{-,1}(\lambda) + \widetilde{F}(\lambda)}{d_{\lambda_1}(\varphi_-,\varphi_-)} ,
\end{equation*}
so that  $\pm \lambda_1$ are removable singularities of $\varphi^{(1)}_{-,1}(\lambda)$. Similar
calculations show that $\pm \lambda_1$ are also removable singularities of $\varphi^{(1)}_{-,2}(\lambda)$.
\end{proof}

\subsection{Transformation of scattering coefficients}

We next transform the scattering coefficients $a(\lambda)$ and $b(\lambda)$ given by (\ref{a})--(\ref{b})
and show that the new potential $u^{(1)}$ belongs to $Z_0 \subset H^2(\mathbb{R}) \cap H^{1,1}(\mathbb{R})$
if the old potential $u$ belongs to $Z_1 \subset H^2(\mathbb{R}) \cap H^{1,1}(\mathbb{R})$
and the value $\lambda_1 \in \C_I$ is chosen to be the root of $a(\lambda)$ in $\C_I$.
The following lemma gives the corresponding result.

\begin{lem} \label{nonzero-a}
Let $u\in Z_1$ and $\lambda_1 \in \C_I$ such that $a(\lambda_1)=0$. Let $\eta(x) = \varphi_-(x;\lambda_1) e^{-i\lambda_1^2 x}$,
where $\varphi_-$ is the Jost function of
the KN spectral problem \eqref{laxeq1} given in Propositions \ref{KN-existence} and \ref{prop-Jost-KN}.
Then, $u^{(1)}=\mathbf{B}_{\lambda_1}(\eta)u$ belongs to $Z_0$.
\end{lem}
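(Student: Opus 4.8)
The plan is to combine Lemma \ref{lem-new-Jost} with the Wronskian formulas (\ref{a})--(\ref{b}), exploit the multiplicativity of Wronskians under the matrix $M$, and track precisely how the factor $1/(\lambda^2-\lambda_1^2)$ in (\ref{B-def}) interacts with the single zero of $a$ at $\lambda_1$. Writing $a^{(1)}$ for the scattering coefficient of $u^{(1)}$, I would start from $a^{(1)}(\lambda) = W\big(\varphi_-^{(1)}(x;\lambda)e^{-i\lambda^2 x}, \phi_+^{(1)}(x;\lambda)e^{i\lambda^2 x}\big)$ and substitute the definitions (\ref{new-jost-1}) and (\ref{new-jost-3}). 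Since $\varphi_-^{(1)} = M[\eta,\lambda,\lambda_1]\varphi_-$ and $\phi_+^{(1)} = -M[\phi_+(\cdot;\lambda_1)e^{i\lambda_1^2\cdot},\lambda,\lambda_1]\phi_+$, and since (by the relation (\ref{norming-coeff}) together with properties (\ref{Gproperty1}),(\ref{Sproperty1}) of Appendix \ref{algebra}) the matrix $M$ built from $\eta$ equals the one built from $\phi_+(\cdot;\lambda_1)e^{i\lambda_1^2\cdot}$, both Jost functions are transformed by the \emph{same} matrix $M[\eta,\lambda,\lambda_1]$. Then the Wronskian identity $W(Mv, Mw) = \det(M)\, W(v,w)$ gives $a^{(1)}(\lambda) = -\det M[\eta,\lambda,\lambda_1]\, a(\lambda)$.

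Next I would compute $\det M[\eta,\lambda,\lambda_1]$ explicitly from (\ref{B-def}). Using the algebraic identities for $G_\lambda$ and $S_\lambda$ in Appendix \ref{algebra} — in particular the relation tying $|S_{\lambda_1}(\eta)|^2$, $G_{\lambda_1}(\eta)$, and $\overline{G_{\lambda_1}(\eta)}$ together — the off-diagonal product $\tfrac{\lambda^2}{4}S_{\lambda_1}(\eta)\overline{S_{\lambda_1}(\eta)}$ should combine with the diagonal product $(\lambda^2 G_{\lambda_1}(\eta)-|\lambda_1|^2)(-\lambda^2\overline{G_{\lambda_1}(\eta)}+|\lambda_1|^2)$ to collapse to something proportional to $(\lambda^2-\lambda_1^2)(\lambda^2-\overline{\lambda}_1^2)$, cancelling one power of $(\lambda^2-\lambda_1^2)$ from the prefactor $\big(\tfrac{\lambda_1}{\overline\lambda_1}\big)^2\tfrac{1}{(\lambda^2-\lambda_1^2)^2}$. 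I expect the outcome to be of the form $\det M[\eta,\lambda,\lambda_1] = \dfrac{\lambda^2-\overline{\lambda}_1^2}{\lambda^2-\lambda_1^2}\cdot(\text{unit factor})$ — i.e. $M$ has a simple pole at $\lambda=\pm\lambda_1$ and a simple zero at $\lambda=\pm\overline{\lambda}_1$ in its determinant. Consequently $a^{(1)}(\lambda) = c\,\dfrac{\lambda^2-\overline{\lambda}_1^2}{\lambda^2-\lambda_1^2}\,a(\lambda)$ for an appropriate constant $c$; since $a$ has a \emph{simple} zero at $\lambda_1$ (hypothesis $u\in Z_1$, Definition \ref{definition-eigenvalue}), the factor $1/(\lambda^2-\lambda_1^2)$ exactly removes it, and the new zero at $\pm\overline{\lambda}_1$ lies in $\C_{II}\cup\C_{IV}$, \emph{outside} the analyticity domain $\{\mathrm{Im}(\lambda^2)>0\}$ of $a^{(1)}$, hence is not a zero of $a^{(1)}$ there.

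I would then verify that $a^{(1)}$ is genuinely analytic and zero-free in $\overline{\mathbb C_I}$: analyticity on $\{\mathrm{Im}(\lambda^2)>0\}$ follows from Lemma \ref{lem-new-Jost-analyticity} (removable singularities at $\pm\lambda_1$) together with Proposition \ref{prop-scattering} applied to $u^{(1)}$ (which is legitimate because $u^{(1)}\in H^2\cap H^{1,1}$ by Lemma \ref{BT-lemma1}, so it satisfies the hypotheses of Proposition \ref{KN-existence}); the limit $a^{(1)}_\infty = e^{\frac{1}{2i}\|u^{(1)}\|_{L^2}^2}\neq 0$ from Proposition \ref{prop-scattering} pins down the constant $c$ and in particular shows $a^{(1)}\not\equiv 0$. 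For the boundary $\lambda\in\mathbb R\cup i\mathbb R$ (no new resonances): on the continuous spectrum $G_{\lambda_1}(\eta)$ and $S_{\lambda_1}(\eta)$ reduce via the formula $|d_{\lambda_1}(\eta,\eta)|\ge\varepsilon_0$ (estimate (\ref{d(phi_-,phi_-) estimate}) from the proof of Lemma \ref{BT-lemma1}, valid since $a(\lambda_1)=0$ means we use the $\phi_+$-representation near $+\infty$), so $M[\eta,\lambda,\lambda_1]$ is a bounded invertible matrix with $\det M$ of modulus one there, giving $|a^{(1)}(\lambda)| = |a(\lambda)|$ up to the ratio $|\lambda^2-\overline\lambda_1^2|/|\lambda^2-\lambda_1^2|=1$ on $\mathbb R\cup i\mathbb R$; since $u\in Z_1$ has no resonances, $a(\lambda)\neq 0$ there, hence $a^{(1)}(\lambda)\neq 0$. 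Combining: $a^{(1)}$ has no zeros anywhere in $\overline{\mathbb C_I}$, so $u^{(1)}\in Z_0$.

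\emph{The main obstacle} I anticipate is the determinant computation: showing cleanly that $\det M[\eta,\lambda,\lambda_1]$ factors as the ratio $(\lambda^2-\overline\lambda_1^2)/(\lambda^2-\lambda_1^2)$ times a $\lambda$-independent unimodular constant requires the precise algebraic identities among $d_\lambda$, $G_\lambda$, $S_\lambda$ from Appendix \ref{algebra} — essentially the identity $\big(d_{\overline{\lambda}_1}(\eta,\eta)\big)\big(d_{\lambda_1}(\eta,\eta)\big) = |\lambda_1|^2(|\eta_1|^2+|\eta_2|^2)^2 + (\lambda_1^2-\overline{\lambda}_1^2)\cdot(\text{something})$ that makes the cross terms telescope — and any sign or conjugation slip here propagates into the final conclusion. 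A secondary subtlety is making sure the Jost functions $\varphi_-^{(1)}$ and $\phi_+^{(1)}$ really are transformed by one and the same matrix $M$ (not by two matrices built from $\eta$ and from $\phi_+(\cdot;\lambda_1)e^{i\lambda_1^2\cdot}$ that merely happen to be proportional sources) — this is exactly where the hypothesis $a(\lambda_1)=0$, via the norming relation (\ref{norming-coeff}) and the scale-invariance of $G_{\lambda_1},S_{\lambda_1}$, is indispensable; without it the Wronskian would not simply pick up $\det M$.
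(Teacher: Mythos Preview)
Your approach is correct and essentially identical to the paper's: you reduce $a^{(1)}(\lambda)$ to $-\det M[\eta,\lambda,\lambda_1]\,a(\lambda)$ via the Wronskian identity, using the norming relation (\ref{norming-coeff}) and the scale-invariance (\ref{Gproperty1})--(\ref{Sproperty1}) to identify the two $M$-matrices --- exactly as the paper does.

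The one place where the paper is more efficient is precisely what you flag as your ``main obstacle'': computing $\det M[\eta,\lambda,\lambda_1]$. You propose to attack it directly via algebraic identities among $G_{\lambda_1}$, $S_{\lambda_1}$, $d_{\lambda_1}$; this does work, but the paper avoids it entirely. Since $a^{(1)}(\lambda)$ and $a(\lambda)$ are both $x$-independent Wronskians, the relation $a^{(1)}(\lambda)=-a(\lambda)\det M[\varphi_-(x;\lambda_1),\lambda,\lambda_1]$ forces $\det M$ to be $x$-independent as well. One may therefore evaluate it at the limit $x\to-\infty$, where $\varphi_-(x;\lambda_1)\to e_1$, $S_{\lambda_1}(e_1)=0$, and $G_{\lambda_1}(e_1)=\overline{\lambda}_1/\lambda_1$. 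The matrix $M[e_1,\lambda,\lambda_1]$ is then diagonal and the determinant is read off from (\ref{G-comp})--(\ref{G-comp-again}) as $-\tfrac{\lambda_1^2}{\overline{\lambda}_1^2}\tfrac{\lambda^2-\overline{\lambda}_1^2}{\lambda^2-\lambda_1^2}$, giving the exact formula
\[
a^{(1)}(\lambda)=a(\lambda)\,\frac{\lambda_1^2}{\overline{\lambda}_1^2}\,\frac{\lambda^2-\overline{\lambda}_1^2}{\lambda^2-\lambda_1^2}
\]
in one line. This also pins down your ``unit factor'' $c$ without recourse to Proposition~\ref{prop-scattering}, and makes your boundary/resonance discussion superfluous: the explicit formula shows directly that $a^{(1)}$ has no zeros in $\overline{\mathbb C_I}$.
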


\begin{proof}
In order to show that $u^{(1)} \in Z_0$, we show that if the only simple zero
$a(\lambda) = W(\varphi_-(\cdot;\lambda), \phi_+(\cdot;\lambda))$
in $\mathbb{C}_I$ is located at $\lambda = \lambda_1$, then
$a^{(1)}(\lambda) := W(\varphi_-^{(1)}(\cdot;\lambda), \phi_+^{(1)}(\cdot;\lambda))$ has no zero in $\C_I$,
where $\varphi_-^{(1)}$ and $\phi_+^{(1)}$ are given by (\ref{new-jost-1}) and (\ref{new-jost-3}) in Lemma \ref{lem-new-Jost}.
This follows from the direct computation as follows:
\begin{subequations}
\begin{eqnarray}
 &a^{(1)}(\lambda)&= W(\varphi_-^{(1)}(x;\lambda), \phi_+^{(1)}(x;\lambda)) \label{a1} \\
&&=W\Big(M[\varphi_-(x;\lambda_1) e^{-i \lambda_1^2 x},\lambda,\lambda_1]\varphi_-(x;\lambda), \label{a2} \\&&\qquad\qquad-M[\phi_+(x;\lambda_1) e^{i \lambda_1^2 x},\lambda,\lambda_1]\phi_+(x;\lambda)\Big)\nonumber
 \\
&&=W\Big(M[\varphi_-(x;\lambda_1),\lambda,\lambda_1] \varphi_-(x;\lambda), \label{a3} \\ &&\qquad\qquad -M[\varphi_-(x;\lambda_1),\lambda,\lambda_1]\phi_+(x;\lambda)\Big) \nonumber \\
&&=-a(\lambda) \, \det\left(M[\varphi_-(x;\lambda_1) ,\lambda,\lambda_1]\right) \label{a4} \\
&&= -a(\lambda) \det\left(M[e_1 ,\lambda,\lambda_1]\right) \label{a5} \\
&&=a(\lambda)\frac{\lambda_1^2}{\overline {\lambda}_1^2}\frac{\lambda^2-\overline{\lambda}_1^2} {\lambda^2-\lambda_1^2},\label{a6}
\end{eqnarray}
\end{subequations}
where we have used (\ref{new-jost-1}) and (\ref{new-jost-3}) to get (\ref{a2}),
(\ref{norming-coeff}), (\ref{Gproperty1}), and (\ref{Sproperty1})
to get (\ref{a3}), (\ref{a}) to get (\ref{a4}),
the limit $x \to -\infty$ to get (\ref{a5}),
and (\ref{Sproperty1}), (\ref{G-comp}) and (\ref{G-comp-again}) to get (\ref{a6}).
Since $\lambda_1$ is the only simple zero of $a(\lambda)$ in $\mathbb{C}_I$,
then $a^{(1)}(\lambda)$ has no zeros for $\lambda$ in $\mathbb{C}_I$.
\end{proof}

\begin{remark}\label{new-b}
For completeness, we also give transformation of $b(\lambda)$ to $b^{(1)}(\lambda)$ as follows:
\begin{align*}
b^{(1)}(\lambda) &=W(e^{-2i \lambda^2 x} \varphi_+^{(1)}(x;\lambda), \varphi_-^{(1)}(x;\lambda)) \\
& = W(e^{-2i \lambda^2 x} \varphi_+^{(1)}(x;\lambda),M[\varphi_-(x;\lambda_1) e^{-i \lambda_1^2 x},\lambda,\lambda_1]\varphi_-(x;\lambda)) \\
& = W\Big(e^{-2i \lambda^2 x} \varphi_+^{(1)}(x;\lambda),\\&\qquad \qquad M[\phi_+(x;\lambda_1) e^{i \lambda_1^2 x},\lambda,\lambda_1] [a(\lambda)\varphi_+(x;\lambda) + e^{i2\lambda^2x} b(\lambda) \phi_+(x;\lambda)]\Big) \\
& = b(\lambda) W(e_1,M[e_2,\lambda,\lambda_1]e_2)\\
&=-b(\lambda),
\end{align*}
where the term with $a(\lambda)$ vanishes in the limit $x \to +\infty$
because $W(e_1,e_1) = 0$ and we have used the following limits as $x \to +\infty$
$$
M[\phi_+(x;\lambda_1),\lambda,\lambda_1]\varphi_+(x;\lambda)
\rightarrow M[e_2,\lambda,\lambda_1] e_1= \frac{\lambda_1^2 (\lambda^2-\overline{\lambda}_1^2)}{\overline{\lambda}_1^2 (\lambda^2 - \lambda_1^2)} e_1
$$
and $\varphi_+^{(1)}(x;\lambda) \rightarrow e_1$.
\end{remark}

By Lemmas \ref{BT-lemma1}, \ref{pull-back-eta}, and \ref{nonzero-a},
we have shown the existence of a sequence of invertible B\"{a}cklund transformations $Z_1\rightarrow Z_0 \rightarrow Z_1$ given by
$$
u \rightarrow \mathbf{B}_{\lambda_1}(\eta) \rightarrow u^{(1)} \rightarrow \mathbf{B}_{\lambda_1}(\eta^{(1)}) \rightarrow u.
$$
Next, we express $\eta^{(1)}$ in Lemma \ref{pull-back-eta}
in terms of the new Jost functions $\varphi_{\pm}^{(1)}$ and $\phi_{\pm}^{(1)}$
associated with $u^{(1)}$ in Lemma \ref{lem-new-Jost}.

\begin{lem}\label{eta-by-new-jost}
Fix $\lambda_1 \in \C_I$ such that $a(\lambda_1)=0$ and $a'(\lambda_1) \neq 0$. Let $\eta$ and $\eta^{(1)}$ be given as in Lemmas
\ref{BT-lemma1} and \ref{pull-back-eta}. Then, $\eta^{(1)}$ is decomposed as
\begin{equation}\label{eta-decomp}
\eta^{(1)}(x) = \frac{1}{\gamma_1\overline{\lambda}_1 a^{(1)}(\lambda_1)}e^{-i\lambda_1^2x}\varphi_-^{(1)}(x;\lambda_1)
+\frac{1}{\overline{\lambda}_1 a^{(1)}(\lambda_1)}e^{i\lambda_1^2x}\phi_+^{(1)}(x;\lambda_1),
\end{equation}
where the new Jost functions $\varphi_-^{(1)}$ and $\phi_+^{(1)}$ are constructed in Lemmas \ref{lem-new-Jost} and \ref{lem-new-Jost-analyticity},
$\gamma_1 \neq 0$ is the norming constant in (\ref{norming-coeff}), and $a^{(1)}(\lambda_1) \neq 0$ as in Lemma \ref{nonzero-a}.
\end{lem}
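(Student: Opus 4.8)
The plan is to recognize that $\eta^{(1)}$ and the two functions $e^{-i\lambda_1^2 x}\varphi_-^{(1)}(x;\lambda_1)$ and $e^{i\lambda_1^2 x}\phi_+^{(1)}(x;\lambda_1)$ all solve one and the same linear system — the KN spectral problem (\ref{laxeq1}) with potential $u^{(1)}$ and spectral parameter $\lambda_1$ — and then to pin down the decomposition by Wronskians, which are constant in $x$ because the coefficient matrix $-i\lambda_1^2\sigma_3+\lambda_1 Q(u^{(1)})$ is traceless. For $\eta^{(1)}$ this is Lemma \ref{pull-back-eta}; for the two Jost solutions it is Lemmas \ref{lem-new-Jost} and \ref{lem-new-Jost-analyticity}, the latter guaranteeing that the value at the removable singularity $\lambda=\lambda_1$ makes sense. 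First I would record that $W(e^{-i\lambda_1^2 x}\varphi_-^{(1)}(x;\lambda_1),\, e^{i\lambda_1^2 x}\phi_+^{(1)}(x;\lambda_1)) = a^{(1)}(\lambda_1)$ — the exponential scalars factor out of the bilinear $W$, so this is exactly the quantity $a^{(1)}$ defined in Lemma \ref{nonzero-a} — and that $a^{(1)}(\lambda_1)\neq 0$; this follows either directly from Lemma \ref{nonzero-a} (since $\lambda_1\in\mathbb{C}_I$), or, making visible the role of the hypothesis $a'(\lambda_1)\neq 0$, by l'Hôpital in the representation (\ref{a6}), which gives $a^{(1)}(\lambda_1) = \lambda_1 a'(\lambda_1)(\lambda_1^2-\overline{\lambda}_1^2)/(2\overline{\lambda}_1^2)$. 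Hence the two Jost solutions form a fundamental system and $\eta^{(1)} = \alpha\, e^{-i\lambda_1^2 x}\varphi_-^{(1)}(\cdot;\lambda_1) + \beta\, e^{i\lambda_1^2 x}\phi_+^{(1)}(\cdot;\lambda_1)$ for some $x$-independent $\alpha,\beta\in\mathbb{C}$.

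To extract $\alpha$, I would pair both sides with $e^{i\lambda_1^2 x}\phi_+^{(1)}(x;\lambda_1)$ under $W$, getting $\alpha\, a^{(1)}(\lambda_1) = W(\eta^{(1)}(x),\, e^{i\lambda_1^2 x}\phi_+^{(1)}(x;\lambda_1))$, and evaluate this $x$-independent number as $x\to+\infty$. Here I would use that $a(\lambda_1)=0$ together with (\ref{norming-coeff}) forces $\eta(x) = \gamma_1\, e^{i\lambda_1^2 x}\phi_+(x;\lambda_1)$ for all $x\in\mathbb{R}$; inserting this into the explicit formula (\ref{new-eig1}) and using the scaling behaviour of $d_{\lambda_1}$ from Appendix \ref{algebra}, the exponential prefactors collapse and one is left with $e^{i\lambda_1^2 x}\eta^{(1)}_1(x) = \overline{\phi_{+,2}(x;\lambda_1)}\,[\gamma_1\, d_{\lambda_1}(\phi_+(x;\lambda_1),\phi_+(x;\lambda_1))]^{-1}$ and an analogous expression for $\eta^{(1)}_2$. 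Since $\phi_+(x;\lambda_1)\to e_2$ and $\phi_+^{(1)}(x;\lambda_1)\to e_2$ as $x\to+\infty$ (the boundary conditions of Proposition \ref{KN-existence}), the first term of the Wronskian tends to $(\gamma_1\overline{\lambda}_1)^{-1}$ and the cross term to $0$, whence $\alpha\, a^{(1)}(\lambda_1) = (\gamma_1\overline{\lambda}_1)^{-1}$.

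To extract $\beta$, I would symmetrically compute $\beta\, a^{(1)}(\lambda_1) = -W(\eta^{(1)}(x),\, e^{-i\lambda_1^2 x}\varphi_-^{(1)}(x;\lambda_1))$ by letting $x\to-\infty$, this time using $\eta(x)=e^{-i\lambda_1^2 x}\varphi_-(x;\lambda_1)$ directly in (\ref{new-eig1}): one gets $e^{-i\lambda_1^2 x}\eta^{(1)}_2(x) = \overline{\varphi_{-,1}(x;\lambda_1)}\,[d_{\overline{\lambda}_1}(\varphi_-(x;\lambda_1),\varphi_-(x;\lambda_1))]^{-1}$, while $e^{-i\lambda_1^2 x}\eta^{(1)}_1(x)$ carries the vanishing factor $\overline{\varphi_{-,2}(x;\lambda_1)}$. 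Using $\varphi_-(x;\lambda_1)\to e_1$ and $\varphi_-^{(1)}(x;\lambda_1)\to e_1$ as $x\to-\infty$, the Wronskian tends to $-\overline{\lambda}_1^{-1}$, so $\beta\, a^{(1)}(\lambda_1) = \overline{\lambda}_1^{-1}$. Combining the two computations gives $\alpha = (\gamma_1\overline{\lambda}_1 a^{(1)}(\lambda_1))^{-1}$ and $\beta = (\overline{\lambda}_1 a^{(1)}(\lambda_1))^{-1}$, which is precisely the claimed decomposition (\ref{eta-decomp}).

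The steps that demand the most care — rather than being a genuine obstacle — are the two limit evaluations: one has to verify that $e^{\pm i\lambda_1^2 x}$ multiplied against the rational expressions (\ref{new-eig1}) really produces finite limits and that the ``wrong'' components and the cross terms genuinely decay, which rests on the exact algebraic form of (\ref{new-eig1}) and on the fact that the analytically continued Jost functions $\varphi_-(\cdot;\lambda_1)$, $\phi_+(\cdot;\lambda_1)$ and their primed counterparts are bounded on $\mathbb{R}$ with the prescribed limits $e_1$ at $-\infty$ and $e_2$ at $+\infty$ even though $\lambda_1\in\mathbb{C}_I$ — a property inherited from the Volterra integral equations behind Propositions \ref{KN-existence} and \ref{prop-Jost-KN}.
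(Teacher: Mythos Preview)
Your proof is correct and follows essentially the same route as the paper: both arguments recognize that $\eta^{(1)}$ and the two Jost solutions solve the same KN system at $\lambda_1$, write $\eta^{(1)}$ as a linear combination, and identify the coefficients via the asymptotics at $x\to\pm\infty$ using the explicit form (\ref{new-eig1}) together with $\eta=\gamma_1\phi_+ e^{i\lambda_1^2 x}$ on the right and $\eta=\varphi_- e^{-i\lambda_1^2 x}$ on the left. The only cosmetic difference is that you extract $\alpha,\beta$ by pairing with the other Jost solution under the (constant) Wronskian, whereas the paper computes the full limits $e^{\mp i\lambda_1^2 x}\eta^{(1)}(x)\to(\gamma_1\overline{\lambda}_1)^{-1}e_1$ and $\overline{\lambda}_1^{-1}e_2$ directly and matches them against $c_1 a^{(1)}(\lambda_1)e_1$ and $c_2 a^{(1)}(\lambda_1)e_2$; the underlying computation and the needed ingredients are identical.
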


\begin{proof}
We use notations $\eta^{(1)}=(\eta^{(1)}_1, \eta^{(1)}_2)^t$ and $\varphi_-=(\varphi_{-,1}, \varphi_{-,2})^t$
for the $2$-vectors. Components of $\eta^{(1)}$ given by (\ref{new-eig1}) are rewritten explicitly by
\begin{equation*} %\label{eta1}
\eta_1^{(1)}(x) = \frac{e^{i\lambda_1^2x} \overline{\varphi_{-,2}(x;\lambda_1)}}{
d_{\lambda_1}(\varphi_-(x;\lambda_1),\varphi_-(x;\lambda_1))}
\end{equation*}
and
\begin{equation*}
\eta_2^{(1)}(x) = \frac{e^{i\lambda_1^2x} \overline{\varphi_{-,1}(x;\lambda_1)}}{
d_{\overline{\lambda}_1}(\varphi_-(x;\lambda_1),\varphi_-(x;\lambda_1))}.
\end{equation*}
Since $\lim\limits_{x\rightarrow -\infty}\varphi_{-}(x;\lambda_1)=e_1$,
we have
\begin{equation} \label{eta-limit1}
\lim_{x\rightarrow -\infty}e^{-i\lambda_1^2x}\eta^{(1)}(x) = \frac{1}{d_{\overline{\lambda}_1}(e_1,e_1)} e_2 =
\frac{1}{\overline{\lambda}_1} e_2.
\end{equation}

By using the relation (\ref{norming-coeff}) with the norming coefficient $\gamma_1$, components of
$\eta^{(1)}$ can be rewritten in the equivalent form:
\begin{equation*} %\label{eta2}
\eta^{(1)}_1(x) = \frac{e^{-i\lambda_1^2x}\overline{\phi_{+,2}(x;\lambda_1)}}{\gamma_1 d_{\lambda_1}(\phi_+(x;\lambda_1),\phi_+(x;\lambda_1))}
\end{equation*}
and
\begin{equation*}
\eta^{(1)}_{2}(x) = \frac{e^{-i\lambda_1^2x} \overline{\phi_{+,1}(x;\lambda_1)}}{\gamma_1 d_{\overline{\lambda}_1}(\phi_+(x;\lambda_1),\phi_+(x;\lambda_1))}.
\end{equation*}
Since $\lim\limits_{x\rightarrow +\infty}\phi_{+}(x;\lambda_1)=e_2$,
we have
\begin{equation} \label{eta-limit2}
\lim_{x\rightarrow +\infty}e^{i\lambda_1^2x}\eta^{(1)}(x) = \frac{1}{\gamma_1 d_{\lambda_1}(e_2,e_2)} e_1 =
\frac{1}{\gamma_1\overline{\lambda}_1}e_1.
\end{equation}

By Lemma \ref{pull-back-eta}, $\eta^{(1)}$ is a solution of the KN spectral problem (\ref{laxeq1})
associated with the new potential $u^{(1)}$ for $\lambda = \lambda_1$. By Lemmas \ref{lem-new-Jost}
and \ref{lem-new-Jost-analyticity}, the two new Jost functions
$\varphi_-^{(1)}(x;\lambda)$ and $\phi_+^{(1)}(x;\lambda)$ are analytic at $\lambda_1$.
Any solution of the second-order system is a linear combination of the two linearly independent
solutions, so that we have
$$
\eta^{(1)}(x) = c_1 \varphi_-^{(1)}(x;\lambda_1) e^{-i \lambda_1^2 x} + c_2 \phi_+^{(1)}(x;\lambda_1) e^{i \lambda_1^2 x},
$$
where $c_1$, $c_2$ are some numerical coefficients.
Thanks to the boundary conditions (\ref{jost-infinity-original}) and the representation (\ref{a}), we obtain the boundary conditions
\begin{equation}
\label{eta-limit3}
\lim_{x\rightarrow -\infty} e^{-i\lambda_1^2x}\eta^{(1)}(x) = c_2 a^{(1)}(\lambda_1) e_2, \quad
\lim_{x\rightarrow +\infty} e^{i\lambda_1^2x}\eta^{(1)}(x) = c_1 a^{(1)}(\lambda_1) e_1,
\end{equation}
where we have recalled that $\lambda_1 \in \C_I$. Since $a^{(1)}(\lambda_1) \neq 0$ by
Lemma \ref{nonzero-a}, $c_1$ and $c_2$ are found uniquely from \eqref{eta-limit1}, \eqref{eta-limit2},
and \eqref{eta-limit3} to yield the decomposition (\ref{eta-decomp}).
\end{proof}

\begin{remark}
Instead of the decomposition (\ref{eta-decomp}), we can write
\begin{equation}
\eta^{(1)}(x) := \varphi^{(1)}_-(x;\lambda_1) e^{-i\lambda_1^2 x}+\gamma_1\,\phi^{(1)}_+(x;\lambda_1) e^{i\lambda_1^2 x}
\end{equation}
because the B\"{a}cklund transformation (\ref{backlund}) is invariant if $\eta^{(1)}$ is multiplied by a nonzero constant.
\end{remark}

\begin{lem}
\label{inverse BT-lemma1}
Under the same conditions as in Lemma \ref{eta-by-new-jost}, for every $u^{(1)} \in H^2(\mathbb{R}) \cap H^{1,1}(\mathbb{R})$
satisfying $\|u^{(1)}\|_{H^2 \cap H^{1,1}}\leq M$ for some $M > 0$, the transformation
$$\mathbf{B}_{\lambda_1}(\eta^{(1)})u^{(1)} \in H^2(\mathbb{R}) \cap H^{1,1}(\mathbb{R})$$ satisfies
\begin{equation}
\label{estimate-key-inverse-backlund}
\| \mathbf{B}_{\lambda_1}(\eta^{(1)}) u^{(1)} \|_{H^2 \cap H^{1,1}} \leq C_M,
\end{equation}
where the constant $C_M$ does not depend on $u^{(1)}$.
\end{lem}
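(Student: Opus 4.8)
The plan is to run the argument of Lemma~\ref{BT-lemma1} backwards, with $u^{(1)}$ in place of $u$ and $\eta^{(1)}$ in place of $\eta$. By Lemma~\ref{pull-back-eta} together with Lemma~\ref{eta-by-new-jost}, $\eta^{(1)}$ solves the KN spectral problem \eqref{laxeq1} with potential $u^{(1)}$ at $\lambda=\lambda_1$, and (up to the rescaling made irrelevant by \eqref{Gproperty1} and \eqref{Sproperty1}) it equals $\eta^{(1)}(x)=\varphi^{(1)}_-(x;\lambda_1)e^{-i\lambda_1^2x}+\gamma_1\,\phi^{(1)}_+(x;\lambda_1)e^{i\lambda_1^2x}$; this is a genuine, hence nowhere vanishing, solution, because its two constituents are linearly independent (their Wronskian being $a^{(1)}(\lambda_1)\neq0$ by Lemma~\ref{nonzero-a}) and each grows exponentially at one of $x=\pm\infty$. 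Writing $G:=G_{\lambda_1}(\eta^{(1)})$, $S:=S_{\lambda_1}(\eta^{(1)})$, one has $\mathbf{B}_{\lambda_1}(\eta^{(1)})u^{(1)}=-G^2u^{(1)}+GS$, so the task is to control $G,S$ and their first two $x$-derivatives in the relevant norms, uniformly over $\|u^{(1)}\|_{H^2\cap H^{1,1}}\le M$.

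The first step is a uniform lower bound $|d_{\lambda_1}(\eta^{(1)}(x),\eta^{(1)}(x))|\ge\varepsilon_0>0$ for all $x\in\mathbb{R}$. Since ${\rm Re}(\lambda_1)>0$ implies $|d_{\lambda_1}(\eta,\eta)|\ge{\rm Re}(\lambda_1)(|\eta_1|^2+|\eta_2|^2)$, it suffices to bound $\|\eta^{(1)}(x)\|$ from below; I would do this from the constancy of the Wronskians $W(\varphi^{(1)}_-(x;\lambda_1)e^{-i\lambda_1^2x},\eta^{(1)}(x))$ and $W(\phi^{(1)}_+(x;\lambda_1)e^{i\lambda_1^2x},\eta^{(1)}(x))$ together with the $u^{(1)}$-independent Jost bounds $\|\varphi^{(1)}_-(\cdot;\lambda_1)\|_{L^\infty}+\|\phi^{(1)}_+(\cdot;\lambda_1)\|_{L^\infty}\le C_M$ of Proposition~\ref{prop-Jost-KN} and Appendix~\ref{appendix-Jost}: the two identities force $\|\eta^{(1)}(x)\|$ to be at least a positive constant (depending on $M,\lambda_1,\gamma_1,a^{(1)}(\lambda_1)$) times $\max(e^{-{\rm Im}(\lambda_1^2)x},e^{{\rm Im}(\lambda_1^2)x})\ge1$. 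With the lower bound in hand, the algebra of Appendix~\ref{algebra} gives $|G|=1$ and $|S|\le|\lambda_1^2-\overline\lambda_1^2|/{\rm Re}(\lambda_1)$ pointwise, while the explicit expressions for $\partial_xG_{\lambda_1}(\eta)$ and $\partial_xS_{\lambda_1}(\eta)$ derived in the proof of Lemma~\ref{lem-new-Jost} (read now with the potential $u^{(1)}$, via \eqref{d_lambda derivative}) yield, after dividing by $d_{\lambda_1}(\eta^{(1)},\eta^{(1)})$ and using the lower bound, $|\partial_xG|+|\partial_xS|\le C(\lambda_1)(1+|u^{(1)}|)$ and, differentiating once more, $|\partial_x^2G|+|\partial_x^2S|\le C(\lambda_1)(1+|u^{(1)}|^2+|\partial_xu^{(1)}|)$.

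The step I expect to be the real obstacle is to show that $S$, $\partial_xS$, $\partial_x^2S$ actually lie in the weighted spaces with norms controlled by $M$; the pointwise bounds above give only boundedness, whereas $\mathbf{B}_{\lambda_1}(\eta^{(1)})u^{(1)}$ must be estimated in $L^{2,1}$. For this I would use that, by the general identity $\partial_xS_{\lambda_1}(\eta)=-2i|\lambda_1|^2\big(v+\mathbf{B}_{\lambda_1}(\eta)v\big)$ --- valid for any $\eta$ solving the KN problem for a potential $v$, as in the proof of Lemma~\ref{lem-new-Jost} --- applied with $v=u^{(1)}$, $\eta=\eta^{(1)}$, and with \eqref{backlund} substituted, $S$ obeys the closed linear ODE $\partial_xS+2i|\lambda_1|^2\,G\,S=-2i|\lambda_1|^2\,u^{(1)}\big(1-G^2\big)$. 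Because $|G|=1$ while $G\to\lambda_1/\overline\lambda_1$ as $x\to-\infty$ and $G\to\overline\lambda_1/\lambda_1$ as $x\to+\infty$, whose imaginary parts equal $\pm{\rm Im}(\lambda_1^2)/|\lambda_1|^2\neq0$, the homogeneous flow of this ODE is exponentially contracting toward both ends of $\mathbb{R}$ outside a bounded interval whose length is governed by $M$ alone --- the weighted Jost bounds of Appendix~\ref{appendix-Jost} localizing the convergence of $G$ to its limits at a rate depending only on $\|u^{(1)}\|_{L^{2,1}}\le M$. Combined with $S\to0$ as $x\to\pm\infty$ (from the Jost asymptotics of $\eta^{(1)}$) and a weighted convolution estimate, this propagates the decay of the inhomogeneity $u^{(1)}(1-G^2)\in L^{2,1}$ to $S$ and gives $\|S\|_{L^{2,1}}\le C_M$; differentiating the ODE and feeding in the pointwise bounds on $\partial_xG,\partial_x^2G$ yields $\|\partial_xS\|_{L^{2,1}}+\|\partial_x^2S\|_{L^2}\le C_M$ in the same way. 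Throughout, $u^{(1)}$ enters the constants only through $\|u^{(1)}\|_{L^1\cap L^\infty}+\|\partial_xu^{(1)}\|_{L^1}\le CM$ and through the fixed spectral data $\lambda_1,\gamma_1,a^{(1)}(\lambda_1)$, which are held fixed in the intended application (where $a^{(1)}$ is conserved along the DNLS flow).

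Finally I would assemble the estimate exactly as in \eqref{estimate-key-backlund}: expanding $\partial_x$ and $\partial_x^2$ of $\mathbf{B}_{\lambda_1}(\eta^{(1)})u^{(1)}=-G^2u^{(1)}+GS$ by the product rule, applying the triangle inequality in $L^{2,1}$ for the function and its first derivative and in $L^2$ for the second derivative, using $|G|=1$, the pointwise bounds on $\partial_xG,\partial_x^2G,S,\partial_xS,\partial_x^2S$, the weighted bounds on $S$ and its derivatives just obtained, and the Sobolev embedding $H^1(\mathbb{R})\hookrightarrow L^\infty(\mathbb{R})$ to absorb the nonlinear powers $|u^{(1)}|^k$ that occur. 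This produces $\|\mathbf{B}_{\lambda_1}(\eta^{(1)})u^{(1)}\|_{H^2\cap H^{1,1}}\le C_M$, with $C_M$ depending on $M$ and the fixed data $\lambda_1,\gamma_1,a^{(1)}(\lambda_1)$ but not otherwise on $u^{(1)}$.
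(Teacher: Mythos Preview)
Your proof is correct but departs from the paper's argument at the heart of the estimate, namely the weighted control of $S_{\lambda_1}(\eta^{(1)})$.

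Your lower bound on $|d_{\lambda_1}(\eta^{(1)},\eta^{(1)})|$ via the two constant Wronskians $W(\varphi^{(1)}_-e^{-i\lambda_1^2x},\eta^{(1)})$ and $W(\phi^{(1)}_+e^{i\lambda_1^2x},\eta^{(1)})$ is precisely the paper's device, only phrased pointwise: the paper writes the Wronskian identity as an upper bound $|a^{(1)}(\lambda_1)|\le\|\phi^{(1)}_+\|_{L^\infty}\,|e^{i\lambda_1^2x}\eta^{(1)}(x)|$ (and the analogous one with $\varphi^{(1)}_-$), and reads off $|d_{\lambda_1}(e^{\pm i\lambda_1^2x}\eta^{(1)},e^{\pm i\lambda_1^2x}\eta^{(1)})|^{-1}\le C_M$. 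The dependence of $C_M$ on $a^{(1)}(\lambda_1)$ and $|\gamma_1|$ that you flagged is present in the paper's argument as well, and is harmless for exactly the reason you give (both are conserved along the DNLS flow).

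Where you diverge is in proving $S\in L^{2,1}$. You derive the scalar ODE $\partial_xS+2i|\lambda_1|^2GS=-2i|\lambda_1|^2u^{(1)}(1-G^2)$ from the identity $\partial_xS=-2i|\lambda_1|^2(u^{(1)}+\mathbf{B}_{\lambda_1}(\eta^{(1)})u^{(1)})$ and then argue that the homogeneous flow is contracting outside an $M$-controlled interval, so a weighted convolution estimate propagates the $L^{2,1}$ decay of the right-hand side to $S$. The paper bypasses this entirely: it expands $\eta^{(1)}_1\overline{\eta^{(1)}_2}$ bilinearly in the Jost pieces $\varphi^{(1)}_-,\phi^{(1)}_+$, rescales the denominator by $|e^{\pm i\lambda_1^2x}|^2$ to make it bounded below, and then invokes the Appendix~B bounds $\|\langle x\rangle\varphi^{(1)}_{-,2}\|_{L^2},\|\langle x\rangle\phi^{(1)}_{+,1}\|_{L^2}\le C_M$ together with the exponential growth of the unscaled denominator on the cross terms. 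This gives $\|S\|_{L^{2,1}}\le C_M$ in one line, and the derivative norms follow the same way. Your route works, but the paper's direct expansion avoids having to quantify the rate at which $G$ approaches its endpoint values and makes the role of the Jost-function regularity (Appendix~B) completely transparent. Conversely, your ODE viewpoint is more robust in that it uses only the abstract spectral-problem structure and the endpoint values of $G$, not the explicit two-term decomposition of $\eta^{(1)}$.
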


\begin{proof}
By the representation (\ref{a}), we have
\begin{eqnarray*}
  |a^{(1)}(\lambda_1)| &=&\left|\left(\varphi^{(1)}_{-,1}(x;\lambda_1) e^{-i\lambda_1^2 x}+\gamma_1\,\phi^{(1)}_{+,1}(x;\lambda_1)e^{i\lambda_1^2 x}\right) \phi^{(1)}_{+,2}(x;\lambda_1)e^{i\lambda_1^2 x}\phantom{\sum}\right.\\
  &&\left.\phantom{\sum} - \left(\varphi^{(1)}_{-,2}(x;\lambda_1)e^{-i\lambda_1^2 x}+ \gamma_1\,\phi^{(1)}_{+,2}(x;\lambda_1)e^{i\lambda_1^2 x}\right) \phi^{(1)}_{+,1}(x;\lambda_1)e^{i\lambda_1^2 x}\right|\\
  &\leq& \|\phi^{(1)}_+(\cdot;\lambda_1)\|_{L^{\infty}} \left( |e^{i\lambda_1^2 x}\eta^{(1)}_1(x)| + |e^{i\lambda_1^2 x}\eta^{(1)}_2(x)| \right).
\end{eqnarray*}
Since $a^{(1)}(\lambda_1)\neq 0$ by Lemma \ref{nonzero-a}
and $|d_{\lambda}(\eta,\eta)| \geq |{\rm Re}(\lambda_1)| (|\eta_1|^2 + |\eta_2|^2)$,
it follows from (\ref{KN L^infty estimate}) that
there is a constant $C_M > 0$ independently of $u^{(1)}$ such that
\begin{equation*}
  \frac{1}{|d_{\lambda_1}(e^{i\lambda_1^2 x}\eta^{(1)}(x),e^{i\lambda_1^2 x}\eta^{(1)}(x))|}\leq
  C_M \quad \text{for all }x\in\R.
\end{equation*}
By using the same argument, we also obtain
\begin{equation*}
  |a^{(1)}(\lambda_1)|\leq |\gamma_1|^{-1} \|\varphi^{(1)}_-(\cdot;\lambda_1) \|_{L^{\infty}}
  \left( |e^{-i\lambda_1^2 x}\eta_1^{(1)}(x)| + |e^{-i\lambda_1^2 x}\eta_2^{(1)}(x)|\right),
\end{equation*}
such that
\begin{equation*}
  \frac{1}{|d_{\lambda_1}(e^{-i\lambda_1^2 x}\eta^{(1)}(x),e^{-i\lambda_1^2 x}\eta^{(1)}(x))|}\leq
  C_M\quad \text{for all }x\in\R.
\end{equation*}
As a consequence, by using the bound
\begin{eqnarray*}
% \nonumber to remove numbering (before each equation)
  \left|\frac{\eta^{(1)}_1\overline{\eta}^{(1)}_2} {d_{\lambda_1}(\eta^{(1)},\eta^{(1)})}\right| &\leq& \frac{|\varphi^{(1)}_{-,1}(x;\lambda_1) \overline{\varphi^{(1)}_{-,2} (x;\lambda_1)}|} {|d_{\lambda_1}(e^{i \lambda_1^2 x}\eta^{(1)},e^{i \lambda_1^2 x}\eta^{(1)})|}+|\gamma_1|^2\frac{|\phi^{(1)}_{-,1}(x;\lambda_1) \overline{\phi^{(1)}_{-,2} (x;\lambda_1)}|} {|d_{\lambda_1}(e^{-i \lambda_1^2 x}\eta^{(1)},e^{-i \lambda_1^2 x}\eta^{(1)})|} \\
   && \qquad+|\gamma_1|\frac{|\varphi^{(1)}_{-,1}(x;\lambda_1) \overline{\phi^{(1)}_{-,2} (x;\lambda_1)}|+|\phi^{(1)}_{-,1}(x;\lambda_1) \overline{\varphi^{(1)}_{-,2} (x;\lambda_1)}|}{|d_{\lambda_1}(\eta^{(1)},\eta^{(1)})|},
\end{eqnarray*}
and the bounds (\ref{norm1})--(\ref{norm2}) of Appendix B, we obtain
$$
\|S_{\lambda_1}(\eta^{(1)})u^{(1)}\|_{L^{2,1}}\leq C_M.
$$
Similar to the proof of Lemma \ref{BT-lemma1}, this implies by the triangle inequality that
$$
\| \mathbf{B}_{\lambda_1}(\eta^{(1)}) u^{(1)} \|_{L^{2,1}} \leq C_M.
$$
The norms $\| \partial_x(\mathbf{B}_{\lambda_1}(\eta^{(1)}) u^{(1)} )\|_{L^{2,1}}$ and
$\| \partial_x^2(\mathbf{B}_{\lambda_1}(\eta^{(1)}) u^{(1)} )\|_{L^{2}} $ can be estimated
similarly with the use of estimates (\ref{norm1})--(\ref{norm2}) of Appendix B, so that
the proof of the bound (\ref{estimate-key-inverse-backlund}) is complete.
\end{proof}

\section{Time evolution of the B\"{a}cklund transformation}

Here we will prove property (iv) claimed in Section 3. In other words,
extending the Jost function $\varphi_-(t,x;\lambda)$ to be time-dependent
according to the linear system (\ref{laxeq1}) and (\ref{laxeq2}),
we will prove the following lemma, which is a time-dependent analogue
of Lemma \ref{BT-lemma1}.

\begin{lem}\label{same_solutions}
Fix $\lambda_1 \in \mathbb{C}_I$. Given a local solution $u(t,\cdot) \in H^2(\mathbb{R}) \cap H^{1,1}(\mathbb{R})$,
$t \in (-T,T)$ to the Cauchy problem (\ref{dnls}) for some $T > 0$,
define $$\eta(t,x) := \varphi_-(t,x;\lambda_1) e^{-i(\lambda_1^2 x+2 \lambda_1^4 t)},$$
where $\varphi_-$ is the Jost function of the linear system \eqref{laxeq1} and \eqref{laxeq2}.
Then, $u^{(1)}(t,\cdot) = \mathbf{B}_{\lambda_1}(\eta(t,\cdot))u(t,\cdot)$
belongs to $H^2(\mathbb{R}) \cap H^{1,1}(\mathbb{R})$ for every $t \in [0,T)$
and satisfies the Cauchy problem (\ref{dnls}) for $u^{(1)}(0,\cdot) = \mathbf{B}_{\lambda_1}(\eta(0,\cdot))u(0,\cdot)$.
\end{lem}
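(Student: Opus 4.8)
The plan is to treat the two claims of the lemma --- membership $u^{(1)}(t,\cdot)\in H^2(\mathbb{R})\cap H^{1,1}(\mathbb{R})$ at each frozen time $t$, and the DNLS equation for $u^{(1)}$ --- separately. The first claim I would reduce to Lemma~\ref{BT-lemma1}: for fixed $t\in(-T,T)$ the function $x\mapsto\eta(t,x)$ is $e^{-2i\lambda_1^4 t}$ times the function $x\mapsto\varphi_-(t,x;\lambda_1)e^{-i\lambda_1^2 x}$ that appears in Lemma~\ref{BT-lemma1} for the potential $u(t,\cdot)$. Since $G_{\lambda_1}$ and $S_{\lambda_1}$ are invariant under scaling of their argument by a nonzero complex number (properties~(\ref{Gproperty1}) and~(\ref{Sproperty1}) of Appendix~\ref{algebra}), this constant factor drops out and $\mathbf{B}_{\lambda_1}(\eta(t,\cdot))u(t,\cdot)=\mathbf{B}_{\lambda_1}(\varphi_-(t,\cdot;\lambda_1)e^{-i\lambda_1^2\cdot})u(t,\cdot)$; Lemma~\ref{BT-lemma1} then gives $u^{(1)}(t,\cdot)\in H^2(\mathbb{R})\cap H^{1,1}(\mathbb{R})$. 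The bound obtained there depends only on $\|u(t,\cdot)\|_{H^2\cap H^{1,1}}$ by Proposition~\ref{prop-Jost-KN} and the estimates in the proof of Lemma~\ref{BT-lemma1}, hence is locally uniform in $t$; continuity and differentiability of $t\mapsto u^{(1)}(t,\cdot)$ follow from the corresponding properties of $t\mapsto u(t,\cdot)$ and $t\mapsto\varphi_-(t,\cdot;\lambda_1)$ together with the continuity of the algebraic map $\mathbf{B}_{\lambda_1}$.

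For the DNLS equation the key preliminary step --- and the main obstacle --- is to show that $\eta(t,x)$ is a simultaneous solution of the full linear system~(\ref{laxeq1})--(\ref{laxeq2}) with $\lambda=\lambda_1$, not merely of the spatial problem~(\ref{laxeq1}). Write $\mathcal{L}(u):=-i\lambda_1^2\sigma_3+\lambda_1 Q(u)$, let $\mathcal{T}(u)$ be the matrix multiplying $\psi$ on the right-hand side of~(\ref{laxeq2}) evaluated at $\lambda=\lambda_1$, and set $w:=\partial_t\eta-\mathcal{T}(u)\eta$. Using $\partial_x\eta=\mathcal{L}(u)\eta$, the commutation $\partial_x\partial_t\eta=\partial_t\partial_x\eta$ (legitimate by the regularity of $u$ from the local well-posedness theory and of $\varphi_-$ from Appendix~\ref{appendix-Jost}), and the zero-curvature identity $\partial_t\mathcal{L}(u)-\partial_x\mathcal{T}(u)+[\mathcal{L}(u),\mathcal{T}(u)]=0$ --- which is exactly the DNLS equation for $u$ --- one finds that $w$ satisfies the homogeneous spatial equation $\partial_x w=\mathcal{L}(u)w$. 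As $x\to-\infty$ one has $Q(u),Q(u_x)\to 0$ and $\eta(t,x)\to e_1\,e^{-i(\lambda_1^2 x+2\lambda_1^4 t)}$, and since $\sigma_3 e_1=e_1$ the inserted factor $e^{-2i\lambda_1^4 t}$ is precisely the one that makes the leading terms of $\partial_t\eta$ and of $\mathcal{T}(u)\eta$ cancel, so $w\to 0$; uniqueness of decaying solutions of~(\ref{laxeq1}) then forces $w\equiv 0$, i.e. $\eta$ solves~(\ref{laxeq2}) as well.

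With $\eta$ solving the whole pair I would conclude via the Darboux structure of Lemma~\ref{lem-new-Jost}. That lemma already shows $M[\eta,\lambda,\lambda_1]$ intertwines the spatial operator $\partial_x+i\lambda^2\sigma_3-\lambda Q(u)$ with $\partial_x+i\lambda^2\sigma_3-\lambda Q(u^{(1)})$; differentiating the entries of $M[\eta,\lambda,\lambda_1]$ in $t$ and using that $\eta$ solves~(\ref{laxeq2}) gives, by the analogous long but routine computation, that the same matrix also intertwines the time operator of~(\ref{laxeq2}) for the potential $u$ with that for $u^{(1)}$, the conjugated coefficients appearing exactly in the $u^{(1)}$-form (including the $|u^{(1)}|^2$ and $\partial_x u^{(1)}$ terms). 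Hence $\varphi_-^{(1)}(t,x;\lambda)e^{-i(\lambda^2 x+2\lambda^4 t)}$ solves both~(\ref{laxeq1}) and~(\ref{laxeq2}) with potential $u^{(1)}$, and the compatibility condition $\partial_t\partial_x\psi=\partial_x\partial_t\psi$ of that pair is precisely $iu^{(1)}_t+u^{(1)}_{xx}+i(|u^{(1)}|^2 u^{(1)})_x=0$; combined with $u^{(1)}(0,\cdot)=\mathbf{B}_{\lambda_1}(\eta(0,\cdot))u(0,\cdot)$ and the continuity in $t$ noted above, this is the assertion of the lemma. An alternative to the Darboux route, sidestepping the need to track the form of the conjugated time operator, is to substitute the closed formula~(\ref{backlund}) for $u^{(1)}$ together with the PDEs for $u$ (DNLS) and for $\eta$ (the two Lax equations) directly into $iu^{(1)}_t+u^{(1)}_{xx}+i(|u^{(1)}|^2 u^{(1)})_x$ and check cancellation --- the formal identity of~\cite{Xu-He-Wang} --- the only new input being that these manipulations are legitimate because $u(t,\cdot)\in H^2(\mathbb{R})\cap H^{1,1}(\mathbb{R})$ is a genuine solution and $\eta$ is $C^2$ in $x$ and $C^1$ in $t$.
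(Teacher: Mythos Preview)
Your treatment of the first claim --- reducing membership in $H^2\cap H^{1,1}$ at each frozen time to Lemma~\ref{BT-lemma1} via the scaling invariance of $G_{\lambda_1}$ and $S_{\lambda_1}$ --- is correct and matches the paper.

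For the second claim you take a genuinely different route. You propose the direct Darboux verification: show that $\eta$ solves both Lax equations via the zero-curvature identity, then check that the gauge matrix $M[\eta,\lambda,\lambda_1]$ intertwines not only the spatial but also the temporal Lax operators, so that the transformed Jost functions solve the full pair~(\ref{laxeq1})--(\ref{laxeq2}) with potential $u^{(1)}$, whence compatibility gives the DNLS equation for $u^{(1)}$. The paper explicitly names this strategy (``One way to prove Lemma~\ref{same_solutions} is to show that the time-dependent versions of the transformations (\ref{new-jost-1})--(\ref{new-jost-4}) satisfy the time evolution equation~(\ref{laxeq2})\ldots'') but discards it as ``straightforward but enormously lengthy.'' Instead the paper argues indirectly, restricting to $u\in Z_1$ with $a(\lambda_1)=0$: it lets $\widetilde{u}$ be the DNLS solution in $Z_0$ with initial datum $u^{(1)}(0,\cdot)$, proves in Lemma~\ref{same scattering data} that $\widetilde{u}(t,\cdot)$ and $u^{(1)}(t,\cdot)$ have the same reflection coefficient for each $t$ (since the B\"acklund transformation acts on $r$ by the fixed multiplier~(\ref{transformation-r}) and $r$ evolves by $e^{4i\lambda^4 t}$), and then invokes bijectivity of the inverse scattering map on $Z_0$ from~\cite{Perry,Pelinovsky-Shimabukuro} to conclude $u^{(1)}=\widetilde{u}$. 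Your route is more elementary and self-contained, and would prove the lemma in the full generality of its statement (arbitrary $\lambda_1\in\mathbb{C}_I$, not only eigenvalues); the paper's route avoids the long algebra but leans on the full IST machinery and literally covers only the case needed downstream.

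One small gap in your argument for $w\equiv 0$: the conclusion ``$w\to 0$ as $x\to-\infty$, so uniqueness of decaying solutions forces $w\equiv 0$'' is not sufficient as stated, because $\varphi_-(x;\lambda_1)e^{-i\lambda_1^2 x}$ itself decays as $x\to-\infty$ and is a nontrivial solution of~(\ref{laxeq1}). What you need is that $w e^{i\lambda_1^2 x}$ is bounded and tends to $0$ as $x\to-\infty$; this follows from your cancellation of leading terms (which in fact gives $\partial_t\varphi_- - \big(\mathcal{T}(u)+2i\lambda_1^4\big)\varphi_-\to 0$), and then the uniqueness in Proposition~\ref{KN-existence} for the normalized system yields $w\equiv 0$.
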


One way to prove Lemma \ref{same_solutions} is to show that
the time-dependent versions of the transformations (\ref{new-jost-1})--(\ref{new-jost-4})
satisfy the time evolution equation (\ref{laxeq2}) associated with the
potential $u^{(1)}(t,\cdot) = \mathbf{B}_{\lambda_1}(\eta(t,\cdot)) u(t,\cdot)$.
By compatibility of the linear system
(\ref{laxeq1}) and (\ref{laxeq2}) as well as smoothness of the new Jost functions and the new potential $u^{(1)}$,
it then follows that $u^{(1)}(t,x)$ is a new solution of the DNLS equation $i u_t + u_{xx} + i (|u|^2 u)_x = 0$.

However, the proof of the above claim is straightforward but enormously lengthy.
Therefore we will avoid the technical proof and instead make use of the
inverse scattering transform for the soliton-free solutions to the Cauchy problem (\ref{dnls}),
which was developed in
the recent works \cite{Perry,Pelinovsky-Shimabukuro}. We explain the idea
for the case of one soliton and then extend the argument to the case of finitely many solitons.

Let $u(t,\cdot) \in Z_1 \subset H^2(\mathbb{R}) \cap H^{1,1}(\mathbb{R})$ be a local solution of
the Cauchy problem (\ref{dnls}) on $(-T,T)$ for some $T > 0$. For every fixed time $t \in (-T,T)$
we find a new potential of the KN spectral problem (\ref{laxeq1}) by means of the
B\"{a}cklund transformation $u^{(1)}(t,\cdot) = \mathbf{B}_{\lambda_1}(\eta(t,\cdot)) u(t,\cdot)$.
If $\lambda_1 \in \C_I$ is taken such that $a(\lambda_1) = 0$, then $u^{(1)}(t,\cdot) \in Z_0$.
On the other hand, let $\widetilde{u}(t,\cdot) \in Z_0$ be a solution to the Cauchy problem
(\ref{dnls}) starting with the initial condition $\widetilde{u}(0,\cdot) = u^{(1)}(0,\cdot) \in Z_0$.
Since assumptions of \cite[Theorem 1.1]{Pelinovsky-Shimabukuro} are satisfied,
the solution $\widetilde{u}(t,\cdot) \in Z_0$ exists for every $t \in \mathbb{R}$,
in particular, for $t \in (-T,T)$. The following diagram illustrates the scheme.
\begin{equation*}
\begin{xy}\xymatrixcolsep{3pc}
  \xymatrix{
    u(0,\cdot)\in Z_1\ar@{->}[d]_{\text{DNLS}} \ar@{->}[r]&  u^{(1)}(0,\cdot)\in Z_0 \ar@{->}[rd]^{\text{DNLS}}& &\\
    u(t,\cdot)\in Z_1  \ar@{->}[r] & u^{(1)}(t,\cdot)\in Z_0\ar@{<->}[r]^{?}&\widetilde{u}(t,\cdot)\in Z_0, & \!\!\!\!\!\!\!\!t \in (-T,T).
 }
\end{xy}
\end{equation*}
Thus, the proof of Lemma \ref{same_solutions} in the case of one soliton will rely on the
proof that $\widetilde{u}(t,\cdot) = u^{(1)}(t,\cdot)$ for every $t \in (-T,T)$.
To show this, we will first prove that the two functions have the same scattering data.

\begin{lem}\label{same scattering data}
  For every $t \in (-T,T)$, the potentials $\widetilde{u}(t,\cdot)$ and $u^{(1)}(t,\cdot)$ produce the same scattering data.
\end{lem}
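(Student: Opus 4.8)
The eventual goal is the identity $\widetilde u(t,\cdot) = u^{(1)}(t,\cdot)$ on $(-T,T)$, and the present step is to match the scattering data; the plan is as follows. Both $u^{(1)}(t,\cdot)$ and $\widetilde u(t,\cdot)$ lie in $Z_0$ for every $t\in(-T,T)$ --- the former by Lemma \ref{nonzero-a} applied at each fixed $t$ (note that $\lambda_1$ stays an eigenvalue of \eqref{laxeq1} for $u(t,\cdot)$ at all times, the discrete spectrum being invariant under the DNLS flow, so $u(t,\cdot)$ remains in $Z_1$ with $a(t,\lambda_1)=0$), the latter by the global result of \cite{Pelinovsky-Shimabukuro}. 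Hence for both potentials the scattering data reduces to the coefficients $a(\lambda)$ and $b(\lambda)$ on the continuous spectrum $\Sigma := \mathbb{R}\cup i\mathbb{R}$ --- no discrete eigenvalues and no resonances being present --- and at $t=0$ these agree because $\widetilde u(0,\cdot) = u^{(1)}(0,\cdot)$ by construction. It therefore suffices to show that the two sets of scattering coefficients obey the same evolution law in $t$.

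For a potential $v(t,\cdot)$ solving the DNLS equation I will use the standard time evolution of the scattering coefficients $a_v(t,\lambda)$, $b_v(t,\lambda)$ defined through \eqref{a}--\eqref{b}, namely
\begin{equation*}
a_v(t,\lambda) = a_v(0,\lambda), \qquad b_v(t,\lambda) = b_v(0,\lambda)\, e^{4 i \lambda^4 t}, \qquad \lambda \in \Sigma .
\end{equation*}
This is obtained in the usual way: one attaches to each Jost function the scalar time factor that makes it solve \eqref{laxeq2} as well --- $e^{-2i\lambda^4 t}$ for the columns normalized to $e_1$ and $e^{+2i\lambda^4 t}$ for those normalized to $e_2$, as dictated by the $x\to\pm\infty$ limit $-2i\lambda^4\sigma_3$ of the matrix in \eqref{laxeq2} --- and then uses that the Wronskian of two functions solving both \eqref{laxeq1} and \eqref{laxeq2} is independent of $x$ and of $t$, since the time-evolution matrix in \eqref{laxeq2} is trace-free. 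Applying this with $v = \widetilde u$ gives the evolution of $a_{\widetilde u}$ and $b_{\widetilde u}$.

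To treat $u^{(1)}$, I first note that $\eta(t,x) = \varphi_-(t,x;\lambda_1) e^{-i(\lambda_1^2 x + 2\lambda_1^4 t)}$ differs from $\varphi_-(t,\cdot;\lambda_1) e^{-i\lambda_1^2 x}$ only by the nonzero scalar $e^{-2i\lambda_1^4 t}$; since $G_{\lambda_1}$ and $S_{\lambda_1}$ are invariant under multiplication of their argument by a nonzero constant (properties \eqref{Gproperty1} and \eqref{Sproperty1}), we have, for each fixed $t$, $u^{(1)}(t,\cdot) = \mathbf{B}_{\lambda_1}\big(\varphi_-(t,\cdot;\lambda_1) e^{-i\lambda_1^2 x}\big)\, u(t,\cdot)$. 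Thus Lemma \ref{nonzero-a} (together with the computation in its proof) and Remark \ref{new-b} apply at every $t\in(-T,T)$ and yield
\begin{equation*}
a^{(1)}(t,\lambda) = a(t,\lambda)\, \frac{\lambda_1^2}{\overline{\lambda}_1^2}\, \frac{\lambda^2 - \overline{\lambda}_1^2}{\lambda^2 - \lambda_1^2}, \qquad b^{(1)}(t,\lambda) = -\, b(t,\lambda), \qquad \lambda \in \Sigma,
\end{equation*}
where $a^{(1)},b^{(1)}$ are the scattering coefficients of $u^{(1)}(t,\cdot)$ --- the right objects, since by Lemma \ref{lem-new-Jost} the functions appearing there are the unique Jost functions of $u^{(1)}(t,\cdot)$. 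Applying the evolution law of the previous paragraph with $v = u(t,\cdot)$, which is a classical solution of the DNLS equation on $(-T,T)$, gives $a(t,\lambda) = a(0,\lambda)$ and $b(t,\lambda) = b(0,\lambda) e^{4i\lambda^4 t}$; substituting, we obtain $a^{(1)}(t,\lambda) = a^{(1)}(0,\lambda)$ and $b^{(1)}(t,\lambda) = b^{(1)}(0,\lambda) e^{4i\lambda^4 t}$. Hence the data of $u^{(1)}(t,\cdot)$ follow exactly the DNLS time evolution of the data of $u^{(1)}(0,\cdot)$; comparing with the law for $\widetilde u$ and invoking $\widetilde u(0,\cdot) = u^{(1)}(0,\cdot)$, we conclude that $a^{(1)}(t,\cdot) = a_{\widetilde u}(t,\cdot)$ and $b^{(1)}(t,\cdot) = b_{\widetilde u}(t,\cdot)$ on $\Sigma$ for all $t\in(-T,T)$.

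The step I expect to be the main obstacle is the rigorous justification of the Lax-pair time-evolution formulas for $u(t,\cdot)$: unlike $\widetilde u$, it is only a local solution in $Z_1$, so one must check that its time-dependent Jost functions $\varphi_\pm(t,\cdot;\lambda)$ and $\phi_\pm(t,\cdot;\lambda)$ are differentiable in $t$ and satisfy \eqref{laxeq2} after the appropriate renormalization --- this is where the regularity afforded by $u(t,\cdot)\in H^2(\mathbb{R})\cap H^{1,1}(\mathbb{R})$ and the estimates of Appendix \ref{appendix-Jost} enter --- together with the routine but error-prone bookkeeping of the exponential factors $e^{\pm 2i\lambda^4 t}$ attached to the four Jost functions.
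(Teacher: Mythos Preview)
Your proposal is correct and follows essentially the same route as the paper's proof: both identify that the potentials lie in $Z_0$ so the scattering data reduce to continuous-spectrum quantities, both invoke Lemma~\ref{nonzero-a} and Remark~\ref{new-b} to relate the data of $u^{(1)}$ to those of $u$, and both conclude by applying the standard Lax-pair time evolution (the paper citing \cite[Eq.~(5.2)]{Pelinovsky-Shimabukuro} and noting the same argument works for $u\in Z_1$). The only cosmetic difference is that the paper phrases everything in terms of the reflection coefficient $r=b/a$, whereas you track $a$ and $b$ separately; since $a$ is time-independent these are equivalent, and your observation about the scalar factor $e^{-2i\lambda_1^4 t}$ being absorbed via \eqref{Gproperty1}--\eqref{Sproperty1} is a useful clarification that the paper leaves implicit.
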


\begin{proof}
  We know that both functions $\widetilde{u}(t,\cdot)$ and $u^{(1)}(t,\cdot)$
  remain in $Z_0$ for every $t \in (-T,T)$. Hence the scattering data consist
  only of the reflection coefficient which is introduced in \cite{Pelinovsky-Shimabukuro}.
  For the potential $u(t,\cdot)\in Z_1$ with $t \in (-T,T)$, we have $r(t,\lambda)=b(t,\lambda)/a(t,\lambda)$
  for $\lambda\in \R\cup i\R$. Let us denote by $r^{(1)}(t,\lambda)=b^{(1)}(t,\lambda)/a^{(1)}(t,\lambda)$ the reflection coefficient
  of $u^{(1)}(t,\cdot)\in Z_0$ for $t \in (-T,T)$. Lemma \ref{nonzero-a} and Remark \ref{new-b} tell us how
  the old and the new reflection coefficient are connected:
  \begin{equation}
  \label{transformation-r}
    r^{(1)}(t,\lambda)=-r(t,\lambda)\frac{\lambda_1^2}{\overline{\lambda}_1^2} \frac{\lambda^2-\overline{\lambda}_1^2}{\lambda^2-\lambda_1^2},
    \quad \lambda \in \mathbb{R} \cup i \mathbb{R}, \quad t \in (-T,T).
  \end{equation}
  When $u(t,x)$ is a solution to the DNLS equation in $Z_1$, the time evolution of the reflection coefficient $r(t,\lambda)$
  is given by
  \begin{equation}
  \label{time-r}
    r(t,\lambda)=r(0,\lambda)e^{4i\lambda^4t}, \quad t \in (-T,T),
  \end{equation}
which implies by virtue of (\ref{transformation-r}) that
  \begin{equation}
  \label{evolution-r}
    r^{(1)}(t,\lambda)=r^{(1)}(0,\lambda)e^{4i\lambda^4t}, \quad t \in (-T,T).
  \end{equation}
  Note that equation (\ref{time-r}) coincides with Eq. (5.2) in \cite{Pelinovsky-Shimabukuro}, where it was derived for $u$ in $Z_0$.
  The proof for $u$ in $Z_1$ is the same.

  For the reflection coefficient $\widetilde{r}$ of the potential $\widetilde{u}$ we know $r^{(1)}(0,\lambda)=\widetilde{r}(0,\lambda)$ since
  $u^{(1)}(0,\cdot) = \widetilde{u}(0,\cdot)$. By using the time evolution of the reflection coefficient
  from \cite{Pelinovsky-Shimabukuro} and the expression (\ref{evolution-r}), we obtain
  \begin{equation}
    \widetilde{r}(t,\lambda)=\widetilde{r}(0,\lambda) e^{4i\lambda^4t}=r^{(1)}(0,\lambda)e^{4i\lambda^4t}= r^{(1)}(t,\lambda), \quad t \in (-T,T).
  \end{equation}
  The assertion of the lemma is proved.
\end{proof}

\begin{cor}\label{corollary}
  The potential $u^{(1)}(t,\cdot) = \mathbf{B}_{\lambda_1}(\eta(t,\cdot)) u(t,\cdot)$ is a new solution of the DNLS equation for $t \in (-T,T)$.
\end{cor}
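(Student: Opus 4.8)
The plan is to obtain Corollary \ref{corollary} as an immediate consequence of Lemma \ref{same scattering data} together with the uniqueness of the reconstruction of a potential in $Z_0$ from its scattering data, which is part of the inverse scattering theory of \cite{Perry,Pelinovsky-Shimabukuro}. Recall that for potentials in $Z_0 \subset H^2(\mathbb{R}) \cap H^{1,1}(\mathbb{R})$ the scattering data reduce to the reflection coefficient $r$, and that \cite[Theorem 1.1]{Pelinovsky-Shimabukuro} recovers such a potential uniquely from $r$ through the associated Riemann--Hilbert problem for analytic functions; in particular, the direct scattering map $u \mapsto r$ is injective on $Z_0$.

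First I would check the standing hypotheses at each frozen time $t \in (-T,T)$. Since $u(t,\cdot) \in Z_1$ and $\lambda_1 \in \mathbb{C}_I$ is the simple zero of $a(t,\cdot)$, the argument in the proof of Lemma \ref{BT-lemma1} applied at each fixed $t$ (the KN spectral problem being time-frozen there) gives $u^{(1)}(t,\cdot) = \mathbf{B}_{\lambda_1}(\eta(t,\cdot)) u(t,\cdot) \in H^2(\mathbb{R}) \cap H^{1,1}(\mathbb{R})$, while Lemma \ref{nonzero-a} gives $u^{(1)}(t,\cdot) \in Z_0$. Next I would introduce $\widetilde{u}(t,\cdot) \in Z_0$, the solution of \eqref{dnls} with $\widetilde{u}(0,\cdot) = u^{(1)}(0,\cdot)$, which exists for every $t \in \mathbb{R}$ by \cite{Perry,Pelinovsky-Shimabukuro} and in particular on $(-T,T)$. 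By Lemma \ref{same scattering data}, $u^{(1)}(t,\cdot)$ and $\widetilde{u}(t,\cdot)$ carry the same reflection coefficient for every $t \in (-T,T)$; since both lie in $Z_0$ and the direct scattering map is injective there, we conclude $u^{(1)}(t,\cdot) = \widetilde{u}(t,\cdot)$ for every $t \in (-T,T)$. As $\widetilde{u}$ solves the DNLS equation by construction, so does $u^{(1)}$, which is the assertion of the corollary.

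The hard part is not this final deduction but the machinery it rests on: one must know that $u^{(1)}(t,\cdot)$ genuinely belongs to the space $Z_0$ on which \cite{Pelinovsky-Shimabukuro} furnishes a well-defined and injective scattering correspondence --- exactly the content of Lemmas \ref{BT-lemma1} and \ref{nonzero-a} --- and that the reflection coefficient of a potential in $Z_1$ evolves by the same elementary law \eqref{time-r} as in $Z_0$, which was settled in the computation preceding Lemma \ref{same scattering data}. Granting these, no further estimates are required and the corollary is immediate. Applying the same argument recursively along a chain $Z_N \to Z_{N-1} \to \cdots \to Z_0$ then extends the conclusion to an arbitrary finite number of solitons.
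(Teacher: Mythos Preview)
Your proof of the corollary itself is correct and follows essentially the same route as the paper: invoke Lemma \ref{same scattering data} and the bijectivity (hence injectivity) of the scattering map $r \mapsto u$ on $Z_0$ established in \cite{Perry,Pelinovsky-Shimabukuro} to identify $u^{(1)}(t,\cdot)$ with $\widetilde{u}(t,\cdot)$ for each $t \in (-T,T)$.

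One caveat about your closing remark on the multi-soliton case: applying the argument \emph{recursively} along $Z_N \to Z_{N-1} \to \cdots \to Z_0$ would require, at each intermediate step, the injectivity of the scattering map on $Z_{k}$ for $k \geq 1$, i.e.\ an inverse scattering theory in the presence of eigenvalues --- precisely what is not yet available and what the present paper is circumventing. The paper instead composes all $k$ B\"{a}cklund transformations to pass directly from $Z_k$ to $Z_0$ and applies the identification argument only once, at the $Z_0$ endpoint; it explicitly refrains from claiming that the intermediate potentials $u^{(1)},\ldots,u^{(k-1)}$ are DNLS solutions.
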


\begin{proof}
  In \cite{Perry,Pelinovsky-Shimabukuro}, existence and Lipschitz continuity of the mapping
  \begin{equation*}
  L^{2,1}(\mathbb{R} \cup i \mathbb{R}) \supset X \ni r\mapsto u \in Z_0 \subset H^2(\mathbb{R}) \cap H^{1,1}(\mathbb{R})
  \end{equation*}
  was established by means of the solvability
  of the associated Riemann--Hilbert problem (see \cite{Perry,Pelinovsky-Shimabukuro} for details on $X$). Therefore, the mapping is bijective and
  $\widetilde{u}(t,\cdot)=u^{(1)}(t,\cdot)$ for every $t \in (-T,T)$ follows from Lemma \ref{same scattering data}.
  Since $\widetilde{u}$ is a solution of the DNLS equation, so does $u^{(1)}$.
\end{proof}

The proof of Lemma \ref{same_solutions} in the case of finitely many solitons relies on
the iterative use of the argument above. For a given $u \in Z_k$, $k \in \mathbb{N}$,
we remove the distinct eigenvalues $\{\lambda_1,...\lambda_k\}$ in $\C_I$
by iterating the B\"{a}cklund transformation $k$ times. We set $u^{(0)} = u$ and
\begin{eqnarray*}
  u^{(l)} = \mathbf{B}_{\lambda_l}(\eta^{(l-1)})u^{(l-1)},\quad (1\leq l \leq k),
\end{eqnarray*}
such that eventually $u^{(k)}\in Z_0$. The arguments of Lemma \ref{same scattering data}
and Corollary \ref{corollary} apply to the last potential $u^{(k)}$. As a result,
we know that the $k$-fold iteration of the B\"{a}cklund transformation of a solution
$u(t,\cdot)\in Z_k$ of the Cauchy problem (\ref{dnls}) for $t \in [0,T)$
produces a new solution $u^{(k)}(t,\cdot) \in Z_0$ of the Cauchy problem (\ref{dnls}). Thus, the following diagram commutes.

\begin{equation*}
\begin{xy}\xymatrixcolsep{2pc}
  \xymatrix{
    u(0,\cdot)\in Z_k\ar@{->}[d]_{\text{DNLS}} \ar@{->}[r]& u^{(1)}(0,\cdot)\in Z_{k-1}\ar@{->}[r]&\cdots\ar@{->}[r]& u^{(k)}(0,\cdot)\in Z_0 \ar@{->}[d]^{\text{DNLS}}&\\
    u(t,\cdot)\in Z_k  \ar@{->}[r]&u^{(1)}(t,\cdot)\in Z_{k-1} \ar@{->}[r]& \cdots\ar@{->}[r]&u^{(k)}(t,\cdot)\in Z_0& \!\!\!\!\!\!\!
 }
\end{xy}
\end{equation*}

\begin{remark}
  We do not prove here that every step in the chain
  $
    u\to u^{(1)}\to\cdots\to u^{(k)}
  $
  yields a solution of the DNLS equation. Although this claim is likely to be true, the proof would require
  the inverse scattering theory of \cite{Perry,Pelinovsky-Shimabukuro}
  to be extended to the cases of eigenvalues.
\end{remark}

\section{An example of the B\"{a}cklund transformation}
Let us give an example of the explicit B\"{a}cklund transformation that connects the zero and
one-soliton solutions of the DNLS equation. In order to find the one-soliton solution in the explicit form,
we assume that we start with a potential $u_{\lambda_1,\gamma_1} \in Z_1$ with eigenvalue $\lambda_1\in\C_{I}$
and norming constant $\gamma_1\neq0$, for which the B\"{a}cklund transformation (\ref{backlund})
in Lemma \ref{BT-lemma1} yields exactly the zero solution:
\begin{equation}
\label{back-1}
u^{(1)} =  \mathbf{B}_{\lambda_1}(\eta)u_{\lambda_1,\gamma_1} = 0.
\end{equation}
For the zero solution, we know that the Jost functions of the linear system (\ref{laxeq1}) and (\ref{laxeq2}) are given by
$$
\varphi_{\pm}^{(1)}(t,x;\lambda) = e^{-i(\lambda^2x+2\lambda^4t)} e_1, \quad
\phi_{\pm}^{(1)}(t,x;\lambda) = e^{i(\lambda^2x+2\lambda^4t)}e_2.
$$
Hence, $a^{(1)}(\lambda) = 1$ and $b^{(1)}(\lambda) = 0$. Now we set
\begin{equation}
\label{back-2}
\eta^{(1)}(t,x) = \frac{1}{\gamma_1\overline{\lambda}_1} e^{-i(\lambda^2x+2\lambda^4t)} e_1
+\frac{1}{\overline{\lambda}_1} e^{i(\lambda^2x+2\lambda^4t)} e_2.
\end{equation}
By Lemma \ref{eta-by-new-jost}, the potential $u_{\lambda_1,\gamma_1}$, which we started with,
can be recovered by means of the inverse B\"{a}cklund transformation
\begin{equation}
\label{back-3}
u_{\lambda_1,\gamma_1} = \mathbf{B}_{\lambda_1}(\eta^{(1)})0.
\end{equation}
Explicit calculations with (\ref{back-2}) and (\ref{back-3}) yield the explicit expression
\begin{multline}\label{one-soliton}
    u_{\lambda_1,\gamma_1}(t,x) = 2i(\lambda_1^2-\overline{\lambda}_1^2)
\frac{\overline{\gamma}_1}{|\gamma_1|} \frac{e^{-2i(\lambda_1^2x+2\lambda_1^4t)}} {|e^{-i(\lambda_1^2x+2\lambda_1^4t)} |^2}\times\\
\frac{\overline{\lambda}_1|\gamma_1|^{-1}|e^{-i(\lambda_1^2x+2\lambda_1^4t)}|^2+ \lambda_1|\gamma_1||e^{i(\lambda_1^2x+2\lambda_1^4t)}|^2}{
(\lambda_1|\gamma_1|^{-1}| e^{-i(\lambda_1^2x+2\lambda_1^4t)}|^2+ \overline{\lambda}_1|\gamma_1| |e^{i(\lambda_1^2x+2\lambda_1^4t)}|^2)^2},
\end{multline}
which coincides with the one-soliton of the DNLS equation in the literature (see e.g. \cite{KN78}).

\begin{remark}
It is less straightforward to find the explicit expressions for the Jost functions of the linear system (\ref{laxeq1}) and (\ref{laxeq2})
with the one-soliton potential $u_{\lambda_1,\gamma_1}$ because the expressions (\ref{new-jost-1})--(\ref{new-jost-4})
can only be used in one way from $\{ \varphi_{\pm},\phi_{\pm}\}$ to $\{ \varphi_{\pm}^{(1)},\phi_{\pm}^{(1)}\}$,
which is hard to invert.
\end{remark}

\begin{remark}
For sake of completeness, we can rewrite the one-soliton solution (\ref{one-soliton}) in physical notations.
By defining
\begin{equation*}
        \omega=4|\lambda_1|^4,\quad
         v=-4{\rm Re}(\lambda_1^2)
\end{equation*}
and
\begin{equation*}
    x_0=\frac{2\ln(|\gamma_1|)}{\sqrt{4\omega-v^2}}, \quad \delta=\arg(\gamma_1)+\pi+3 \arctan\left(\frac{{\rm Im}(\lambda_1)} {{\rm Re}(\lambda_1)}\right)
\end{equation*}
with the obvious constraint $4 \omega - v^2 > 0$, $u_{\lambda_1,\gamma_1}$ is rewritten in the form used in \cite{CO06}:
\begin{equation}\label{one-soliton3}
    u_{\lambda_1,\gamma_1}(t,x)=\phi_{\omega,v}(x-vt-x_0) e^{-i\delta+i\omega t+i\frac{v}{2}(x-vt)- \frac{3}{4}i\int_{\infty}^{x-vt-x_0} |\phi_{\omega,v}(y)|^2dy},
\end{equation}
where
\begin{equation*}
    \phi_{\omega,v}(x)=\left[\frac{2 \sqrt{\omega} \cosh(\sqrt{4\omega-v^2}x)- v}{2(4\omega-v^2)} \right]^{-1/2}.
\end{equation*}
\end{remark}

By the computations in Lemma \ref{nonzero-a} and Remark \ref{new-b}, we obtain
\begin{equation}
\label{a-b-one-soliton}
a(\lambda) = \frac{\overline{\lambda}_1^2}{\lambda_1^2} \frac{\lambda^2-\lambda_1^2}{\lambda^2-\overline{\lambda}_1^2}, \quad
b(\lambda) = 0,
\end{equation}
for the one-soliton potential $u_{\lambda_1,\gamma_1}$. We also find
\begin{subequations}\label{L^2 norm}
\begin{eqnarray}
   \|u_{\lambda_1,\gamma_1} \|^2_{L^2} &=& 2 \sqrt{4 \omega - v^2} \int_{-\infty}^{\infty} \frac{dz}{2 \sqrt{\omega} \cosh(z) - v} \\
   &=&  8\arctan\left(\frac{2\sqrt{\omega}+v} {2\sqrt{\omega}-v}\right)
   \label{integral} \\
   &=& 8\arctan\left(\frac{{\rm Im}(\lambda_1)} {{\rm Re}(\lambda_1)}\right)\label{arg lambda 1}
   \\&=&8 \arg(\lambda_1),\label{arg lambda 2}
\end{eqnarray}
\end{subequations}
where we have used an explicit integral formula from \cite[Section 2.451]{Gradshteyn-Ryzhik} in order to obtain (\ref{integral}).
The equality between (\ref{arg lambda 1}) and (\ref{arg lambda 2}) holds because of $\lambda_1\in \C_{I}$.
The computation (\ref{L^2 norm}) confirms the asymptotic limit in Proposition \ref{prop-scattering}:
$$
a_{\infty} = \lim_{|\lambda|\rightarrow \infty}a(\lambda) = \frac{\overline{\lambda}_1^2}{\lambda_1^2} =
e^{-4 i \arg(\lambda_1)} = e^{\frac{1}{2i}\|u_{\lambda_1,\gamma_1} \|^2_{L^2}}.
$$

By using the representation (\ref{new-eig1}) in Lemma \ref{pull-back-eta}, the explicit formula (\ref{back-2}),
as well as the relation $d_{\lambda_1}(\eta,\eta) =[d_{\lambda_1}(\eta^{(1)},\eta^{(1)})]^{-1}$,
we can also find the function $\eta =(\eta_1,\eta_2)^t$ in the transformation (\ref{back-1}):
\begin{equation}
\label{eta-one-soliton-1}
\eta_1(t,x) = \frac{\overline{\lambda}_1 e^{-i(\overline{\lambda}_1^2 x + 2\overline{\lambda}_1^4 t )}}{
\lambda_1|e^{-i(\lambda_1^2x+2\lambda_1^4t)}|^2+\overline{\lambda}_1|e^{i(\lambda_1^2x+2\lambda_1^4t)}|^2}
\end{equation}
and
\begin{equation}
\label{eta-one-soliton-2}
\eta_2(t,x) = \frac{\overline{ \lambda}_1 e^{i(\overline{\lambda}_1^2x + 2 \overline{\lambda}_1^4 t ) } }{
\overline{\lambda}_1|e^{-i(\lambda_1^2x+2\lambda_1^4t)}|^2+\lambda_1|e^{i(\lambda_1^2x+2\lambda_1^4t)}|^2},
\end{equation}
where $\gamma_1 = 1$ is set for convenience. Since
\begin{equation*}
  d_{\lambda_1}(\eta,\eta)=\frac{|\lambda_1|^2}{\lambda_1|e^{-i(\lambda_1^2x+2\lambda_1^4t)}|^2+\overline{\lambda}_1|e^{i(\lambda_1^2x+2\lambda_1^4t)}|^2 }
\end{equation*}
satisfies the constraint
$$
-d_{\overline{\lambda}_1}(\eta,\eta)u_{\lambda_1,\gamma_1}+ 2i(\lambda_1^2-\overline{\lambda}_1^2) \eta_1\overline{\eta}_2=0,
$$
we confirm the transformation (\ref{back-1}) by using (\ref{one-soliton}), (\ref{eta-one-soliton-1}), and (\ref{eta-one-soliton-2}).

\section{Proof of Theorem \ref{theorem-main}}
Let $u_0 \in Z_1 \subset H^2(\mathbb{R}) \cap H^{1,1}(\mathbb{R})$
and $\lambda_1 \in \C_I$ be the only root of $a(\lambda)$ in $\C_I$.
By Lemma \ref{nonzero-a}, if $\eta(x) = \varphi_-(x;\lambda_1) e^{-i \lambda_1^2 x}$,
where $\varphi_-$ is the Jost function of the KN spectral problem (\ref{laxeq1})
associated with $u_0$, then $u_0^{(1)} = \mathbf{B}_{\lambda_1}(\eta)u_0$ belongs to $Z_0 \subset H^2(\mathbb{R}) \cap H^{1,1}(\mathbb{R})$.
Also, by Lemmas \ref{BT-lemma1}, \ref{pull-back-eta}, and Lemma \ref{eta-by-new-jost}, the mapping
is invertible with $u_0 = \mathbf{B}_{\lambda_1}(\eta^{(1)})u_0^{(1)}$, where $\eta^{(1)}$ is
expressed from the new Jost functions $\varphi^{(1)}_-$ and $\phi^{(1)}_+$
by the decomposition formula \eqref{eta-decomp}.

Let $T>0$ be a maximal existence time for the solution $u(t,\cdot) \in Z_1$, $t \in (-T,T)$ to the
Cauchy problem (\ref{dnls}) with the initial data $u_0 \in Z_1$ and eigenvalue $\lambda_1$.
For every fixed $t\in (-T,T)$, the solution $u(t,\cdot) \in Z_1$
admits the Jost functions $\{\varphi_{\pm}(t,x;\lambda),\phi_{\pm}(t,x;\lambda)\}$.
For every $t\in (-T,T)$, define $u^{(1)}$ by the B\"{a}cklund transformation
$$
u^{(1)} := \mathbf{B}_{\lambda_1}(\eta) u, \quad \eta(t,x) := \varphi_-(t,x;\lambda_1) e^{-i(\lambda_1^2x+2\lambda_1^4 t)},
$$
where we have used the boundary conditions (\ref{jost-infinity-original})
in the definition of $\varphi_-(t,x;\lambda_1)$ for every $t \in (-T,T)$.

By construction (see Corollary \ref{corollary}), $u^{(1)}(t,\cdot) \in Z_0$, $t \in (-T,T)$ is a solution of the Cauchy problem (\ref{dnls})
with the initial data $u^{(1)}_0 \in Z_0$. By existence and uniqueness theory \cite{Perry,Pelinovsky-Shimabukuro},
the solution $u^{(1)}(t,\cdot) \in Z_0$ is uniquely continued for every $t \in \mathbb{R}$.
Let $\{\varphi_{\pm}^{(1)}(t,x;\lambda),\phi_{\pm}^{(1)}(t,x;\lambda)\}$
be the Jost functions for $u^{(1)}(t,x)$. For every $t\in (-T,T)$, we have
$u = \mathbf{B}_{\lambda_1}(\eta^{(1)})u^{(1)}$ with
$$
\eta^{(1)}(t,x) = \frac{e^{-i(\lambda_1^2x+2\lambda_1^4 t)}}{\gamma_1\overline{\lambda}_1 a^{(1)}(\lambda_1)} \varphi_-^{(1)}(t,x;\lambda_1)+
\frac{e^{i(\lambda_1^2x+2\lambda_1^4 t)} }{\overline{\lambda}_1 a^{(1)}(\lambda_1)} \phi_+^{(1)}(t,x;\lambda_1),
$$
where $a^{(1)}(\lambda_1) \neq 0$ thanks to Lemma \ref{nonzero-a}.

On the other hand, since $u^{(1)}(t,\cdot)\in Z_0$ exists for every $t \in \mathbb{R}$, the associated Jost functions
$\{\varphi_{\pm}^{(1)}(t,x;\lambda), \phi_{\pm}^{(1)}(t,x;\lambda)\}$ exist for every
$t\in \mathbb{R}$ so that we can define
$$
\tilde{u} = \mathbf{B}_{\lambda_1}(\eta^{(1)})u^{(1)} \quad t \in \mathbb{R}.
$$
Since $u(t,\cdot)=\tilde{u}(t,\cdot) \in Z_1$ for every $t \in (-T,T)$ by uniqueness, the extended function $\tilde{u}$
is an unique extension of the solution $u$ to the same Cauchy problem (\ref{dnls})
that exists globally in time thanks to the bound (\ref{estimate-key}) proven in Lemma \ref{inverse BT-lemma1}.
Indeed, by \cite{Perry,Pelinovsky-Shimabukuro} we have $\|u^{(1)}(t,\cdot)\|_{H^2\cap H^{1,1}}\leq M_T$
for every $t\in (-T,T)$, where $T > 0$ is arbitrary and $M_T$ depends on $T$.
Next, by bound (\ref{estimate-key}) we have $\|u(t,\cdot)\|_{H^2\cap H^{1,1}}\leq C_{M_T}$ for every $t\in (-T,T)$.
Thus, the solution can not blow up in a finite time and hence
there exists a unique global solution $u(t,\cdot) \in Z_1$, $t \in \mathbb{R}$
to the Cauchy problem (\ref{dnls}) for every $u_0 \in Z_1 \subset
H^2(\mathbb{R}) \cap H^{1,1}(\mathbb{R})$.

By iterating the B\"{a}cklund transformation $k$ times and by the same argument as above,
we obtain the global existence of $u(t,\cdot)\in Z_k \subset
H^2(\mathbb{R}) \cap H^{1,1}(\mathbb{R})$, $t \in \mathbb{R}$
from the global existence of $u^{(k)}(t,\cdot)\in Z_0 \subset
H^2(\mathbb{R}) \cap H^{1,1}(\mathbb{R})$, $t \in \mathbb{R}$.
This completes the proof of Theorem \ref{theorem-main}.

\appendix

\section{Useful properties of $d_{\lambda}$, $S_{\lambda}$, and $G_{\lambda}$}
\label{algebra}

Recall the definition (\ref{def-d}) for the bilinear form $d_{\lambda}$ acting on $\C^2$ for a fixed $\lambda \in \C$.
One can easily verify the useful algebraic properties of $d_{\lambda}$ for every $\eta \in \C^2$
and $a,b \in \C$:
\begin{eqnarray}
& \phantom{t} &
d_{\lambda}(e_1,e_1) = \lambda, \quad d_{\lambda}(e_2,e_2) = \overline{\lambda}, \label{property2}\\
& \phantom{t} &
\overline{d_{\lambda}(\eta,\eta)} = d_{\overline{\lambda}}(\eta,\eta), \quad
d_{\lambda}(a \eta,b \eta) = a \overline{b} d_{\lambda}(\eta,\eta), \label{property4}\\
& \phantom{t} &
d_{\lambda}(\sigma_3\eta,\sigma_3\eta) = d_{\lambda}(\eta,\eta), \quad
d_{\lambda}(\sigma_1\eta,\sigma_1\eta) = d_{\overline{\lambda}}(\eta,\eta). \label{property6}
\end{eqnarray}
where $e_1=(1,0)^t$ and $e_2=(0,1)^t$ are basis vectors in $\C^2$, whereas
$\sigma_1$ and $\sigma_3$ are Pauli matrices given by (\ref{Pauli}).

Next, we recall the definition (\ref{def-G-S}) of the operators $G_{\lambda}$ and $S_{\lambda}$
acting on $\C^2$ for a fixed $\lambda \in \C$. From \eqref{property2} and \eqref{property4},
$G_{\lambda}$ and $S_{\lambda}$ satisfy for every $\eta \in \C^2$ and nonzero $a \in \C$:
\begin{equation} \label{Gproperty1}
G_{\lambda}( e_1)=\frac{\overline{\lambda}}{\lambda}, \quad G_{\lambda}( e_2)=\frac{\lambda}{\overline{\lambda}},\quad
\overline{G_{\lambda}(\eta)}=G_{\overline{\lambda}}(\eta), \quad G_{\lambda}(a\eta)=G_{\lambda}(\eta)
\end{equation}
and
\begin{equation} \label{Sproperty1}
S_{\lambda}(e_1)=S(e_2)=0,\quad S_{\lambda}(a \eta)=S_{\lambda}(\eta).
\end{equation}
From \eqref{property6}, we also have
\begin{equation}\label{Gproperty2}
G_{\lambda}(\sigma_3\eta)=G_{\lambda}(\eta),\quad G_{\lambda}(\sigma_1\eta)=G_{\overline{\lambda}}(\eta)
\end{equation}
and
\begin{equation}\label{Sproperty2}
-S_{\lambda}(\sigma_3\eta)=S_{\lambda}(\eta),\quad S_{\lambda}(\sigma_1\eta)=\overline{S_{\lambda}(\eta)}.
\end{equation}

By using \eqref{Gproperty1}, one can verify the following properties for every $\lambda \neq \pm \lambda_1$:
\begin{equation}
\label{G-comp}
\frac{\lambda^2\frac{\lambda_1}{\overline{\lambda}_1}G_{\lambda_1}(e_1)-\lambda_1^2}{\lambda^2-\lambda_1^2}=1, \quad
\frac{\lambda^2\frac{\lambda_1}{\overline{\lambda}_1}G_{\lambda_1}(e_2)-\lambda_1^2}{\lambda^2-\lambda_1^2}=
\frac{\lambda_1^2}{\overline{\lambda}_1^2}\frac{\lambda^2-\overline{\lambda}_1^2}{\lambda^2-\lambda_1^2}
\end{equation}
and
\begin{equation}
\label{G-comp-again}
\frac{\lambda^2\frac{\lambda_1}{\overline{\lambda}_1}G_{\overline{\lambda}_1}(e_2)-\lambda_1^2}{\lambda^2-\lambda_1^2}=1, \quad
\frac{\lambda^2\frac{\lambda_1}{\overline{\lambda}_1}G_{\overline{\lambda}_1}(e_1)-\lambda_1^2}{\lambda^2-\lambda_1^2}=
\frac{\lambda_1^2}{\overline{\lambda}_1^2}\frac{\lambda^2-\overline{\lambda}_1^2}{\lambda^2-\lambda_1^2}.
\end{equation}

\section{On regularity of Jost functions}
\label{appendix-Jost}

Recall that if $u \in H^{1,1}(\mathbb{R})$, then $u \in L^1(\mathbb{R}) \cap L^{\infty}(\mathbb{R})$ and
$\partial_x u \in L^1(\mathbb{R})$, so that the assumptions of Propositions \ref{KN-existence} and \ref{prop-Jost-KN}
are satisfied. In what follows, we establish more regularity results for Jost functions compared
to what was established previously in \cite{Pelinovsky-Shimabukuro}.
\\ \phantom{a}\\
{\bf Lemma B. }{\em
For every $u\in H^2(\mathbb{R}) \cap H^{1,1}(\mathbb{R})$ satisfying
$\|u\|_{H^{2}(\mathbb{R})\cap H^{1,1}(\mathbb{R})}\leq M$ for some $M > 0$,
let $\varphi_{\pm}(x;\lambda) e^{-i\lambda^2x}$ and
$\phi_{\pm}(x;\lambda) e^{+i\lambda^2x}$ be Jost functions of the KN spectral problem \eqref{laxeq1} given in
Propositions \ref{KN-existence} and \ref{prop-Jost-KN}. Fix $\lambda_1\in \C$ satisfying $\mbox{Im}(\lambda_1^2)>0$ and denote $\varphi_- := \varphi_-(\cdot;\lambda_1) = (\varphi_{-,1},\varphi_{-,2})^t$ and
$\phi_+ := \phi_+(\cdot;\lambda_1) = (\phi_{+,1},\phi_{+,2})^t$. Then,
\begin{equation} \label{norm1}
\|\langle x \rangle \varphi_{-,2}\|_{L^2(\R)}+
\|\langle x \rangle \partial_x\varphi_-\|_{L^2(\R)}+
\|\langle x \rangle \partial_x^2\varphi_-\|_{L^2(\R)}+
\| \partial_x^3\varphi_-\|_{L^2(\R)}\leq C_M,
\end{equation}
and
\begin{equation} \label{norm2}
\|\langle x \rangle \phi_{+,1}\|_{L^2(\R)}+
\|\langle x \rangle \partial_x\phi_+\|_{L^2(\R)}+
\|\langle x \rangle \partial_x^2\phi_+\|_{L^2(\R)}+
\| \partial_x^3\phi_+\|_{L^2(\R)}\leq C_M,
\end{equation}
where the constant $C_M$ does not depend on $u$.}

\begin{proof}
We will prove the statement for $\varphi_-$ since the proof for $\phi_+$ is similar.
From Proposition \ref{KN-existence}, we know that $\varphi_-$ belongs to $L^{\infty}(\mathbb{R})$.
Let us first show that the second component $\varphi_{-,2}$ is square integrable.
Compared with Lemma 1 in \cite{Pelinovsky-Shimabukuro},
where the existence of Jost functions is proved uniformly in $\lambda$, the assertion of this
proposition is easier to prove for just one $\lambda = \lambda_1$.
We can work with the integral equation for $\varphi_-$:
$$
\varphi_-=e_1+K\varphi_-,
$$
where the operator $K$ is given as
$$
K\varphi_-= \lambda_1 \int_{-\infty}^x  \left[\begin{matrix} 1 & 0 \\ 0 & e^{2i \lambda_1^2 (x-y)}
\end{matrix}\right] Q(u(y)) \varphi_{-}(y;\lambda) dy.
$$
This integral operator can be bounded as follows,
$$
\left[\begin{matrix} \|(K \varphi_-)_1\|_{L^{\infty}(-\infty,x_0)} \\ \|(K \varphi_-)_2\|_{L^{2}(-\infty,x_0)} \end{matrix}\right] \leq |\lambda_1|\|u\|_{L^2(-\infty,x_0)} \left[\begin{matrix} 0 & 1 \\ \frac{1}{2\text{Im}(\lambda^2_1)} & 0\end{matrix}\right] \left[\begin{matrix} \|\varphi_{-,1}\|_{L^{\infty}(-\infty,x_0)} \\ \|\varphi_{-,2}\|_{L^{2}(-\infty,x_0)} \end{matrix}\right].
$$
Thus, we deduce that for every fixed $\lambda_1$ satisfying $\mbox{Im}(\lambda_1^2)>0$,
there exists $x_0\in \mathbb{R}$ such that $K$ is a contraction on $L^{\infty}(-\infty,x_0)\times L^2(-\infty,x_0).$
Since $u\in L^2(\mathbb{R})$, we can divide $\mathbb{R}$ into finitely many subintervals such that
$K$ is a contraction as shown above within each subinterval. By patching solutions together,
we obtain that $\varphi_{-,2}$ belongs to $L^2(\mathbb{R})$ and satisfies
\begin{equation}\label{L^2 bound for varphi}
  \|\varphi_{-,2}\|_{L^2(\R)} \leq C_M \|u\|_{L^2(\R)}
\end{equation}
where $C_M$ does not depend on $u$.

Next, it follows directly from the Kaup--Newell system \eqref{laxeq1} that
$$
\partial_x\varphi_{-,1} = \lambda_1 u \varphi_{-,2} \Longrightarrow \partial_x\varphi_{-,1} \in L^2(\mathbb{R})
$$
and
$$
\partial_x\varphi_{-,2} = -\lambda_1 \overline{u}\varphi_{-,1} + 2i \lambda_1^2\varphi_{-,2} \Longrightarrow
\partial_x\varphi_{-,2} \in L^2(\mathbb{R}).
$$
Differentiating (\ref{laxeq1}) once and twice, we also obtain
$\partial_x^2 \varphi_-, \partial_x^3 \varphi_- \in L^2(\mathbb{R})$.

In order to show $x\varphi_{-,2} \in L^2(\mathbb{R})$, we write
$$
\partial_x(x\varphi_{-,2})= \varphi_{-,2} + x\partial_x\varphi_{-,2}
$$
and use the second component of the Kaup--Newell system \eqref{laxeq1} to get
$$
\partial_x(x\varphi_{-,2}) = 2i\lambda_1^2 x \varphi_{-,2} + \varphi_{-,2} - \lambda x\overline{u}\varphi_{-,2}
$$
which yields the integral equation
$$
x\varphi_{-,2}(x) = \int_{-\infty}^x e^{2i\lambda_1^2(x-y)} \varphi_{-,2}(y)dy  -
\lambda_1 \int_{-\infty}^x e^{2i\lambda_1^2(x-y)} y \overline{u}(y) \varphi_{-,2}(y) dy.
$$
Since the right hand side is bounded in $L^{2}(\mathbb{R})$, we have $x\varphi_{-,2}$
in $L^2(\mathbb{R})$. Then, it follows from system (\ref{laxeq1}) and its derivative
that $x \partial_x \varphi_-, x \partial_x^2 \varphi_- \in L^{2}(\mathbb{R})$.
Combining all estimates together, we obtain bounds (\ref{norm1}) for $\varphi_-$.
\end{proof}

\end{document}